\theoremstyle{plain}% default
\newtheorem{theorem}{Theorem}[section]
\newtheorem{lemma}[theorem]{Lemma}
\newtheorem{prop}[theorem]{Proposition}
\theoremstyle{definition}
\newtheorem{definition}{Definition}[section]
\newtheorem{example}{Example}[section]
\newtheorem{numexmp}{Real Data Example}[section]
\newtheorem{numexmp1}{Numerical Example}[section]
\theoremstyle{remark}
\newtheorem{remark}{Remark}
\title{Estimating MCMC convergence rates using common random number simulations}
\author{Sabrina Sixta$^1$\thanks{email: sabrina.sixta@mail.utoronto.ca}, Jeffrey S. Rosenthal$^1$\thanks{email: jeff@math.toronto.edu}, and Austin Brown$^2$\thanks{email: austinbrown@tamu.edu}\\
$^1$ Department of Statistical Sciences, University of Toronto, Toronto, ON, 
Canada\\
$^2$ Department of Statistics, Texas A\&M University, College Station, TX, USA}
\def\L{\mathcal{L}}
\newcommand{\indep}{\perp \!\!\! \perp}
\begin{document}
	\maketitle
	\doublespacing
	
\begin{abstract}
This paper presents how to use common random number (CRN) simulation to evaluate Markov chain Monte Carlo (MCMC) convergence to stationarity. We provide an upper bound on the Wasserstein distance of a Markov chain to its stationary distribution after $N$ steps in terms of averages over CRN simulations. We apply our bound to Gibbs samplers on a model related to James-Stein estimators, a variance component model, and a Bayesian linear regression model. For the first two examples, we show that the CRN simulated bound converges to zero significantly more quickly compared to available drift and minorization bounds. 
\end{abstract}

\paragraph{Keywords:} common random number, synchronous coupling, Markov chains, MCMC, convergence rate, Gibbs sampler, drift and minorization.
%\paragraph{Keywords:} common random number, synchronous coupling, MCMC, convergence rate, Gibbs sampler.
%\tableofcontents

	\section{Introduction}
 
Markov chain Monte Carlo (MCMC) algorithms
are often used to simulate from a stationary distribution
of interest (see e.g.~\cite{intromcmc}).
One of the primary questions when using these Markov chains is,
after how many
iterations is the distribution of the Markov chain sufficiently close to the
stationary distribution of interest,
i.e.\ when should actual sampling begin \cite{honestexp}.
The number of
iterations it takes for the distribution of the Markov chain to be
sufficiently close to stationarity is called the burn-in period.
Various informal methods are available for
estimating the burn-in period,
such as effective sample size estimation,
the Gelman-Rubin diagnostic, and visual checks
using traceplots or autocorrelation graphs
\cite{glm, mode, bayemethfin, convdiag}. While these methods ``can detect problems with an MCMC simulation, [...] they cannot prove that
the simulation is generating a representative sample." \cite{qin2024convergenceboundsmontecarlo}

%Convergence
%diagnostics can detect problems with an MCMC simulation, but they cannot prove that
%the simulation is generating a representative sample. While convergence analysis is often
%mathematically challenging, it offers robust theoretical guarantees that diagnostics cannot
%provide.

Convergence analysis studies the
distance of a Markov chain to stationarity and is traditionally measured in terms of total
variation distance (e.g.~\cite{probsurv, tierney1994markov}),
though more recently
the Wasserstein distance has been considered
\cite{gibbswass, waslinearmixedmodels, wasandoneshot, wasmeth}.
However, finding upper bounds on
either distance can be quite difficult to establish \cite{intromcmc, honestexp, qin2024convergenceboundsmontecarlo}, and if an upper bound is known, it is usually based on
complicated problem-specific
calculations \cite{waslinearmixedmodels, geomconvrates, sixta, localcont}. 
This motivates the desire to instead estimate convergence bounds
from actual simulations of the Markov chain, which we consider here via common random number (CRN) simulation.

%CRN in a specific random function representation of the chain

%One common method for generating upper bounds on the
%Wasserstein distance is through a contraction condition (\cite{sixta,avcont}).  

%This can often be established using common random number (CRN) simulation, i.e.\ given a specific random function representation of the chain, we initiate two copies of the chain with different values, but use the same sequence of random variables to simulate the rest of the iterations of the chain. 

CRN simulation occurs when we initiate two copies of the chain with different values, but use the same sequence of random variables to simulate a specific random function representation of the chain (See Section \ref{sec:background} for a formal definition). This simulation method is also sometimes referred to as synchronous coupling \cite{sdesynch,browniansynch, jacob} and falls under the general framework of ``auxiliary
simulation" \cite{simconv}, i.e. using extra preliminary Markov chain runs to
estimate the convergence time needed in the final run.
Estimating Markov chain convergence rates using CRN simulation was first
proposed in \cite{crnconv} to find estimates of mixing times in total
variation distance. Further, one of the first Wasserstein bounds generated on a Markov chain model used CRN in its construction \cite{gibbswass}.
More recently, \cite{biswas} showed how CRN
simulation could be used for estimating an upper bound on the Wasserstein
distance (their Proposition~3.1), and provided useful
applications of the CRN method to high-dimensional and tall data (their
Section~4). CRN simulation to estimate convergence rates has been used on a wide range of applications including Langevin algorithms \cite{langevin}, stochastic gradient descent \cite{sgd}, stochastic differential equations (SDE) \cite{sdesynch}, and Hamiltonian Monte Carlo \cite{hmc, unbiasedhmc}.

The use of CRN to generate tight bounds on the Wasserstein and total variation distance is promising. In this paper we show that for the Gibbs sampler for a model related to James-Stein estimators and the variance component model, CRN bounds perform significantly better than drift and minorization (DnM) bounds (see figures \ref{fig:crnvsdnmstein} and \ref{fig:vcmcrnvsdnm} and commentary in Numerical Example \ref{numex:varcomp2} for more details). Under certain conditions, \cite{sdesynch} and \cite{crn} show that CRN is the optimal coupling of the Wasserstein distance between simulated solutions of SDEs and two random variables in the $\mathcal{L}^2$ metric, respectively. \cite{independentmetropolishastings} shows that CRN simulation on the independent Metropolis-Hastings algorithm is the maximal coupling of the total variation distance. \cite{browniansynch} shows that the distance between two reflected Brownian motion processes simulated using CRN converge to 0 almost surely.

%Simulation using the CRN technique is useful since for random variables under certain
%conditions it is an unbiased estimate of the squared Wasserstein distance (see
%equation \ref{eqn:infdist}) and for Markov chains under certain conditions it is a conditionally unbiased estimate of the squared Wasserstein distance (see
%equation \ref{eqn:condunbiased}). 
%It was shown in \cite{crn} that simulated
%Euclidean distance between two random variables generated using the CRN technique is an unbiased
%estimate of the squared Wasserstein distance when the random transformation is an
%increasing function of the uniform random variable (see propositions
%\ref{prop1} and \ref{prop2} below).

%In this paper, in Theorem~\ref{thm:crngen}, we generalize the
%result of \cite{crn} and conclude that the CRN technique generates
%an unbiased estimate of the squared $L^2-$Wasserstein distance whenever the intervals over
%which the transformation of a random variable in $\mathbb{R}$ are
%increasing and decreasing are the same.
%This theorem should help establish whether the CRN technique
%is optimal for simulating the $L^2-$Wasserstein distance, or if another
%simulation technique such as
%\cite{crnalt3, simconv, crnalt1, crnmh, Wang2020MaximalCO,crnalt2} is merited. Within the context of Markov chains, unbiasedness is only proven over a single iteration. We show how it is more difficult to extend over multiple iterations in Section \ref{sec:condunbiased}.

In Theorem~\ref{thm:summary}, using CRN simulation, we provide an
estimated upper bound between the Wasserstein distance
of a Markov chain and the corresponding stationary distribution when
only the unnormalized density of the stationary distribution is known. Furthermore the $95\%$ confidence intervals of our estimates in the real data examples are fairly narrow. See Lemma \ref{lem:ci} and its implementation in the figures \ref{fig:simdiffstein}, \ref{fig:vcmcrn2}, \ref{fig:vcmcrnvsdnm} and \ref{fig:simdiff} and tables \ref{table:civarcompgalinjones} and \ref{table:civarcomprosecowles} for more details. 

This paper is organized as follows.
In Section~\ref{sec:background}, we present definitions
and notation.
%We also discuss the relationship between the closely related
%notions of coupling from the past,
%one-shot coupling, and the CRN technique.
%In Section~\ref{sec:2}, we present a set of random
%functions (of real-valued random variables) that will generate
%unbiased estimates of the squared $L^2-$Wasserstein distance when the CRN technique is used.
In Section~\ref{sec:sim}, we establish convergence bounds of a
Markov chain to its corresponding stationary distribution using CRN
simulation when the initial distribution is not in stationarity.
In Section \ref{sec:stein}, we apply Theorem~\ref{thm:summary} to a Gibbs sampler for a model related to the James-Stein estimator and compare the CRN bound to a DnM and one-shot coupling bound in Real Data Example \ref{numex:stein}.
In Section \ref{sec:varcomp} we apply Theorem~\ref{thm:summary} to a Gibbs sampler for a variance component model and compare the CRN bound to a DnM bound in Numerical Example \ref{numex:varcomp2} and another DnM bound in Real Data Example \ref{numex:varcomp3}. In Section \ref{sec-example} we apply Theorem~\ref{thm:summary} to a Gibbs sampler for a Bayesian linear regression model.

For each numerical example, we also provide the Gelman-Rubin diagnostic and traceplots for comparison. While the traceplots all started using the same initial distributions as our numerical examples, the Gelman-Rubin diagnostics are calculated from an initial distribution that is overdispersed with respect to the stationary distribution, so may not be perfectly comparable. Finally, while the Gelman-Rubin statistic and traceplots aid in determining convergence, they do not prove that convergence has occurred. Our method, on the other hand, provides an unbiased estimate of an upper bound on the Wasserstein distance between the distribution of our Markov chain and the stationary distribution. %That is, ``it offers robust theoretical guarantees that diagnostics [like the Gelman-Rubin statistic and traceplots] cannot
%provide." \cite{qin2024convergenceboundsmontecarlo}
 
The code used to generate all of the tables and calculations can be found at \\
\href{https://github.com/sixter/CommonRandomNumber/}{github.com/sixter/CommonRandomNumber/}.
	
	\section{Background and Notation}\label{sec:background}
	
	%Let $X : \mathcal{X} \to \mathbb{R}$ a random variable defined on a 
	\paragraph{Metrics on distributions} Let $(\mathcal{X}, d)$ be a Polish metric space and $\mathcal{F}$ the associated Borel $\sigma$-algebra. The $L^p-$Wasserstein distance between distributions $\pi$ and $\nu$ on $(\mathcal{X},\mathcal{F})$ is defined as 
%	$$W_p(\pi,\mu)=\inf_{(X,Y)\sim J(X,Y)}E[d_p(X,Y)^p]^{1/p}$$
	$W_{d,p}(\pi,\nu)=\inf_{X\sim \pi, Y\sim \nu}E[d(X,Y)^p]^{1/p}$.
	The Wasserstein space $P_p(\mathcal{X})$ is the set of Borel probability measures on the metric space $(\mathcal{X},d)$ such that for any $\mu\in P_p(\mathcal{X})$, $\int_{\mathcal{X}}d(x_0,x)^p
	\mu (\mathrm{d}x) <\infty$ for some $x_0\in \mathcal{X}$.
	%, the law of the random variable $X$ as $\mathcal{L}(X)$
	%The probabilistic definition of the $L^1$-Wasserstein distance between the two probability measures $\pi$ and $\nu$ is the infimum of the expected distance between $X$ and $Y$ over all joint pairs of random variables $(X, Y)$ with marginals $\mathcal{L}(X)=\pi$ and $\mathcal{L}(Y)=\nu$,  $W_d(\pi,\nu)=\inf_{(X, Y) \sim J(X,Y)}E[d(X,Y)]$ where $J(X,Y)$ is the set of all joint distributions, or couplings, where the marginal laws satisfy $\mathcal{L}(X)=\pi$ and $\mathcal{L}(Y)=\nu$. The Wasserstein distance is finite if for some $x_0 \in \mathcal{X}$, $E[d(X,x_0)] < \infty$ and $E[d(Y,x_0)] < \infty$.
	%When the infimum is attained at an optimal coupling, the Wasserstein distance is the random variable pair $(X_M,Y_M)$ that minimizes the expected distance, $W_d(\pi,\nu)= E[d(X_M,Y_M)]$ such that $\mathcal{L}(X_M)= \pi, \mathcal{L}(Y_M)= \nu$. 
Total variation distance is defined as $||\pi-\nu||_{TV}=\inf_{X\sim \pi, Y\sim \nu}P(X\neq Y)=\sup_{A\in \mathcal{F}}|\pi(A)-\nu(A)|$. The total variation distance can be written as a special case of the Wasserstein distance, $||\pi-\nu||_{TV}=W_{d_I,1}(\pi,\nu)$ where $d_I(X,Y)=I_{X\neq Y}$. If $\mathcal{X}\subset \mathbb{R}^q$, $q\in \mathbb{N}$, define the norm over a vector as $||x||_p = (\sum_{i=1}^q |x_i|^p)^{1/p}$ where $p\in [1,\infty)$. Define the p-norm over a function $\phi:\mathcal{X}\to \mathbb{R}$ as $\left\lVert\phi\right\rVert_{p}= \left(\int_{\mathcal{X}} \phi(x)^{p} \mathrm{d}x \right)^{\frac{1}{p}}$. If $X\indep Y$, then $X$ and $Y$ are independent. If $\pi\ll \nu$, then $\pi$ is absolutely continuous with respect to $\nu$.

\paragraph{CRN and Markov chains} Define a Markov chain $\{X_n\}_{n\geq 1}$ in $\mathcal{X}$ such that $X_n = k_{\theta_n}(X_{n-1})$, where  $\{\theta_n\}_{n\geq 1}$ are i.i.d. random variables on some measurable space $\Theta$ and random measurable mappings $k_{\theta_n} : \mathcal{X} \mapsto \mathcal{X}$. The set of random functions $k_{\theta_1}, k_{\theta_2},\ldots$ is called an iterated function system. Any time-homogeneous Markov chain can be represented as an iterated function system \cite{avcont}. We can also write $X_n=k_{\theta_n}(k_{\theta_{n-1}}(k_{\theta_{n-2}}\ldots k_{\theta_1}(x_0)))=k_{\Theta_n}(x_0)$ where  $\Theta_n=(\theta_1,\ldots, \theta_n)$ and $\theta_i\indep \theta_j$, $i\neq j$.  CRN simulation in this context occurs when we simulate two Markov chains $\{X_n\}_{n\geq 1}$ and $\{Y_n\}_{n\geq 1}$ with different initial values $x_0\neq y_0$, but with the same transition functions. That is, $\theta_n$ are the same in $X_n = k_{\theta_n}(X_{n-1})$ and $Y_n = k_{\theta_n}(Y_{n-1})$, $n\geq 1$. In different notation, the $\Theta_n$ is the same in $X_n=k_{\Theta_n}(x_0)$ and $Y_n=k_{\Theta_n}(y_0)$. See Algorithm \ref{alg:expdiff} for more details. The stationary distribution of $X_n$ is defined as $\mathcal{L}(X_{\infty})=\pi$.

\paragraph{Distribution functions}

$IG(\alpha, \beta)$ represents the inverse gamma distribution with  density function $f_{IG}(x \mid \alpha,\beta)= \frac{\beta^{\alpha}}{\Gamma(\alpha)}x^{-\alpha-1}e^{-\beta/x}$. $N(\mu, \sigma^2)$ represents the normal distribution with density function $f_{N}(x \mid \mu,
\sigma^2)= \frac{1}{\sqrt{2\pi\sigma^2}}e^{-\frac{(x-\mu)^2}{2\sigma^2}}$. For a distribution $\nu$, the density function is denoted as $f_{\nu}$.

	\section{Estimating Markov chain convergence rates using CRN simulation}\label{sec:sim}
 
We propose bounding the $L^p-$Wasserstein distance between the $n$th iteration of a Markov chain $\{X_n\}_{n\geq 1}$ and the corresponding stationary distribution $\pi$ through CRN simulation. Our main result is Theorem \ref{thm:summary}. Using CRN simulation as a convergence diagnostic tool was first discussed in \cite{crnconv} for bounding total variation distance.

We first provide a method for bounding the Wasserstein distance between $\mathcal{L}(X_n)$ and the corresponding stationary distribution $\pi$,  $W_{d,p}(\mathcal{L}(X_n),\pi)$ by proposing an auxiliary random variable $Y_n$. That is $W_{d,p}(\mathcal{L}(X_n),\pi)$ is bounded above by the product of the `distance' between $\pi$ and $\mathcal{L}(Y_n)$ (K) and $E[d(X_n,Y_n)]$. %, by proposing an auxiliary random variable $Y_n$ and  based on an auxiliary random variable $E[d(X_n,Y_n)^p]^{1/p}$.%, that is a product of the expected distance between two random variables, $E[d(X_n,Y_n)^p]^{1/p}$.\textbf{this sentence is a bit confusing}

%\begin{theorem}\label{thm:1}
%	Let $\{X_n\}_{n\geq 1}$ and $\{Y_n\}_{n\geq 1}$ be two copies of a Markov chain with stationary distribution $\pi$ and initial distributions $\mathcal{L}(Y_0)= \nu$, $\mathcal{L}(X_0)= \mu_x$ where $X_0\indep Y_0$. Assume $\pi<<\nu$.
%	Then for $r\geq p\geq 1$, the following holds for $s_1,s_2\geq 1$, $\frac{1}{s_1}+\frac{1}{s_2}=1$ and  $K=\left\lVert\frac{f_{\pi}}{f_{\nu}}\right\rVert_{s_2}$.
%	
%	\begin{align}\label{eq:thm1}
%		W_{d,p}(\mathcal{L}(X_n),\pi)\leq	K^{1/r} E\left[d(X_n,Y_n)^{rs_1} f_{\nu}(Y_0)^{s_1-1}f_x(X_0)^{s_1-1} \right]^{\frac{1}{r s_1}}
%	\end{align}
%	See Section \ref{pf:thm1original} for a proof.
%\end{theorem}

\begin{theorem}\label{thm:1}
	Let $\{X_n\}_{n\geq 1}$ and $\{Y_n\}_{n\geq 1}$ be two copies of a Markov chain with stationary distribution $\pi$ and initial distributions $\mathcal{L}(Y_0)= \nu$, $\mathcal{L}(X_0)= \mu$ where $X_0\indep Y_0$. Assume $\pi<<\nu$.
	Then for $r\geq p\geq 1$ and $s> 1$, the following holds where   $K=E\left[\left( \frac{f_{\pi}(Y_0)}{f_{\nu}(Y_0)}\right)^{\frac{s}{s-1}}\right]^{\frac{s-1}{s}}$
	\begin{align}\label{eq:thm1}
		W_{d,p}(\mathcal{L}(X_n),\pi)\leq	K^{1/r} E\left[d(X_n,Y_n)^{rs}\right]^{\frac{1}{r s}}
	\end{align}
	See Section \ref{pf:thm1original} for a proof.
\end{theorem}

%\begin{theorem}\label{thm:2}
%	Suppose the conditions of Theorem \ref{thm:1} hold except that $K=E\left[\left( \frac{f_{\pi}(Y_0)}{f_{\nu}(Y_0)}\right)^{s_2}\right]^{1/s_2}$, then the Wasserstein distance between $X_n$ and the corresponding stationary distribution is bounded as follows.
%	\begin{align}\label{eq:thm2}
%		W_{d,p}(\mathcal{L}(X_n),\pi)\leq	K^{1/r} E\left[d(X_n,Y_n)^{rs_1} \right]^{\frac{1}{r s_1}}
%	\end{align}
%	See Section \ref{pf:thm2} for a proof.
%\end{theorem}

\begin{remark}\label{rem:inftynorm}
	If we take $K=\lim_{s\downarrow 1}E\left[\left( \frac{f_{\pi}(Y_0)}{f_{\nu}(Y_0)}\right)^{\frac{s}{s-1}}\right]^{\frac{s-1}{s}}$, then from Theorem \ref{thm:1} we get that
	$K= ess \sup_{x\in\mathcal{X}} f_{\pi}(x)/f_{\nu}(x)$ and
	\begin{align*}\label{eq:reminftynorm}
		W_{d,p}(\mathcal{L}(X_n),\pi)\leq	K^{1/r} E\left[d(X_n,Y_n)^{r} \right]^{1/r}
	\end{align*}
	The case of $s\downarrow 1$ can be proven using the rejection sampler or the separation distance (see the proof of this special case in Section \ref{pf:thm1}).  The use of rejection sampling to generate an upper bound on the expected distance between a proposal density function $\nu$ and a stationary density function $\pi$ was motivated by \cite{crnconv}, which used rejection sampling to generate similar upper bounds in total variation distance. This case is used in Lemmas \ref{lem:steinKbound} and \ref{lem:regKbound}
\end{remark}

\begin{remark}\label{rem:tv}
	Applying $d_I(X,Y)=I_{X\neq Y}$ to Remark \ref{rem:inftynorm}, and setting $p,r=1$, the total variation distance between a Markov chain and the corresponding stationary distribution is bounded above as follows,
%	\begin{equation}\label{eqn:tvbound}
%		||\mathcal{L}(X_n)-\pi||_{TV}=\inf E[I_{X_n\neq X_{\infty}}] =W_{d_I, 1}(\mathcal{L}(X_n),\pi)\leq KE[I_{X_n\neq Y_n}]=KP(X_n\neq Y_n)
%	\end{equation}
	\begin{equation}\label{eqn:tvbound}
		||\mathcal{L}(X_n)-\pi||_{TV}\leq KP(X_n\neq Y_n)
	\end{equation}
	And so, for stopping time $\tau = \min\{n:d(X_n,Y_n)=0\}$, $||\mathcal{L}(X_n)-\pi||_{TV}\leq KP(\tau > n)$. 
\end{remark}

In many cases, the normalizing constant for $f_{\pi}$ is unknown. That is, we only know of a function $g(x)$ such that $\frac{1}{L} g(x)=f_{\pi}(x)$ where $L$ is unknown. For cases like this, we note that for all $x\in\mathcal{X}$ and any $B\subset \mathcal{X}$,

\begin{equation}\label{eqn:kupper}
	f_{\pi}(x) = \frac{g(x)}{\int_{\mathcal{X}}g(x)\mathrm{d}x}\leq \frac{g(x)}{\int_{B}g(x)\mathrm{d}x}
\end{equation}

This upper bound on $f_{\pi}$ is used to upper bound $K$ in all of the examples.

Next we define Algorithm \ref{alg:expdiff}, which generates an estimate of $E[d(X_n,Y_n)^{rs}]$ using CRN simulation.

\begin{algorithm}
	\caption{An estimate using CRN of $E[d(X_N,Y_N)^{rs}]\approx \frac{1}{M}\sum_{i=1}^M d(x_{N,i},y_{N,i})^{rs}$}\label{alg:expdiff}
	\begin{algorithmic}
		\For{$i = 1, \ldots, M$}
		\State $x_{0,i} \sim \mu$, $y_{0,i} \sim \nu$ where $x_{0,i} \indep y_{0,i}$
		\For{$n =1, \ldots,N$}
		\State $\theta_{n} \sim \mathcal{L}(\Theta)$
		\State $x_{n,i} \gets k_{\theta_{n}}(x_{n-1,i})$
		\State $y_{n,i} \gets k_{\theta_{n}}(y_{n-1,i})$
		%\State $n \gets n + 1$
		\EndFor
		%\State $i \gets i +1$
		\EndFor
		\State \textbf{return} $\frac{1}{M}\sum_{i=1}^M d(x_{N,i}-y_{N,i})^{rs}$
	\end{algorithmic}
\end{algorithm}

Combining Algorithm \ref{alg:expdiff} with Theorem \ref{thm:1}, we can simulate an upper bound on the Wasserstein distance between the law of a Markov chain at iteration $N$, $\mathcal{L}(X_N)$, and the corresponding stationary distribution, $\pi$, as follows,

    \begin{theorem}\label{thm:summary}
			Let $\{X_n\}_{n\geq 1}$ and $\{Y_n\}_{n\geq 1}$ be two copies of a Markov chain with stationary distribution $\pi$, initial distributions $\mathcal{L}(Y_0)= \nu$, $\mathcal{L}(X_0)= \mu$ ($X_0\indep Y_0$) and simulated according to Algorithm \ref{alg:expdiff}.
		Assume $\pi<<\nu$. Then for $r\geq p\geq 1$ and $s\geq 1$, the following holds almost surely for $N\geq 1$, where  $K=E\left[\left( \frac{f_{\pi}(Y_0)}{f_{\nu}(Y_0)}\right)^{\frac{s}{s-1}}\right]^{\frac{s-1}{s}}$ and  $\mathcal{L}(X_N),\mathcal{L}(Y_N)\in P_{rs}(\mathcal{X})$.
		\begin{equation}\label{eqn:summary}
				W_{d,p}(\mathcal{L}(X_N),\pi) 
				%\leq \inf_{(X_N, X_\infty) \sim J(X_N, X_{\infty})} E[d(X_N,X_{\infty})^p]^{1/p} 
				\leq K^{1/r}\left(\lim_{M\to \infty}\frac{1}{M}\sum_{i=1}^M d(X_{N,i},Y_{N,i})^{r s}\right)^{\frac{1}{rs}}
			\end{equation}
		\begin{proof}
		Since $\mathcal{L}(X_N),\mathcal{L}(Y_N)\in P_{rs}(\mathcal{X})$, $E[d(X_N,Y_N)^{rs}] < \infty$ and so by the strong law of large numbers, $\lim_{M\to \infty} \frac{1}{M}\sum_{i=1}^M d(X_{N,i},Y_{N,i})^{rs}\overset{a.s.}{=} E[d(X_N,Y_N)^{rs}]$. 
			By Theorem \ref{thm:1}, $W_{d,p}(\mathcal{L}(X_N),\pi) 
			\leq K^{1/r} E[d(X_N,Y_N)^{rs}]^{\frac{1}{rs}}\overset{a.s.}{=} K^{1/r} \left(\lim_{M\to \infty}\frac{1}{M}\sum_{i=1}^M d(X_{N,i},Y_{N,i})^{rs}\right)^{\frac{1}{rs}}$
		\end{proof}
	\end{theorem}

To optimize this bound we can choose $\nu$ that is close to $\pi$. If $K$ is very large, we can choose an $r\geq p$ such that $K^{1/r}$ is a reasonable value.

Next we state the confidence interval on our bound. We use this confidence interval in figures \ref{fig:simdiffstein}, \ref{fig:vcmcrn2}, \ref{fig:vcmcrnvsdnm} and \ref{fig:simdiff} and Table \ref{table:civarcompgalinjones}.
\begin{lemma}\label{lem:ci}
	The $1-\alpha$ confidence interval on $K^{1/r}\left( \frac{1}{M}\sum_{i=1}^M d(X_{N,i},Y_{N,i})^{rs}\right)^{\frac{1}{rs}}$ is 
	\begin{equation*}
		\left[K^{\frac{1}{r}}\left(m - z_{\alpha/2}\sigma_m/\sqrt{M}\right)^{\frac{1}{rs}}, K^{\frac{1}{r}}\left(m + z_{\alpha/2}\sigma_m/\sqrt{M}\right)^{\frac{1}{rs}}\right]
	\end{equation*}
	Where $m = \frac{1}{M}\sum_{i=1}^M d(X_{N,i},Y_{N,i})^{rs}$, $\sigma^2_m$ is the sample variance of $m$ and  $\mathcal{L}(X_{N}),\mathcal{L}(Y_{N})\in P_{2rs}(\mathcal{X})$.
	\begin{proof}
	Since $\mathcal{L}(X_{N}),\mathcal{L}(Y_{N})\in P_{2rs}(\mathcal{X})$, the variances are finite and we can apply the Central Limit Theorem. That is,
	$P\left(m - \frac{z_{\alpha/2}\sigma_m}{\sqrt{M}}\leq E[d(X_n,Y_n)^{rs}] \leq  m + \frac{z_{\alpha/2}\sigma_m}{\sqrt{M}}\right)\approx 1-\alpha$. 
	And so, 
	$P\left(K^{\frac{1}{r}}\left(m -\frac{z_{\alpha/2}\sigma_m}{\sqrt{M}} \right)^{\frac{1}{rs}}
	\leq K^{\frac{1}{r}}E[d(X_n,Y_n)^{rs}]^{\frac{1}{rs}} \leq  
	K^{\frac{1}{r}} \left(m + \frac{z_{\alpha/2}\sigma_m}{\sqrt{M}}\right)^{\frac{1}{rs}}\right)\approx1-\alpha$.
	\end{proof}
\end{lemma}

\paragraph{The approach of Biswas and Mackey \cite{biswas}} Algorithm \ref{alg:expdiff} closely resembles Algorithm 1 of \cite{biswas}. Biswas and Mackey show in \cite{biswas} how CRN simulation can be used to estimate the CUB estimator, which is defined as follows for a given $p\in [1,\infty)$,

\begin{equation}
	\text{CUB}_{M,N,T}=\left(\frac{1}{M(T-N)}\sum_{i=1}^M \sum_{n=N+1}^T d(X_{n,i},Y_{n,i})^p \right)^{1/p}
\end{equation}

By Corollary 3.2 of \cite{biswas}, under certain regularity conditions, the CUB estimator bounds the Wasserstein distance between the sample means,
$$W_{d,p}\left(\mathcal{L}\left(\frac{1}{(T-N)} \sum_{n=N+1}^T X_{n,i} \right), \mathcal{L}\left(\frac{1}{(T-N)} \sum_{n=N+1}^T Y_{n,i} \right)\right) \leq \lim_{M\to \infty} \text{CUB}_{M,N,T}$$

If we set $N,T-N\to \infty$ we obtain a bound on the stationary distributions,
$$W_{d,p}(\mathcal{L}(X_{\infty}),\mathcal{L}(Y_{\infty})) \leq \lim_{M,N\to\infty,T-N\to \infty} \text{CUB}_{M,N,T}$$ 

%See assumption 3.4 and Proposition 3.5 of \cite{biswas} for more details.

%In our case when both Markov chains converge to the same stationary distribution, $CUB_{I,N,T}$ estimator only provides a bound for $W(\mathcal{L}(\frac{1}{T-N} \sum_{n=N+1}^T X_{n,i} ), \mathcal{L}(\frac{1}{T-N} \sum_{n=N+1}^T Y_{n,i} ))$.

In \cite{biswas}, the CUB estimator eventually bounds the Wasserstein distance between the corresponding stationary distributions, but does not tell us precisely when it will happen or by how much. That is, if the starting points of the two Markov chains are close in distance but are far from stationarity, the results from \cite{biswas} would conclude, for some fixed iteration $N$, that the distance between the two chains is close (true), but it would be incorrect to state that the distance to stationarity is close. In comparison, Theorem \ref{thm:summary} measures distance to stationarity for any initial distribution on $X_0$ and iteration $N\in \mathbb{N}$.  %Further, our theorem eliminates the need for $N\to\infty,T-N\to \infty$ to bound the distance between a Markov chain from its corresponding stationary distribution. The distance between the distributions of $X_n$ and $X_{\infty}$ are instead bounded by $K$. Finally, our bound eliminates the need to generate the additional $T-N$ simulations.

\paragraph{The approach of Johnson \cite{crnconv}} The motivation of Theorem \ref{thm:summary} comes from \cite{crnconv}, which provides a bound in total variation distance between a Markov chain and its corresponding stationary distribution using CRN simulation. The following is a simplification of the results with altered notation to match this paper. For complete details, refer to \cite{crnconv}. For a discussion on when CRN is applicable within the setting of \cite{crnconv}, see \cite{mcconvcomp}.

Let $\tau$ be a `stopping' time, $\tau = \min\{n:d(X_n,Y_n)\leq \epsilon\}$. %The term 'stopping' is in brackets, because $|X_n-Y_n|$ are not equal but sufficiently close together. 
Theorem 2 of \cite{crnconv} states that the total variation between two Markov chains $\{X_n\}_{n\geq 1}$, $\{Y_n\}_{n\geq 1}$ both with initial distribution $\nu$ and stationary distribution $\pi$ is as follows,
%$$P(\tau\geq n)\geq \frac{1}{2}(||\pi-\nu||_{TV}-O(\epsilon))(1-r(\pi,\nu))$$

%this translates to the following,
$$||\pi-\mathcal{L}(X_n)||_{TV}\leq 2 K P(\tau\geq n) + O(\epsilon)$$

%To establish a bound, the distribution of coupling times, $\tau$, must be simulated, which can only be approximated in continuous state space (note the `stopping' time occurs when $|X_n-Y_n|\leq \epsilon$). Further, an indeterminate number of simulations must be run to estimate a single $\tau$ (we must run the simulation until $|X_n-Y_n|\leq \epsilon$). 
%To generate the bound, we need to estimate $P(\tau\geq n)$ and thus a large number of $\tau$ need to be simulated.

Where $K=ess\sup_z \frac{f_{\pi}(z)}{f_{\nu}(z)}$. The notation $O(\epsilon)$ means that for some fixed  $\epsilon_0>0$, there exists an $M\in\mathbb{N}$ such that for any $\epsilon\in (0,\epsilon_0)$, $||\pi-\mathcal{L}(X_n)||_{TV}\leq \frac{2 P(\tau\geq n)}{1-r(\pi,\nu)} + M\epsilon$. %The constant $M$ is unknown and thus Theorem 2 of \cite{crnconv} only provides an approximation on the upper bound.

In \cite{crnconv}, $K=\frac{1}{1-r(\pi,\nu)}$ where $r(\pi,\nu)$ is the rejection rate in a rejection sampler with stationary distribution $\pi$ and proposal distribution $\nu$ (See Definition  \ref{def:K}). By Equation \ref{def:K}, $\frac{1}{1-r(\pi,\nu)}=ess\sup_z \frac{f_{\pi}(z)}{f_{\nu}(z)}$. Thus, both Theorem \ref{thm:summary} (with $s_1=1,s_2=\infty$) and Theorem 2 of \cite{crnconv} require calculating upper bounds on $r(\pi,\nu)$. However, Theorem \ref{thm:summary} estimates an exact upper bound on the Wasserstein distance while Theorem 2 of \cite{crnconv} estimates an upper bound on the total variation distance with an unknown component that converges to $0$ as $\epsilon\to 0$.

Applying total variation distance to the bound generated by Theorem \ref{thm:summary} and taking $\epsilon=0$, we get a similar result to \cite{crnconv}, $||\pi-\mathcal{L}(X_n)||_{TV}\leq K P(\tau\geq n) $. See Remark \ref{rem:tv} for more details. In our simulations, exact simulation of $d(X_n,Y_n)=0$ was not realistically attainable, however. This is why \cite{crnconv} introduces the error $\epsilon$. This points to CRN simulation being a better fit for bounding Wasserstein distance compared to total variation distance.

\section{Gibbs sampler for a model related to James-Stein estimators}\label{sec:stein}

Consider a Gibbs sampler for a model related to the James-Stein estimator, which we will apply Theorem \ref{thm:summary} to simulate convergence bounds on the Wasserstein distance. The James-Stein estimator is similar to a variance component model (Example \ref{ex:varcomp}), but each $\theta_i$ has only one observation, $Y_i$. Its significance in the frequentist statistics literature comes from Stein's paradox, which proves that for $\theta_i$ the estimator, $\left(1-\frac{I-2}{\sum_{j=1}^I Y_j^2}\right)_+Y_i$, has smaller risk than the intuitive estimator, $Y_i$, when $I\geq 3$ \cite{wasserman}. Here, we look at the same model setup, but through a Bayesian statistics lense. See \cite{stein} for more details on this example. %We do this example, because DnM \cite{stein} and one-shot coupling bounds \cite{sixta} already exist allowing us to compare our bound in real data example \ref{numex:stein}.

\begin{example}[A model related to James-Stein estimators]\label{ex:stein}
	
	Suppose that we have the following observed data $Y\in \mathbb{R}^q$ where
	$$Y_i |\theta_i\sim N(\theta_i,V)$$
	for known $V>0$ and unknown $\theta_i\sim N(\mu,A)$ ($\theta_i$'s are i.i.d.), where $\mu$ has a flat prior and $A\sim IG(\alpha,\beta)$.
	
	The joint posterior density function of $\vec{\theta},\mu,A\mid Y$ is proportional to the following equation,
	\begin{align}\label{eqn:jsteinG}
		g(\vec{\theta},\mu,A\mid Y) = f_{IG}(A\mid \alpha, \beta) \times \prod_{i=1}^q f_N(Y_i\mid \theta_i, V)  \times \prod_{i=1}^q f_N(\theta_i \mid \mu, A)
	\end{align}		
\end{example}
\paragraph{The Gibbs sampler for a model related to James-Stein estimators.} 

Algorithm \ref{alg:stein} shows how to apply a Gibbs sampler on this example.
\begin{algorithm}\label{alg:stein}
	\caption{The Gibbs sampler algorithm for a model related to James-Stein estimators.}\label{alg:stein}
	\begin{algorithmic}
		\State Initialize $(\vec{\theta}_0,\mu_0,A_0)$
		\For{$n = 1, \ldots, N$}
		\State For each $i\in 1,\ldots, q$, generate $\theta_{n,i}|\mu_{n-1}, A_{n-1},Y \sim N(\frac{Y_{i}A_{n-1}+\mu_{n-1} V}{A_{n-1}+V},\frac{VA_{n-1}}{V+A_{n-1}})$
		\State Generate $\mu_n| \vec{\theta}_{n},A_{n-1},Y \sim N(\bar{\theta}_{n},A_{n-1}/q)$ where $\bar{\theta}_n=\frac{1}{q}\sum_{i=1}^q \theta_{n,i}$
		\State Generate $A_n|\vec{\theta}_{n},\mu_{n},Y \sim IG\left(\alpha+ \frac{q-1}{2},\frac{\sum_{i=1}^q(\theta_{n,i}-\mu_{n})^2}{2}+\beta\right)$
		\EndFor
	\end{algorithmic}
\end{algorithm}

%The Gibbs sampler on the James Stein estimator is based on the conditional posterior distributions of $\vec{\theta},\mu,A$ and is defined as follows initialized at $(\vec{\theta}_0,\mu_0,A_0)$:
%\begin{enumerate}
%	\item $\theta_{n,i}|\mu_{n-1}, A_{n-1},Y \sim N_p(\frac{Y_{i}A_{n-1}+\mu_{n-1} V}{A_{n-1}+V},\frac{VA_{n-1}}{V+A_{n-1}})$ for $i\in 1,\ldots, q$
%	\item $A_n|\vec{\theta}_{n-1},\mu_{n-1},Y \sim \Gamma^{-1}\left(\alpha+ \frac{p-1}{2},\frac{\sum_{i=1}^q(\theta_{n-1,i}-\mu_{n-1})^2}{2}+\beta\right)$
%	\item $\mu_n| \vec{\theta}_{n-1},A_{n-1},Y \sim N(\bar{\theta}_{n-1},A_{n-1}/q)$
%\end{enumerate}
%where $\bar{\theta}_n=\frac{1}{q}\sum_{i=1}^q \theta_{n,i}$.
We will bound the Wasserstein distance between a Markov chain and its corresponding stationary distribution using Lemma \ref{lem:steinKbound}. 

%To do so, we propose the follow initial distribution for one of our chains, $(\vec{\theta},\mu,A)\sim IG(\alpha+q/2,\beta)\times \prod_{i=1}^q N(Y_i,V)\times $.
%\begin{align}\label{eqn:steinit}
%	f_{\nu}(\vec{\theta},\mu,A) = f_{IG}(A \mid \alpha+q/2, \beta) \prod_{i=1}^qf_N(\theta_{i}\mid Y_i, V)1_{0}(\mu)
%\end{align}

\begin{lemma}\label{lem:steinKbound}    
	Let $\{X_n\}_{n\geq 1}=(\vec{\theta}_n,\mu_n,A_n)_{n\geq 1}$ and $\{X'_n\}_{n\geq 1}=(\vec{\theta}'_n,\mu'_n,A'_n)_{n\geq 1}$ be two copies of the Markov chain initialized with $\vec{\theta}'_0 \sim N_q(\vec{Y},V I_q)$, $A'_0\sim IG(\alpha+(q-1)/2,\beta)$, $\mu'_0\sim N(\bar{\theta}'_0,A'_0)$, and $X_0\indep X'_0$. Fix $p\in(1,\infty]$ and l%. Assume that $E[d(X_n,X'_n)^p]<\infty$ for $n\in \mathbb{N}$. L
	et $\pi$ be the corresponding stationary distribution. 
	Then for $r\geq p$
	%	\begin{align*}
		%		W_{d,p}(\mathcal{L}(X_n), \pi)\leq \left(\sqrt{q}E[d(X_n,X'_n)^p]\right)^{1/p}
		%	\end{align*}
	\begin{align*}
		W_{d,p}(\mathcal{L}(X_n),\pi)
		\leq \left(\frac{\Gamma(\alpha+(q-1)/2)(2\pi)^{1/2}}{\beta^{\alpha+(q-1)/2}}
		\frac{1}{L}
		E[d(X_n,X'_n)^r]\right)^{1/r}
	\end{align*}
	
	Where $L=	\int_{B} \frac{A^{-\alpha-1}e^{-\beta/A}}{(A+V)^{q/2}}
	e^{\frac{1}{2}\sum_{i=1}^q\left[\frac{(\frac{Y_i}{V}+\frac{\mu}{A})^2}{\frac{1}{V}+\frac{1}{A}}-\frac{Y_i^2}{V}-\frac{\mu^2}{A}\right]}\mathrm{d}(\mu, A)$ and $B\subset\mathbb{R}\times\mathbb{R}_+$ is a bounded set. 
	%	\begin{align*}
		%		W_{d,p}(\mathcal{L}(X_n), 
		%		\pi)
		%		\leq \left(K
		%		E[d(X_n,X'_n)^p]\right)^{1/p}
		%	\end{align*}
	%	Where 
	%	$K=\frac{1}{\int_{B}g(\vec{\theta},\mu,A) d(\vec{\theta},\mu,A)} \frac{\Gamma(\alpha+(q-1)/2)}{\Gamma(\alpha)} \frac{1}{(2\pi\beta)^{(q-1)/2}}$ and $B\subset \mathbb{R}^{q+1}\times \mathbb{R}_+$ is a bounded set. 
	
	The proof is in section \ref{pf:steinKbound}.
\end{lemma}

\begin{numexmp}\label{numex:stein}
	We apply our results to the batting averages ($Y$) for 18 ($q=18$) major league baseball players in 1970 \cite{baseballdata, pscl}. We want to find the estimated upper bound on the Wasserstein and total variation distance for a Gibbs sampler applied to the James-Stein estimator using the baseball data. We set the priors to $\alpha=0.01, \beta=2$ and $V=Var(Y)=0.00485$. The initial value of our Markov chains are independently drawn;
	$X_0=(\vec{\theta}_0,\mu_0,A_0)=(100)^{q+2}$ and  $X'_0$ is distributed according to Lemma \ref{lem:steinKbound}. Using Lemma \ref{lem:steinKbound}, $K=5.9535$ and 
	$$W_{||\cdot||_1,1}(\mathcal{L}(X_n), \pi)\leq  5.9535 E[||X_n-X'_n||_1]$$ holds almost surely.
	
	Using the CRN technique, we simulated $||X_n-X'_n||_1$ one thousand times ($M=1000$) over 20 iterations ($N=20$). At iteration 1, $\frac{1}{1000}\sum_{i=1}^{1000}||X_{1,i}-X'_{1,i}||_1=1994.712$. By iteration 9, $\frac{1}{1000}\sum_{i=1}^{1000}||X_{9,i}-X'_{9,i}||_1=0.00012$. Figure \ref{fig:simdiffstein} graphs the upper bound of the Wasserstein distance between $\mathcal{L}(X_n)$ and $\pi$ on a log scale. 
	\begin{figure}
		\centering
		\title{Upper bound on $W_{||\cdot||_1,1}(\mathcal{L}(X_n),\pi)$}
		\includegraphics[width=0.7\linewidth]{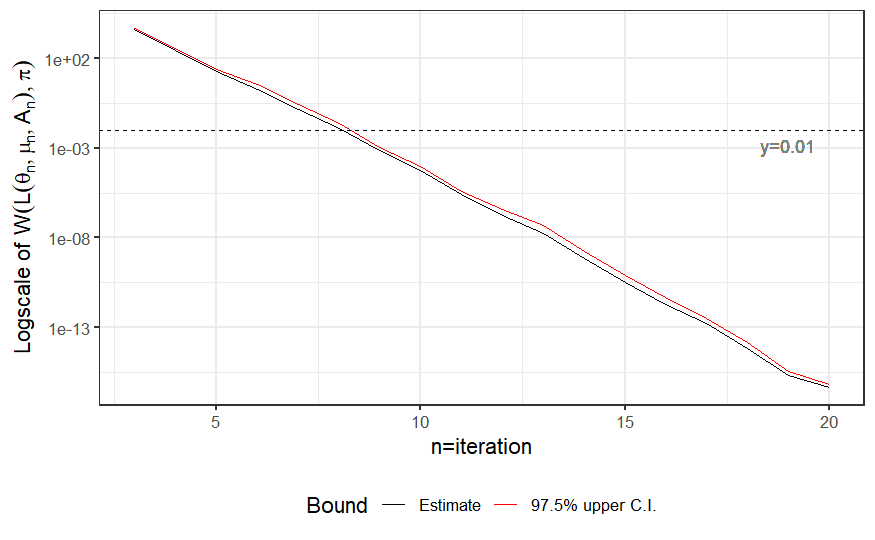}
		\caption{The upper bound on the Wasserstein distance for the Gibbs sampler model related to James-Stein estimators. The estimate uses the mean of 1000 simulations and initial value $(\vec{\theta}_{0},\mu_{0},A_{0})=(100)^{q+2}$. %$0.0027257\frac{1}{1000}\sum_{i=1}^{1000}||(\vec{\theta}_{n,i},\mu_{n,i},A_{n,i})-(\vec{\theta}'_{n,i},\mu'_{n,i},A'_{n,i})||_1$ where $(\vec{\theta}_{n,i},\mu_{n,i},A_{n,i})=(\bar{Y})^{q+2}$ and $(\vec{\theta}'_{n,i},\mu'_{n,i},A'_{n,i})\sim \nu$. 
			The vertical axis is log scaled. $r=1$.}
		\label{fig:simdiffstein}
	\end{figure}
	At iteration 9, the bound on the Wasserstein distance first hits below 0.01 and $W_{||\cdot||_1,1}(\mathcal{L}(X_9), \pi)\leq 0.00073$.

	Before we bound the Markov chain and the corresponding stationary distribution in total variation we introduce Theorem 4.8 of \cite{sixta}, which provides an upper bound in total variation based on the expected differences between the Markov chains through the constant $K_{TV}$. This theorem will also be used to generate the one-shot coupling bound. 
	
	\begin{prop}[Theorem 4.8 of \cite{sixta}]\label{thm:anothergibbssampler}
		For two copies of a Gibbs sampler on a model related to the James-Stein estimators with flat priors on $\mu$ and $A$, the following holds for $q\geq 3$,
		\begin{align}\label{eqn:anothergibbssamplereqn}
			\|\L(X_{n})-\L(X'_n)\|_{TV}&\leq K_{TV} E[|A_0-A'_0|] \left(\frac{1}{q}\right)^{n-1}
		\end{align}
		where $K_{TV}= \frac{(S/2)^{\frac{q-1}{2}}}{\Gamma(\frac{q-1}{2})}\left(\frac{S}{q+1}\right)^{-\frac{q-3}{2}}e^{-\frac{q+1}{2}}$ and $S=\sum_{i=1}^q (Y_i-\bar{Y})^2$.
	\end{prop}
	
	Using Proposition \ref{thm:anothergibbssampler}, we get $K_{TV}=0.0047$. Therefore by Lemma 4.11 of \cite{sixta},
	\begin{align*}
		\|\L(X_{n})-\pi\|_{TV}
		\leq  0.0282 E[||X_n-X'_n||_1]
	\end{align*} 
	So, by iteration 6, $\|\L(X_{6})-\pi\|_{TV}\leq 0.00866$.
	
	We compare our bound to Proposition \ref{thm:anothergibbssampler} and Theorem 4 of \cite{stein}, which use one-shot coupling and DnM arguments, respectively, to generate a theoretical bound in total variation. The methodologies and assumptions are outlined in Table \ref{table:steinfunc}.
	
	\begin{table}[H]
		\centering
		%\title{\textbf{List of benchmarked bounds and their properties}}
		\begin{longtable}{|p{1.4cm}|p{4.5cm}|p{4.6cm}|p{4.3cm}|}
			\hline
			\textbf{Method} & Common random number & Drift and minorization & One shot coupling \\
			\hline
			\textbf{Source} & Lemma \ref{lem:steinKbound} & Theorem 4 of \cite{stein} & Proposition \ref{thm:anothergibbssampler} and Proposition 6 of \cite{wasmeth}\\
			\hline
			\textbf{Prior} & $A\sim IG(0.01,2)$ and $f_{\mu}\propto 1$ & $A\sim IG(-1,2)$ and $f_{\mu}\propto 1$ & $f_{A}\propto 1$ and $f_{\mu}\propto 1$\\
			\hline
			\textbf{Bound} & \makecell[l]{$\|\L(X_n)-\pi\|_{TV}\leq$\\$ 0.0282 E[|X_n-X'_n|]$ \\ where $X_n,X'_n$ are \\ simulated using CRN.}& \makecell[l]{$\|\L(X_n)-\pi\|_{TV}\leq$\\$0.967^n+ (0.935)^n\times$\\$ (1.17+\sum_{i=1}^q(\theta_i-\bar{Y})^2)$} & \makecell[l]{$\|\L(X_n)-\pi\|_{TV}\leq$\\$\frac{0.0282}{17} E[|A_0-A_1|] \left(\frac{1}{18}\right)^{n} $} \\
			\hline
		\end{longtable}
		\caption{Theoretical bounds that are benchmarked against the CRN simulated bound. All bounds have initial value $X_0 = (\vec{\theta}_{0}, \mu_{0},A_{0})=(100)^{q+2}$. $f$ denotes the density function.}
		\label{table:steinfunc}
	\end{table}
	Note the minor variations in the examples, Proposition \ref{thm:anothergibbssampler} assumes that the prior on $A$ is flat. In Theorem 4 of \cite{stein}, the prior on $A\sim IG(-1,2)$ with the intent of generating a flat distribution. In our real data example we set the prior to $A\sim IG(0.01, 2)$, which could approximate a flat distribution. We will assume the minor variations are comparable. 
	
	Table \ref{table:steinvals} and Figure \ref{fig:crnvsdnmstein} compare the bounds on total variation distance using three different methods: CRN, DnM, and one-shot coupling. The one-shot coupling bound indicates total variation distance of less than 0.01 by iteration 5. The DnM bound indicates total variation distance of less than 0.01 by iteration 249. Thus, the one-shot coupling bounds performs the best closely followed by the CRN bound and significantly better than the DnM bound.
	\begin{table}
		\centering
		\begin{tabular}{|c|c|c|c|}
			\hline
			Iteration & \textbf{CRN} & \textbf{DnM} & \textbf{One-Shot Coupling}  \\
			\hline
			1 & 56.2430 & 167,409.95 & 58.6595 \\
			2 & 336.0368 & 156,528.33& 3.2589 \\
			3& 20.0791 & 146,354.02& 0.1811 \\
			4& 1.4003 & 136,841.04& 0.0101 \\
			5& 0.0898 & 127,946.40& \textbf{0.0005} \\
			6& \textbf{0.0087} & 119,629.91& 3.1049e-05		 \\
			7& 0.0007 & 111,853.99& 1.7247e-06	 \\
			8& 6.0301e-05 & 104,583.51& 9.5814e-08	 \\
			9& 3.4421e-06 & 97,785.60& 5.3230e-09	 \\
			10 & 2.6439e-07 & 91,429.56& 2.9572e-10 \\
			%		11 & 1.322970e-13 & 85,486.66 & 11.71369 \\
			%		12 & 7.349833e-15 & 79,930.05 & 0.8348397 \\
			%		13 & 4.083241e-16 & 74,734.62 & 0.08484724\\
			%		14 & 2.268467e-17 & 69,876.89 & \textbf{0.003356498}\\
			\ldots & \ldots & \ldots & \ldots\\
			249 & 1.6957e-307 & \textbf{0.0099} & 2.7286e-310\\
			\hline
		\end{tabular}
		\caption{Comparison of total variation bounds to stationarity for methods outlined in Table \ref{table:steinfunc} of the Gibbs sampler for a model related to James-Stein estimators. The values for each method that first reach less than 0.01 are in bold.}
		\label{table:steinvals}
	\end{table}
	
	\begin{figure}
		\centering
		\title{Comparison of bounds on $||\mathcal{L}(\vec{\theta}_n,\mu_n, A_n)-\pi||_{TV}$}
		\includegraphics[width=0.7\linewidth]{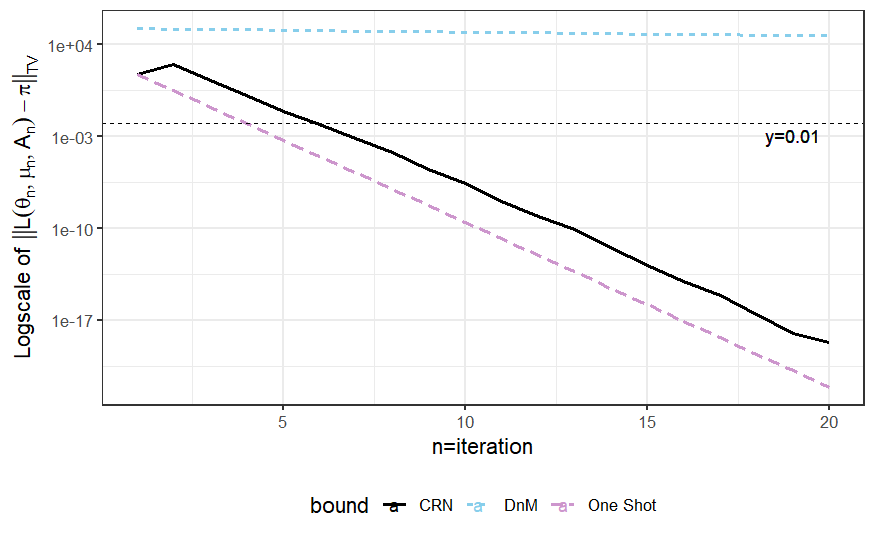}
		\caption{Comparison of an upper bound on the total variation distance on the Gibbs sampler for a model related to James-Stein estimators (Real Data Example \ref{numex:stein}) using a CRN simulated bound, a DnM bound and a one-shot coupling bound. The CRN estimate uses the mean of 1000 simulations. For all of the bounds, $(\vec{\theta}_{0},\mu_{0},A_{0})=(100)^{q+2}$. }
		\label{fig:crnvsdnmstein}
	\end{figure}
	
	% Please add the following required packages to your document preamble:
	% \usepackage[table,xcdraw]{xcolor}
	% Beamer presentation requires \usepackage{colortbl} instead of \usepackage[table,xcdraw]{xcolor}
	
	The bound generated from CRN simulation and one-shot coupling are expected to be very similar since to generate the geometric convergence rate of the one-shot coupling bound (the $1/18$ in table \ref{table:steinfunc}) we use CRN (see \cite{jacob, sixta, localcont, avcont}). This is confirmed by Figure \ref{fig:crnvsdnmstein} where the CRN bound and the one-shot bound are approximately parallel. %However the geometric convergence rate is the \textit{supremum} of the expected changes while the common number simulation simply simulates the expected changes. This would point to CRN simulation being more efficient. However this is not what we see.
	
	%The reason why one-shot coupling works slightly better in this example is because of how it was calculated. Lemma 4.9 of \cite{sixta} notes that to bound the total variation, we need only to measure the distance between $|A_0-A'_n|$ (in contrast to $|(\vec{\theta}_{0},\mu_{0},A_{0})-(\vec{\theta}'_{0},\mu'_{0},A'_{0})|$ for the Wasserstein distance). This is because of the flat prior on $A$, which simplifies the problem due to de-initialization \cite{roberts:rosenthal:2001}. The argument is similar to Lemma \ref{lem:regTVbound} for the Gibbs sampler applied to the Bayesian linear regression model. To see how this affects the bounds, note that $E[|A_0-A'_0|]\approx 0.07197$ in contrast to $E[|(\vec{\theta}_{0},\mu_{0},A_{0})-(\vec{\theta}'_{0},\mu'_{0},A'_{0})|]\approx 40,233.6$.
	%
	%If we do not assume a flat prior on $A$, calculating a geometric convergence rate using one-shot coupling becomes remarkably harder and may not even exist. One the other hand, CRN simulation can still apply.
	
	Finally, we compare our convergence rates with informal methods in Section \ref{subsec:convdiagstein}. The Gelman-Rubin statistic is evaluated for each parameter. By iteration 5, the Gelman-Rubin statistic for all of the parameters graphed is less than 1.05.  The traceplot paints a similar picture suggesting that all parameters have converged by iteration 6. Thus, the CRN bound indicating total variation distance convergence by iteration $6$, is tight. 
\end{numexmp}

\section{Gibbs sampler for a variance component model}\label{sec:varcomp}

Consider a Gibbs sampler on the variance component model (see \cite{varcomp}), which we will apply Theorem \ref{thm:summary} to simulate convergence bounds on the Wasserstein distance. Convergence rate bounds on the total variation metric for the variance component model have already been discussed in Theorems 1 of \cite{varcomp,Jones2004SufficientBF,simconv}. %In real data example \ref{numex:varcomp2}, we compare our bound to the one generated in \cite{simconv}, which uses DnM arguments.

\begin{example}[Variance component model a.k.a. random effects model a.k.a. hierarchical model]\label{ex:varcomp}
	Suppose that we have a population that has overall mean $\mu$, which consists of $I$ groups of mean $\theta_i \sim N(\mu, V)$ and where each group has $J_i$ observations with distribution $Y_{ij}\sim N(\theta_i,W)$. Conditional on $\vec{\theta}$ and $W$, the $Y_{ij}$'s are independent. Conditional on $\mu$ and $V$, the $\theta_{i}$'s are independent. The parameters $\mu, \vec{\theta}, V,W$ are unknown. See Figure \ref{schema:varcomp} for a graphical representation.
	\begin{figure}
		\centering
		\begin{tikzpicture}
			[
			level 1/.style={sibling distance=25mm},
			level 2/.style={sibling distance=15mm},
			]
			\node {$\mu$}
			child {
				node {$\theta_1$}
				child {node {$Y_{1,1},\ldots Y_{1,J_{1}}$}}
			}
			child {
				node {$\theta_2$}
				child {node {$Y_{2,1},\ldots Y_{2,J_{2}}$}}
			}
			child {node {$\ldots$}  edge from parent[draw=none]}
			child {
				node {$\theta_I$}
				child {node {$Y_{I,1},\ldots Y_{I,J_{I}}$}}
			};
		\end{tikzpicture}
		\caption{Schema of the variance component model where $\theta_i \sim N(\mu, V)$, $Y_{ij}\sim N(\theta_i,W)$. The parameters $W,V,\mu$ are unknown. $\theta_i|\mu, V \indep \theta_j|\mu, V$ and $Y_{i_1,j_1}|\vec{\theta},W \indep Y_{i_2,j_2}|\vec{\theta},W$. }
		\label{schema:varcomp}
	\end{figure}
\end{example}
Propose the following priors on the unknown parameters where $a_1,b_1,a_2,b_2, b_3>0$.

\begin{equation}\label{eq:par}
	V\sim IG(a_1, b_1) \hspace{1cm} W\sim IG(a_2,b_2) \hspace{1cm} \mu \sim N(a_3, b_3)
\end{equation}

The unnormalized posterior distribution of $(\vec{\theta}, V,W, \mu)$ is written as follows,
\begin{align*}
	&g(\vec{\theta}, v, w, \mu) \\&= f_{IG}(V\mid a_1, b_1) f_{IG}(W\mid a_2, b_2) f_N(\mu \mid a_3, b_3)
	\left( \prod_{i=1}^{I}f_N(\theta_i \mid \mu, V)\right)
	\left( \prod_{i=1}^{I} \prod_{j=1}^{J_i} f_{N}(Y_{ij}\mid \theta_i, W)\right)
\end{align*}

We would like to sample from the normalized posterior distribution. 
\paragraph{The Gibbs sampler for the variance component model.}

Algorithm \ref{alg:varcomp} shows how to apply a Gibbs sampler on this example.

\begin{algorithm}
	\caption{A Gibbs sampler algorithm for the variance component model}\label{alg:varcomp}
	\begin{algorithmic}
		\State Initialize $(\vec{\theta}_0, V_0, W_0, \mu_0)=(\vec{\theta}_0, v_0, w_0, \mu_0)$
		\For{$n = 1, \ldots, N$}
		\State Generate $W_n |\vec{\theta}_{n-1} \sim IG\left(a_2+JI/2,b_2 + \frac{\sum_{i=1}^I\sum_{j=1}^J(Y_{ij}-\theta_{n-1,i})^2}{2}\right)$
		\State Generate $V_n| \mu_{n-1}, \vec{\theta}_{n-1} \sim IG\left(a_1+I/2,b_1 + \frac{\sum_{i=1}^I(\theta_{n-1,i}-\mu_{n-1})^2}{2}\right)$
		\State Generate $\mu_n|V_n, \vec{\theta}_{n-1} \sim N\left(\frac{a_3V_n+ b_3\sum_{i=1}^I\theta_{n-1,i}}{V_n+Ib_3},\frac{V_nb_3}{V_n+Ib_3}\right)$
		\State For each $i\in I$, generate $\theta_{n,i}|V_n, W_n, \mu_n \sim N\left(\frac{\mu_n W_n+ V_n\sum_{j=1}^JY_{ij}}{W_n+JV_n},\frac{V_nW_n}{W_n+JV_n}\right)$
		\EndFor
	\end{algorithmic}
\end{algorithm}
%	
%	\begin{itemize}
	%		\item $\theta_i|V, W, \mu \sim N\left(\frac{\mu W+ V\sum_{j=1}^JY_{ij}}{W+JV},\frac{VW}{W+JV}\right)$
	%		\item $\mu|V, \vec{\theta} \sim N\left(\frac{a_3V+ b_3\sum_{i=1}^K\theta_i}{V+Kb_3},\frac{Vb_3}{V+Kb_3}\right)$
	%		\item $V| \mu, \vec{\theta} \sim IG\left(a_1+K/2,b_1 + \frac{\sum_{i=1}^K(\theta_i-\mu)^2}{2}\right)$
	%		\item $W |\vec{\theta} \sim IG\left(a_2+JK/2,b_3 + \frac{\sum_{i=1}^K\sum_{j=1}^J(Y_{ij}-\theta_i)^2}{2}\right)$
	%	\end{itemize}

Suppose that we have the following proposal density,
\begin{align}\label{eqn:nu}
	f_{\nu}(\vec{\theta}, V, W, \mu) =
	& f_{IG}(V \mid 2a_1+1, 2b_1-1) f_{IG}(W \mid 2a_2+1, 2b_2) f_{N}(\mu \mid a_3, b_3) \prod_{i=1}^I f_{N}\left(\theta_i \mid \bar{Y}_i,\frac{W}{2J_i}\right) 
\end{align}

The following lemma bounds the Wasserstein distance between the variance component model Gibbs sampler and the stationary distribution using the above proposal distribution.

\begin{lemma}\label{lem:regKbound2}    
	Let $X_n = (\vec{\theta}_n, V_n, W_n, \mu_n)$ and $X'_n = (\vec{\theta}'_n, V'_n, W'_n, \mu'_n)$ be two copies of the variance component Gibbs sampler initialized %with $X_0 \sim \mu$ ($\mu$ is arbitrary) and 
	with $X'_0 \sim \nu$ where $f_{\nu}$ is defined in Equation \ref{eqn:nu} and $X_0\indep X'_0$. Fix $p\in[1,\infty)$ %. Assume $E[d(X_n,X'_n)^p]<\infty$ for $n\in \mathbb{N}$. L
	and let $\pi$ be the corresponding stationary distribution. 
	Then for $r\geq p, b_1>1/2$,
	%		\begin{align*}
		%			E[d(X_n,X_{\infty})^p]
		%			\leq K
		%			E[d(X'_n,X_n)^p]
		%		\end{align*}
	\begin{align}\label{eq:varcompbnd}
		W_{d,p}(\mathcal{L}(X_n), 
		\pi)
		\leq K^{1/r}E[d(X_n,X'_n)^{2r}]^{\frac{1}{2r}}
	\end{align}
	
	Where 
	\begin{itemize}
		\item $K=\frac{1}{L}C_1 C_2^{1/2}$
		\item $L=\int_{B} (b_1^*)^{-a_1^*}
		e^{-\frac{(\mu-a_3)^2}{2b_3}}\left(b_2^*\right)^{-(a_2^*)} \mathrm{d}(\vec{\theta},\mu)$ where $B\subset \mathbb{R}^{I+1}$
		\item $C_1 = \frac{\Gamma(a_1)}{\Gamma(a_1^*) b_1^{a_1}} \frac{\Gamma(a_2)}{\Gamma(a_2^*)b_2^{a_2}}
		(2\pi)^{\sum_{i=1}^I J_i/2+(I+1)/2}b_3^{1/2}$
		\item $C_2 = \frac{b_1^{2a_1}}{\Gamma(a_1)^2} \frac{\Gamma(2a_1+1)}{(2b_1-1)^{2a_1+1}}
		\frac{b_2^{2a_2}}{\Gamma(a_2)^2} 
		\frac{\Gamma(2a_2 +1)}{(2b_2)^{2a_2 +1}}  \frac{\Gamma(I/2-1)}{(\prod_{i=1}^I J_i)^{1/2}2^{I+\sum_{i=1}^I J_i}\pi^{T+I/2}} \frac{\Gamma(T-1)}{S^{T-1}} $
		\item $	a_1^*=a_1+I/2$ and  $b_1^*=b_1+\frac{\sum_{i=1}^I(\theta_i-\mu)^2}{2}$  \item $a_2^*=a_2+\sum_{i=1}^I J_i/2$ and $b_2^*=b_2+\sum_{i=1}^{I} \sum_{j=1}^{J_i} \frac{(Y_{ij}-\theta_i)^2}{2}$ 
		\item $S=\sum_{i=1}^I (\sum_{j=1}^{J_i}Y_{ij}^2-J_i (\bar{Y}_i)^2)$ and $T=\sum_{i=1}^IJ_i-I/2$
	\end{itemize}
	The proof is in section \ref{pf:regKbound2}
\end{lemma}

%Note that in lemma \ref{lem:regKbound2} the value of $K$ still has an integral, which must be bounded from below, $L=\int_B \log(g(\vec{\theta}, v, w, \mu))d(\vec{\theta}, V, W, \mu)$. To address this, we use the \texttt{adaptIntegrate} function in \texttt{R} and extract the lower bound.
Note that in Lemma \ref{lem:regKbound2} the value of $K$ still has an integral, which must be bounded from below, $L=\int_B g(\vec{\theta}, v, w, \mu)\mathrm{d}(\vec{\theta}, V, W, \mu)$. To address this, we use the \texttt{adaptIntegrate} function in \texttt{R}.

For this variance component model, upper bounds in total variation distance have been established.  In Real Data Example \ref{numex:varcomp2}, we compare our bounds in Wasserstein distance to the total variation bounds generated in \cite{Jones2004SufficientBF}. 

\cite{varcomp} also provides an upper bound on the total variation distance between the variance component Gibbs sampler model and its corresponding stationary distribution (See Theorem 1 of \cite{varcomp}). However the constants required to obtain an explicit upper bound are difficult to calculate (see Remark 6 of \cite{varcomp}). An upper bound in total variation distance for this example is also established in \cite{simconv}. 
%However, we couldn't compare the results in \cite{simconv} to our bounds, because the priors used in \cite{simconv} were un-informative and resulted in an extremely large value for $K$ with which we could not generate small bounds. This demonstrates the need in our method for $\nu$ to be moderately close to $\pi$ (measured using the separation distance) so that $K$ is not overly large. 
Discussions of convergence diagnostics on total variation distance for variations of this example can be found in \cite{alicia, varcompgeom, roman, tan, varcompgeyer}. 

Upper bounds in Wasserstein distance of the variance component model have also already been discussed. \cite{wasmeth} analyzes the variance component model when the prior on $\mu$ is $\mu \propto 1$ instead of $\mu\sim N(a_3,b_3)$. In particular Proposition 25 of \cite{wasmeth} states that if $\lim_{I\to\infty}J^2/(I^{3+\delta})=\infty$ for some $\delta>0$ and the model is properly specified (condition (E2) in \cite{wasmeth}), then the geometric rate of convergence for this model goes to $0$. Proposition 24 of \cite{wasmeth} also bounds the total variation distance in terms of the Wasserstein distance. Discussions of convergence rates for the variance component model in Wasserstein distance are also presented in \cite{davis,yang}.

\begin{numexmp1}\label{numex:varcomp2}

We are going to compare our convergence bounds to the bounds from \cite{Jones2004SufficientBF}. The data for this example is simulated in \cite{Jones2004SufficientBF} and summarised in Table \ref{table:varcompdata}. \cite{Jones2004SufficientBF} generates an upper bound on the total variation distance between a Markov chain and its corresponding stationary distribution using a DnM bound stated in \cite{rose} (see Theorem 12 of \cite{rose} and Theorem 3.1 of \cite{Jones2004SufficientBF}). The hyperparameters for the DnM bound are defined in Table \ref{table:varcompnum1}. 

\begin{table}
	\centering
	\begin{tabular}{|c|c|c|c|c|c|}
		\hline
		\textbf{Cell} & \textbf{1} & \textbf{2} & \textbf{3} & \textbf{4} & \textbf{5} \\
		\hline
		\textbf{$\bar{Y}_i$} & -0.80247 & -1.0014 & -0.69090 & -1.1413 & -1.0125 \\
		\hline
	\end{tabular}
	\caption{Summary of simulated data used in \cite{Jones2004SufficientBF} (see their Table 1) and in Numerical Example \ref{numex:varcomp2}. $I=5$, $J=10$, $\sum_{i=1}^5\sum_{j=1}^{10}(y_{ij}-\bar{y}_i)^2=32.990$}
	\label{table:varcompdata}
\end{table}

%In our comparison, we will first generate an upper bound on the Wasserstein distance between the Markov chain and the corresponding stationary distribution. Next, we will generate an upper bound on the total variation distance based on the Wasserstein distance using Proposition 24 of \cite{wasmeth}.
\begin{table}
	\centering
\begin{tabular}{|c|c|c|c|c|c|c|}
	\hline
	\textbf{Hyperparameter} & \textbf{$a_1$} & \textbf{$b_1$} & \textbf{$a_2$} & \textbf{$b_2$} & \textbf{$a_3$} & \textbf{$b_3$} \\
	\hline
	\textbf{From} \cite{Jones2004SufficientBF} \textbf{and this paper} & 2.5 & 1 & 1 & 1 & $\bar{Y}$ & 1 \\
%	\hline
%	\textbf{This paper} & 2.5 & 1.1 & 1 & 1.1 & $\bar{Y}$ & 1  \\
	\hline
\end{tabular}
\caption{Hyperparameters used in Numerical Example \ref{numex:varcomp2} and Table 2 of \cite{Jones2004SufficientBF}. %The hyperparameters are slightly different to take into account limitations from Lemma \ref{lem:regKbound2}, most notably that $b_1,b_2>1$. The hyperparameters follow the same notation as in equation \ref{eq:par}.
	 $\bar{Y}=\frac{1}{IJ}\sum_{i=1}^I\sum_{j=1}^J Y_{ij}$ }
\label{table:varcompnum1}
\end{table}

%Proposition 24 of \cite{wasmeth} will be used to bound the total variation of the variance component model given the Wasserstein distance. Note that in \cite{wasmeth}, the prior on $\mu\sim 1$ is flat. Since in this example $\mu\sim N(0,10^{12})$, we assume that they are the same.
%
%\begin{prop}[Proposition 24 of \cite{wasmeth}]\label{prop:wasmeth}
%	Let $X_n$ be a variance component Gibbs sampler generated according to the priors of $V\sim IG(a_1,b_1), W\sim IG(a_2, b_2)$, and $\mu\propto 1$. Assume that $I\geq 2$. Then for each $n\geq 1$ and $X_0=(\vec{\theta_0},V_0,W_0, \mu_0)\in \mathbb{R}^{I}\times \mathbb{R}_+\times \mathbb{R}_+\times \mathbb{R}$,
%	
%	\begin{equation}
%			||\mathcal{L}(X_n)-\pi||_{TV} \leq c J^{3/2}I W_{||\cdot||_2,p}(\mathcal{L}(X_{n-1}),\pi)
%		\end{equation}
%	where
%	$c=c_1+c_2$ and 
%	\begin{itemize}
%			\item $c_1=\frac{2}{I}\left(\frac{I}{2}+a_1\right)\left(1+\sqrt{\frac{2}{b_1}}\frac{1}{I}+\frac{1}{2b_1 I^2}\right)^{I/2+a_1-1}\left(\sqrt{\frac{2}{b_1}}+\frac{1}{b_1I}\right)$
%			\item $c_2=\frac{2}{IJ^{3/2}}\left(\frac{IJ}{2}+a_2\right)\left(1+\frac{2}{\sqrt{b_2}IJ}+\frac{1}{b_2I^2J^2}\right)^{IJ/2+a_2-1}\left(2\sqrt{\frac{J}{b_2}}+\frac{2}{b_2\sqrt{J}I}\right)$
%		\end{itemize}
%\end{prop} 
%We bound the whole chain in Proposition \ref{prop:wasmeth} $(\vec{\theta_n},V_n,W_n, \mu_n)$ rather than the marginal chain $(\vec{\theta_n}, \mu_n)$ as presented in Proposition 24 of \cite{wasmeth} due to de-initialization properties \cite{roberts:rosenthal:2001}. See section 5.1 of \cite{wasmeth} for more details.

%$K= 7.4078\times 10^9$
Applying Lemma \ref{lem:regKbound2}, $K= 48.7818$ and the following holds for $r= 1$, 
%\textbf{WRITE AS EXPECTED VALUE. SPECIFY $X_0$ IS NORMAL WITH MEAN rEAMRK 4.1. PRESENT CALCULATION WHEN N=9. MAKE CLEAR $f_X$. NEED INTERMEDIATE CALCULATIONS. ADD NEW PRIOR TO REAL DATA EX 5.2}

$$W_{||\cdot||_1,1}(\mathcal{L}(X_n),\pi)\leq 48.7818E\left[(||X_{n}-X'_{n}||_1)^{2}\right]^{\frac{1}{2}}$$

%Using Proposition \ref{prop:wasmeth} notation, $cJ^{3/2}I=563.178$ and so an upper bound on the total variation distance between the Markov chain and its corresponding stationary distribution is as follows,
%\begin{align*}
%	||\mathcal{L}(X_n)-\pi||_{TV}%&\leq 563.178\times 0.1228 \lim_{M\to \infty}\frac{1}{M}\sum_{i=1}^M ||X_{n,i}-X'_{n,i}||_1 \\
%	& \leq 69.158 \lim_{M\to \infty}\frac{1}{M}\sum_{i=1}^M ||X_{n,i}-X'_{n,i}||_2
%\end{align*}

The initial value of $X_0=(\vec{\theta}_0,V_0,W_0,\mu_0)$ is $V_0\sim IG(a_1,b_1)$, $W_0\sim IG(a_2,b_2)$ and $(\vec{\theta}_0,\mu_0)=$  the initial value defined in \cite{Jones2004SufficientBF}, Remark 4.1 of \cite{Jones2004SufficientBF}. The initial value of $X'_0$ satisfies Equation \ref{eqn:nu}. 

We simulated $(||X_{n}-X'_{n}||_1)^2$ 1000 times ($M=1000$) using CRN. Figure \ref{fig:vcmcrn2} graphs the estimated upper bound on the Wasserstein distance between the Markov chain and its corresponding stationary distribution while Table \ref{table:civarcompgalinjones} provides the bound estimates and the $99\%$ confidence intervals using Lemma \ref{lem:ci}.

\begin{figure}
	\centering
	\title{Upper bound on $W_{||\cdot||_1,1}(\mathcal{L}(X_n),\pi)$}
	\includegraphics[width=0.7\linewidth]{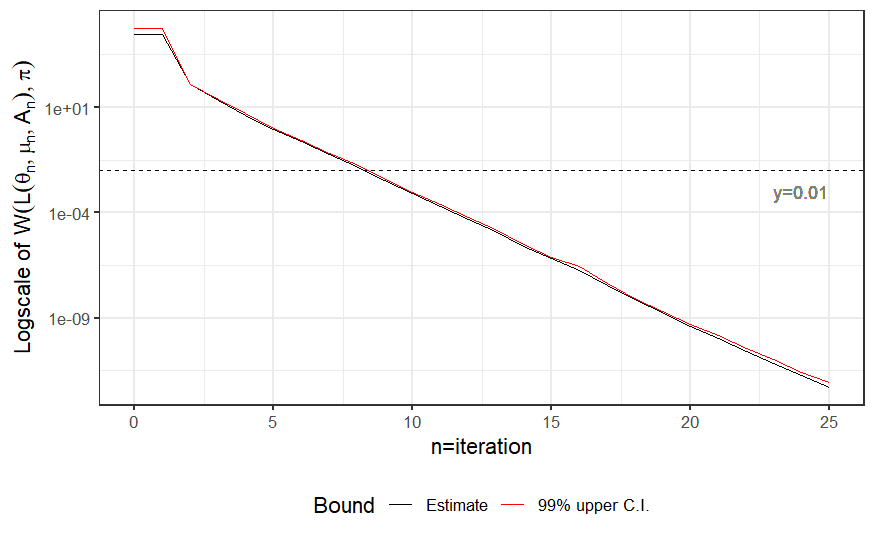}
	\caption{A Wasserstein distance upper bound on the Gibbs sampler for a variance component model (Numerical Example \ref{numex:varcomp2}) using CRN. The estimate uses the mean of 1000 simulations with initial values of $\mu_0,\vec{\theta}_0$ %$0.1228 \frac{1}{2000}\sum_{i=1}^{2000} ||X_{n,i}-X'_{n,i}||_1$ where $X_n=(V_n,W_n,\mu_n, \vec{\theta}_n)$ 
			satisfying the initial values defined in \cite{Jones2004SufficientBF} (see their Remark 4.1). The vertical axis is a log scale.}
	\label{fig:vcmcrn2}
\end{figure}

	\begin{table}
	\centering
\begin{tabular}{|c|c|c|c|c|}
	\hline
	\textbf{$n$} & $\frac{1}{1000}\sum_{m=1}^{1000}(||X_{n,m}-X'_{n,m}||_1)^{2}$ & \textbf{$99\%$ lower C.I.} & \textbf{Estimated bound} & \textbf{$99\%$ upper C.I.} \\
	\hline
%0 & 397505 & 0 & 30755.96 & 55941.83 \\
%\hline
1 & 397506.3 & 0 & 3075.601 & 55941.57 \\
\hline
2 & 6.0002 & 112.7212 & 119.4924 & 125.8998 \\
\hline
3 & 0.1835 & 17.3169 & 20.8980 & 23.9495 \\
\hline
4 & 0.0076 & 3.2230 & 4.2734 & 5.1123 \\
\hline
5 & 0.0004 & 0.7855 & 0.9132 & 1.0251 \\
\hline
6 & 2.4539e-05 & 0.1837 & 0.2417 & 0.2881 \\
\hline
7 & 1.4015e-06 & 0.0441 & 0.0578 & 0.0687 \\
\hline
8 & 1.0485e-07 & 0.0114 & 0.0158 & 0.0192 \\
\hline
9 & 5.4138e-09 & 0.0027 & 0.0036 & 0.0043 \\
	\hline
\end{tabular}
	\caption{Wasserstein bound estimates and confidence intervals for the variance component model, Numerical Example \ref{numex:varcomp2}.}
	\label{table:civarcompgalinjones}
\end{table}

The CRN bound indicates Wasserstein distance of less than 0.0099 by iteration 9 ($W_{||\cdot||_1,1}(\mathcal{L}(X_{9}),\pi)\\ \leq 0.00359$). The bound for \cite{Jones2004SufficientBF} indicates total variation distance of less than 0.00999 by iteration $3415$ ($||\mathcal{L}(X_{3415})-\pi||_{TV}\leq 0.00999$) (See Table 3 of \cite{Jones2004SufficientBF}). Note that the CRN bound is measured in Wasserstein distance while the bound in \cite{Jones2004SufficientBF} is measured in total variation distance.
The large disparity between the DnM and the CRN bound is consistent with the conclusions of \cite{dmlim} that say that the convergence rate
bounds based on single-step DnM tend to be overly conservative.

Finally we compare our convergence rates with informal methods in Section \ref{subsec:varcomp2}. The traceplot (Figure \ref{fig:traceplotvarcompJonesHobert}) suggests that convergence of $\mu$, $V$, and $W$ is nearly immediate. The Gelman-Rubin statistic is evaluated at different iterations in Table \ref{table:GRvarcompGH}. By iteration 10, the Gelman-Rubin statistic is close to or less than 1.05 for $V,W$, and $\mu$.

\end{numexmp1}

\begin{numexmp}\label{numex:varcomp3}

We are given the yield of dyestuff from 6 batches of 5 yield measurements in grams (ie. $I=6$ and $J=5$). This data was published in Davies \cite{dyestuff}. The data can be structured as a variance component model where $\theta_i$ represents the mean yield for batch $i$.

We are going to compare our Wasserstein bound to the total variation bound from \cite{simconv}. \cite{simconv} generates an upper bound on the total variation distance between a Markov chain and its corresponding stationary distribution using a modified version of the popular DnM condition stated in \cite{rose} (see Theorem 12 of \cite{rose} and Proposition 1 of \cite{simconv}) and estimates the required DnM parameters ($\Lambda, \lambda, \epsilon$) using simulation. The hyperparameters for the DnM bound are defined in Table \ref{table:varcompnum3}. 

\begin{table}
	\centering
	\begin{tabular}{|c|c|c|c|c|c|c|}
		\hline
		\textbf{Hyperparameter} & \textbf{$a_1$} & \textbf{$b_1$} & \textbf{$a_2$} & \textbf{$b_2$} & \textbf{$a_3$} & \textbf{$b_3$} \\
		\hline
		\textbf{From} \cite{simconv} & 0.5 & 1 & 1 & 0 & 0 & $10^{12}$ \\
		\hline
		\textbf{This paper} & 0.5 & 1 & 1 & 1 & 0 & $10^{12}$ \\
		\hline
	\end{tabular}
	\caption{Hyperparameters used in Real Data Example \ref{numex:varcomp3} and \cite{simconv} (See their Section 4). The hyperparameters follow the same notation as in Equation \ref{eq:par}. }
	\label{table:varcompnum3}
\end{table}
 
In our comparison, we will first generate an upper bound on the Wasserstein distance between the Markov chain and the corresponding stationary distribution. Next, we will generate an upper bound on the total variation distance based on the Wasserstein distance using Proposition 24 of \cite{wasmeth}.
%\begin{table}
%	\centering
%\begin{tabular}{|c|c|c|c|c|c|c|}
%	\hline
%	\textbf{Hyperparameter} & \textbf{$a_1$} & \textbf{$b_1$} & \textbf{$a_2$} & \textbf{$b_2$} & \textbf{$a_3$} & \textbf{$b_3$} \\
%	\hline
%	\textbf{From} \cite{simconv}& 0.5 & 1.0 & 0.0 & 0.0 & 0.0 & $10^{12}$ \\
%	\hline
%	\textbf{This paper} & 0.5 & 1.00001 & 0.00001 & 1.00001 & 0.0 & $10^{12}$  \\
%	\hline
%\end{tabular}
%\caption{Comparison of the hyperparameters used in Real Data Example \ref{numex:varcomp2} and \cite{simconv}. The hyperparameters are slightly different to take into account limitations from Lemma \ref{lem:regKbound2}, most notably that $b_1,b_2>1$. The hyperparameters follow the same notation as in equation \ref{eq:par}.  }
%\label{table:varcompnum1}
%\end{table}

Proposition 24 of \cite{wasmeth} will be used to bound the total variation distance given the Wasserstein distance. Note that in \cite{wasmeth}, the prior on $\mu\sim 1$ is flat. Since in this example $\mu\sim N(0,10^{12})$, we assume that they are the same.

\begin{prop}[Proposition 24 of \cite{wasmeth}]\label{prop:wasmeth}
	Let $X_n$ be a variance component Gibbs sampler generated according to the priors of $V\sim IG(a_1,b_1), W\sim IG(a_2, b_2)$, and $\mu\propto 1$. Assume that $I\geq 2$. Then for each $n\geq 1$ and $X_0=(\vec{\theta_0},V_0,W_0, \mu_0)\in \mathbb{R}^{I}\times \mathbb{R}_+\times \mathbb{R}_+\times \mathbb{R}$,
	
	\begin{equation}
		||\mathcal{L}(X_n)-\pi||_{TV} \leq K_{TV} W_{||\cdot||_1,1}(\mathcal{L}(X_{n-1}),\pi)
	\end{equation}
	where $K_{TV}=(c_1+c_2) J^{3/2}I$ and 
	\begin{itemize}
		\item $c_1=\frac{2}{I}\left(\frac{I}{2}+a_1\right)\left(1+\sqrt{\frac{2}{b_1}}\frac{1}{I}+\frac{1}{2b_1 I^2}\right)^{I/2+a_1-1}\left(\sqrt{\frac{2}{b_1}}+\frac{1}{b_1I}\right)$
		\item $c_2=\frac{2}{IJ^{3/2}}\left(\frac{IJ}{2}+a_2\right)\left(1+\frac{2}{\sqrt{b_2}IJ}+\frac{1}{b_2I^2J^2}\right)^{IJ/2+a_2-1}\left(2\sqrt{\frac{J}{b_2}}+\frac{2}{b_2\sqrt{J}I}\right)$
	\end{itemize}
\end{prop} 
We bound the whole chain in Proposition \ref{prop:wasmeth} $(\vec{\theta_n},V_n,W_n, \mu_n)$ rather than the marginal chain $(\vec{\theta_n}, \mu_n)$ as presented in Proposition 24 of \cite{wasmeth} due to de-initialization properties \cite{roberts:rosenthal:2001}. See section 5.1 of \cite{wasmeth} for more details.

Applying \texttt{R} simulation to Lemma \ref{lem:regKbound2}, we calculate $K= 0.00198$. The following holds, 
$$W_{||\cdot||_1,1}(\mathcal{L}(X_n),\pi)\leq 0.00198 E\left[(||X_{n}-X'_{n}||_1)^{2} \right]^{\frac{1}{2}}$$

Applying Proposition \ref{prop:wasmeth}, $K_{TV}=611.4339$ and so,
$$||\mathcal{L}(X_n)-\pi||_{TV} \leq 1.21338 E\left[(||X_{n}-X'_{n}||_1)^{2} \right]^{\frac{1}{2}}$$
%Using Proposition \ref{prop:wasmeth} notation, $cJ^{3/2}I=778.167$ and so an upper bound on the total variation distance between the Markov chain and its corresponding stationary distribution is as follows,
%\begin{align*}
%	||\mathcal{L}(X_n)-\pi||_{TV}%&\leq 563.178\times 0.1228 \lim_{M\to \infty}\frac{1}{M}\sum_{i=1}^M ||X_{n,i}-X'_{n,i}||_1 \\
%	& \leq 5.7185\times 10^{21} \lim_{M\to \infty}\frac{1}{M}\sum_{i=1}^M ||X_{n,i}-X'_{n,i}||_2
%\end{align*}

The initial value of $X_0$ is $\vec{\theta}_0=$ the initial values defined in Equation 9 of \cite{simconv}, $V_0\sim IG(a1,b1)$, $W_0\sim IG(a2,b2)$ and $\mu_0=\bar{Y}$. The initial value of $X'_0$ is distributed according to Equation \ref{eqn:nu}. 

We simulated $X_{n}$ and $X'_{n}$ 500 times ($M=500$) using CRN. Figure \ref{fig:vcmcrnvsdnm} graphs the estimated upper bound on the total variation distance between the Markov chain and its corresponding stationary distribution. Table \ref{table:civarcomprosecowles} provides the total variation bound estimates and the $99\%$ confidence intervals using Lemma \ref{lem:ci}.

%\begin{figure}
%	\centering
%	\begin{minipage}{0.45\textwidth}
%		\centering
%		\title{Upper bound on $W_{||\cdot||_1,1}(\mathcal{L}(X_n),\pi)$}
%		\includegraphics[width=0.99\linewidth]{images_crn/vcm_crn_wass}
%		\caption{The thick black line represents the upper bound on the Wasserstein distance using CRN simulation. The light blue area represents the quantiles. The vertical axis is a log scale. The differences appear to be significantly right skewed.}
%		\label{fig:vcmcrn}
%	\end{minipage}\hfill
%	\begin{minipage}{0.45\textwidth}
%		\centering
%		\title{Comparison of bounds on $||\mathcal{L}(X_n)-\pi||_{TV}$}
%		\includegraphics[width=0.99\linewidth]{images_crn/vcm_crnvsdnm}
%		\caption{Comparison of an upper bound on the total variation distance using CRN simulation and a DnM bound. The x-axis begins at $471$ when the CRN bound first hits below the DnM bound.}
%		\label{fig:vcmcrnvsdnm}
%	\end{minipage}
%	\caption{Upper bounds on distribution measures for the Gibbs sampler for a variance component model (Real Data Example \ref{numex:varcomp2}) using CRN simulation. The CRN estimate uses the mean of 2000 simulations with initial value 
%		$X_0$ satisfying equation (9) in \cite{simconv} and $X_n=(V_n,W_n,\mu_n, \vec{\theta}_n)$.}
%\end{figure}
The CRN bound indicates total variation distance of less than 0.01 by iteration 461 ($||\mathcal{L}(X_{461})-\pi||_{TV}\leq 0.0037$). The DnM bound indicates total variation distance of less than 0.01 by iteration $98,750$ ($||\mathcal{L}(X_{98,750})-\pi||_{TV}\leq 0.0072$). The two bounds are compared in Figure \ref{fig:vcmcrnvsdnm} and Table \ref{table:civarcomprosecowles}. 
The large disparity between the DnM and the CRN bound is consistent with the conclusions of \cite{dmlim} that say that the convergence rate
bounds based on single-step DnM tend to be overly conservative. See Figure \ref{fig:vcmcrnvsdnm} for a graphical comparison between the two bounds.

\begin{figure}
	\centering
	\title{Comparison of bounds on $||\mathcal{L}(X_n)-\pi||_{TV}$}
	\includegraphics[width=0.7\linewidth]{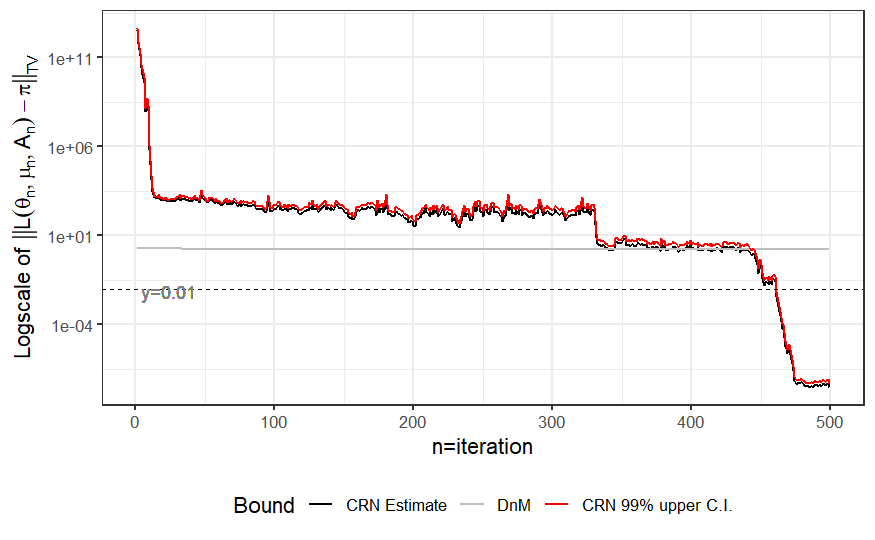}
	\caption{Comparison of total variation distance upper bounds on the Gibbs sampler for a variance component model (Real Data Example \ref{numex:varcomp3}) using CRN simulation and a DnM bound. The CRN estimate uses the mean of 500 simulations. The initial values of %$0.1228 \frac{1}{2000}\sum_{i=1}^{2000} ||X_{n,i}-X'_{n,i}||_1$ where $X_n=(V_n,W_n,\mu_n, \vec{\theta}_n)$ 
		$X_0$ satisfy Equation 9 in \cite{simconv}. }
	\label{fig:vcmcrnvsdnm}
\end{figure}

	\begin{table}
	\centering
\begin{tabular}{|c|c|c|c|c|c|}
	\hline
	$n$ & $\frac{1}{500}\sum_{m=1}^{500}(||X_{n,m}-X'_{n,m}||_1)^{2}$ & \textbf{$99\%$ lower C.I.} & \textbf{CRN} & \textbf{$99\%$ upper C.I.} & \textbf{DnM} \\
	\hline
	10 & 3.1983e+11 & 0 & 686202.5 &1210249 & 1.7888 \\
	\hline
	50 & 574231.3 & 413.0225 & 919.4749 & 1232.997 & 1.7821 \\
	\hline
	100 & 61787.44 & 0 & 301.6105 & 483.7103 & 1.7737 \\
	\hline
	250 & 10979.74 & 0 & 127.1430 & 231.2048 & 1.7490 \\
	\hline
	461 & 9.0992e-06 & 0 & \textbf{0.0037} & 0.0067 & 1.7149 \\
	\hline
\end{tabular}
	\caption{Total variation bound estimates and confidence intervals for the variance component model, Real Data Example \ref{numex:varcomp3}. Comparison of the estimated total variation bound using CRN with the total variation bound generated from \cite{simconv}, which was calculated using DnM.}
	\label{table:civarcomprosecowles}
\end{table}

Finally we compare our convergence rates with informal methods in Section \ref{subsec:varcomp3}. The traceplot (Figure \ref{fig:traceplotvarcompCowlesRose}) suggests that convergence of $\mu$, $V$, and $W$ to stationarity occurs after the first few iterations (the variance of $\mu$ seems to slightly increase with $n$, however). The Gelman-Rubin statistic is evaluated at different iterations in Table \ref{table:GRvarcompCowlesRose}. By iteration 250, the Gelman-Rubin statistic is close to or less than 1.05 for $V,W$, $\mu$, and $\theta_1$.
\end{numexmp}

\section{Gibbs sampler for a Bayesian linear regression model }
\label{sec-example}
Consider a Gibbs sampler for the Bayesian linear regression model with semi-conjugate priors (see Chapter 5 of \cite{bayemethfin}) for which we will apply Theorem \ref{thm:summary} to simulate convergence bounds on the Wasserstein distance. For this particular example, we can also provide an upper bound on the total variation distance given the Wasserstein distance using similar calculations as \cite{sixta}. % p. 77 of Svetzolar

\begin{example}[Bayesian linear regression model]\label{ex:bayesianreggibbsindprior}
	Suppose that we have the following observed data $Y\in \mathbb{R}^k$ and $X\in \mathbb{R}^{k\times q}$ where
	$$Y | \beta, \sigma^2 \sim N_{k}(X\beta, \sigma^2I_k)$$
	for unknown parameters $\beta\in \mathbb{R}^q, \sigma^2\in \mathbb{R}_+$. Suppose that we apply the prior distributions on the unknown parameters,
	\begin{align*}
		&\beta \sim N_q(\beta_0, \Sigma_{\beta}) 
		&\sigma^2\sim IG\left(\frac{\upsilon_0}{2},\frac{\upsilon_0 c_0^2}{2}\right).
	\end{align*}
	The joint posterior density function of $\beta,\sigma^2|Y, X$ is proportional to the following equation,
	\begin{align}
		&g(\beta,\sigma^2) \\
		&=\frac{1}{(\sigma^2)^{(k+\upsilon_0)/2+1}}\exp\left(-\frac{1}{2\sigma^2} (Y-X\beta)^T(Y-X\beta) -\frac{1}{2}(\beta-\beta_0)^T\Sigma_{\beta}^{-1} (\beta-\beta_0) -\frac{\upsilon_0 c_0^2}{2\sigma^2}\right).
	\end{align}		
\end{example}

The Gibbs sampler on the Bayesian linear regression model is based on the conditional posterior distributions of $\beta_n, \sigma^2_n$ and is defined as follows initialized at $\beta_0, \sigma^2_0$:
\begin{enumerate}
	\item $\beta_n|\sigma_{n-1}^2,Y, X \sim N_q(\tilde{\beta}_{\sigma^2_{n-1}}, V_{\beta, \sigma^2_{n-1}})$
	\item $\sigma^2_n|\beta_n,Y, X \sim IG\left(\frac{k+\upsilon_0}{2},\frac{1}{2}\left[\upsilon_0 c_0^2+(Y-X\beta_n)^T(Y-X\beta_n)\right]\right)$
\end{enumerate}
where 
\begin{align*}
	&V_{\beta,\sigma^2_{n-1}}= \left(\frac{1}{\sigma^2_{n-1}}X^TX+\Sigma_{\beta}^{-1}\right)^{-1}, &\tilde{\beta}_{\sigma^2_{n-1}} = V_{\beta,\sigma^2_{n-1}}\left(\frac{1}{\sigma^2_{n-1}} X^T Y +\Sigma_{\beta}^{-1}\beta_0\right).
\end{align*}

Replacing $\beta_n=\tilde{\beta}_{\sigma^2_{n-1}} + V_{\beta, \sigma^2_{n-1}}^{1/2}Z_n$, where $Z_n\sim N_q(0,I_q)$ into the equation for $\sigma^2_n$ and with $IG_n\sim IG(\frac{k+\upsilon_0}{2},1)$, we get that
\begin{equation}\label{eqn:sigma_linereg}
	\sigma^2_n|\sigma^2_{n-1}, Y, X = \left[\frac{\upsilon_0 c_0^2}{2}+\frac{(X\tilde{\beta}_{\sigma^2_{n-1}} -Y +X V_{\beta,\sigma^{2}_{n-1}}^{1/2}Z_n)^T(X\tilde{\beta}_{\sigma^2_{n-1}} -Y +X V_{\beta,\sigma^{2}_{n-1}}^{1/2}Z_n)}{2}\right]IG_n
\end{equation} 
where $(Z_n, IG_n)_n$ are independent for all $n$.

Although the Markov chain may be high-dimensional in $\beta_n$, special properties of this Gibbs sampler allow us to upper bound the total variation distance between the law of two joint Markov chains, $\mathcal{L}(\sigma^2_n,\beta_n)$ and $\mathcal{L}(\sigma^{2}_{\infty},\beta_{\infty})=\pi$, in terms of the $L^1$ distance between $\sigma^2_n$ and $\sigma^{2}_{\infty}$ only.
\begin{lemma}\label{lem:regTVbound}
	%Let $\pi$ denote the corresponding stationary distribution.
	The total variation distance between $\mathcal{L}(\beta_n,\sigma^2_n)$ and the corresponding stationary distribution, $\pi$, can be bounded by the expected distance as follows, 
	\begin{align*}
		\left\lVert \mathcal{L}(\beta_{n}, \sigma^{2}_{n})-\pi \right\rVert_{\text{TV}} &\leq \frac{(k + \upsilon_0)^2}{2 \upsilon_0 c_0^2} E[|\sigma^{2}_{n-1}-\sigma^{2}_\infty|]
	\end{align*}
	%where the expectation is with respect to coupling using CRN $(Z_n, G_n)_{n}$ for $\sigma^2_{n}$ and $\sigma^{2}_{\infty} = \sigma^{'2}_{n}$ such that $\sigma^2_{0} \sim \mu$ and $\sigma^{'2}_{0} \sim \pi$ are independent. 
	The proof is in section \ref{pf:regTVbound}.
\end{lemma}

Lemma \ref{lem:regTVbound} shows that total variation distance can ignore the difference between $\beta_n$ and $\beta_n'$. The Wasserstein distance, on the other hand, must take into account the distances between both $\beta_n, \beta_n'$ and $\sigma^2_n, \sigma^{2'}_n$ as shown below. 

\begin{lemma}\label{lem:regKbound}
	Let $(\beta_n, \sigma_n^2)_{n\geq 1}$ and $(\beta'_n, \sigma_n^{'2})_{n\geq 1}$ be two copies of the Markov chain with $\sigma^{'2}_0 \sim IG(\alpha', \beta')$ and  $\beta'_0 \sim N_q(\beta_0,\Sigma_{\beta})$ where $\alpha'=(k+\upsilon_0)/2-2$ and $\beta'=\upsilon_0 c_0^2/2$. Fix $p\in[1,\infty)$ and l%. Assume $E[d((\sigma^2_n,\beta_n),(\sigma^{2'}_n,\beta'_n) )^p]<\infty$ for $n\in \mathbb{N}$. L
	et $\pi$ be the corresponding stationary distribution. 
	Then for $r>p$ 
	\begin{align*}
		W_{d,p}(\mathcal{L}(\beta_n, \sigma_n^2),\pi)
		\leq \left(\frac{C}{L}E[d((\beta_n, \sigma_n^2),(\beta'_n, \sigma_n'^2))^r]\right)^{1/r}
	\end{align*}
	Where
	\begin{itemize}
		\item  $C=\frac{16 \Gamma((k+\upsilon_0)/2-2)(2\pi)^{q/2}\det(\Sigma_{\beta})^{1/2}}
		{\Gamma((k+\upsilon_0)/2)((Y-X\hat{\beta})^T(Y-X\hat{\beta}))^2 \left(\upsilon_0 c^2_0/2\right)^{(k+\upsilon_0)/2-2}e^{2}}$
		\item $L=\int_{B}
		\frac{e^{-\frac{1}{2}(\beta-\beta_0)^T\Sigma_{\beta}^{-1} (\beta-\beta_0)}}
		{(\upsilon_0 c_0^2/2 +	 (Y-X\beta)^T(Y-X\beta)/2)^{(k+\upsilon_0)/2}}
		\mathrm{d}(\beta)$ and 
		$B\subset \mathbb{R}^q\times \mathbb{R}_+$ is a bounded set.
	\end{itemize} 
%	\begin{align*}
%		W_{d,p}(\mathcal{L}(\beta_n, \sigma^{2}_{n}), 
%		\pi)
%		\leq \left(\frac{(2\pi)^{q/2}\det(\Sigma_{\beta})^{1/2}}{\int_{B}g(\beta,\sigma^2) d(\beta\times\sigma^2)}\frac{\Gamma(\alpha')}{\beta'^{\alpha'}}
%		E[d((\beta_n,\sigma^{2}_{n}),(\beta_{n}', \sigma^{2'}_{n}))^p]\right)^{1/p}
%	\end{align*}
%	Where 
%	$B\subset \mathcal{R}^q\times \mathcal{R}_+$ is a bounded set.
	
	The proof is in Section \ref{pf:regKbound}
\end{lemma}

Convergence diagnostics for variations of the Bayesian linear regression Gibbs sampler can be found in \cite{sixta, raj, bayesregdootika, bayesregbacklund, lineregnongauss, bayesautoreg, linregbiswas}. However, we could not find any explicit bounds on this version of the Gibbs sampler for a Bayesian linear regression (ie. the bounds were applied to more simpler examples or the analysis did not provide an explicit bound). Using example \ref{ex:bayesianreggibbsindprior} notation, the papers \cite{sixta, raj, bayesregdootika} assume that $\Sigma_{\beta}=\sigma^2 \Sigma$ where $\Sigma$ is known. That is, the variance of $Y$ and the variance of $\beta$ are proportionally the same. Putting the results of \cite{sixta} and \cite{raj} together, it is shown that the geometric convergence rate that is generated when using one-shot coupling is the actual geometric convergence rate (see Theorem 3.1 of \cite{raj} compared with Theorem 4.4 of \cite{sixta}). The actual geometric convergence rate is calculated using the spectral gap in \cite{raj}. The papers (\cite{bayesregbacklund, lineregnongauss}) assume that the error of $Y|\beta,\sigma^2$ are non-Gaussian (to take into account outliers) and provide sufficient conditions for geometric ergodicity with respect to $k$ (dimension of $Y$) and $q$ (dimension of $\beta$). \cite{bayesautoreg} tells us that the Bayesian regression Gibbs sampler with semi-conjugate priors is geometrically ergodic regardless of the size of $k$ compared to $q$. Finally, \cite{linregbiswas} generates upper bounds in total variation between two copies of a linear regression Gibbs sampler model. The model in \cite{linregbiswas} adds slightly more complexity to example \ref{ex:bayesianreggibbsindprior} by taking $\Sigma_{\beta}$ as random. Similar to our results, the chains converge quickly.

Using Lemma \ref{lem:regKbound} we show how an upper bound on the convergence rate in Wasserstein distance can be simulated in Real Data Example \ref{ex:bayesianreggibbsindprior}. We further provide an estimate of the upper bound in total variation using Lemma \ref{lem:regTVbound}.

\begin{numexmp}\label{numex:linreg}
	Suppose that we are interested in evaluating the carbohydrate consumption (Y) by age, relative weight, and protein consumption (X) for twenty male insulin-dependent diabetics. For more information on this example, see Section 6.3.1 of \cite{glm}.
	
	We want to find the estimated upper bound on the Wasserstein and total variation distance to stationarity for a Bayesian linear regression Gibbs sampler. In this case, there are 20
	observed values and 4 parameters ($k = 20, q=4$). We set the priors to $\beta_0=\vec{0}$, $\Sigma_{\beta}=I_4$, $\upsilon_0=6$, $c_0^2 = 140$. The initial values of our Markov chains are $\sigma^{2}_0=100$, $\beta_0 \sim N_4(0, I_4)$, $\sigma^{'2}_0 \sim IG(11, 420)$ and  $\beta'_0 \sim N_4(0,I_4)$. We can apply Lemma \ref{lem:regKbound} as follows, where $K=1 025 971$ and $r=5$.% since Remark \ref{rem:linreg} shows that $E[||(\sigma^2_n,\beta_n)-(\sigma^{'2}_n,\beta'_n)||_1]<\infty$ for $n\in\mathbb{N}$. Thus, 
	
	$$W_{||\cdot||_1,1}(\mathcal{L}(\sigma^2_n,\beta_n), \pi)\leq 15.93 E[ ||(\sigma^2_{n},\beta_{n})-(\sigma^{2'}_{n},\beta_{n})||_1^5]^{1/5}$$ holds almost surely.
	
	By Lemma \ref{lem:regTVbound} the total variation distance is bounded above by $\frac{(k+\upsilon_0)^2}{2\upsilon_0 c_0^2}=0.40238$ times the expected distance between $\sigma^2_n$ and $\sigma^{2}_{\infty}$ and so,
	\begin{align*}
		||\mathcal{L}(\sigma^{2}_n, \beta_n)-\pi||_{TV} &\leq 6.41 E\left[||(\sigma^2_{n},\beta_{n})-(\sigma^{2'}_{n},\beta_{n})||_1^5\right]^{1/5}
	\end{align*}
	
	Using CRN simulation, we simulated $||(\sigma^2_{n},\beta_{n})-(\sigma^{2'}_{n},\beta_{n})||_1$ ten thousand times ($M=10 000$) over 100 iterations ($N=100$). Figure \ref{fig:simdiff} graphs the upper bound on the total variation distance. 
	
	\begin{figure}
		\centering
		\title{Upper bound on $||\mathcal{L}(\sigma^2_n,\beta_n)- \pi||_{TV}$}
		\includegraphics[width=1\linewidth]{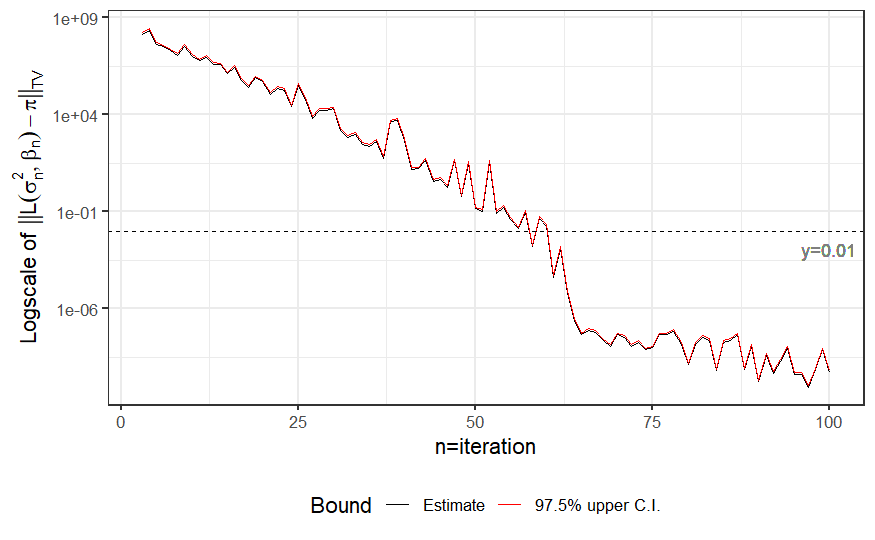}
		\caption{Total variation distance upper bound on the Gibbs sampler for a linear regression model (Real Data Example \ref{numex:linreg}) using CRN. The estimate uses the mean of 10000 simulations with initial values $\sigma^{2}_{0}=100$ and $\beta_0\sim N(0, I_q)$. The vertical axis is log-scaled and $r=5$. 
		}
		\label{fig:simdiff}
	\end{figure}
	
	By iteration 58, the total variation distance first hits below 0.01, $||\mathcal{L}(\sigma^2_{58},\beta_{58})- \pi||_{TV}\leq 0.0016$. In comparison, the Gelman-Rubin statistic first hits below 1.05 at iteration 49 (see Figure \ref{fig:linreg_gr}) and the traceplot suggests convergence after the first few iterations (see Figure \ref{fig:linreg_tracepot}).
	
	%		Figure \ref{fig:hist32iter} graphs the histogram of $||(\sigma^2_{10,i},\beta_{10,i})-(\sigma^{2'}_{10,i},\beta'_{10,i})||_1$. 
	%		\begin{figure}
		%			\centering
		%			\includegraphics[width=0.7\linewidth]{images_crn/hist_10iter}
		%			\caption{The black line represents $\frac{1}{1000}\sum_{i=1}^{1000}||(\sigma^2_{10,i},\beta_{10,i})-(\sigma^{2'}_{10,i},\beta'_{25,i})||_1$. The following histogram the mean simulation of $||(\sigma^2_{25,i},\beta_{25,i})-(\sigma^{2'}_{25,i},\beta'_{25,i})||_1$. The vertical line and sample mean difference is $\frac{1}{1000}\sum_{k=1}^{1000} ||(\sigma^2_{25,i},\beta_{25,i})-(\sigma^{2'}_{25,i},\beta'_{25,i})||_1 = 0.00212$ }
		%			\label{fig:hist32iter}
		%		\end{figure}
%	At iteration 9, the total variation first hits below 0.01, $||\mathcal{L}(\sigma^{2}_{9},\beta_9)-\pi||_{TV} \leq 0.00425$. 
\end{numexmp}

\section{Acknowledgement}

We thank the referees for their many excellent comments and suggestions. This research was partially funded by the National Science and Engineering Research Council of Canada (NSERC).

\section{Appendix}\label{sec:appendix}

\subsection{Convergence diagnostics}
\subsubsection{Real Data Example \ref{numex:stein}}\label{subsec:convdiagstein}

\begin{figure}[H]
	\centering
	\title{Gelman-Rubin statistic}
	\includegraphics[width=1\linewidth, height=7cm]{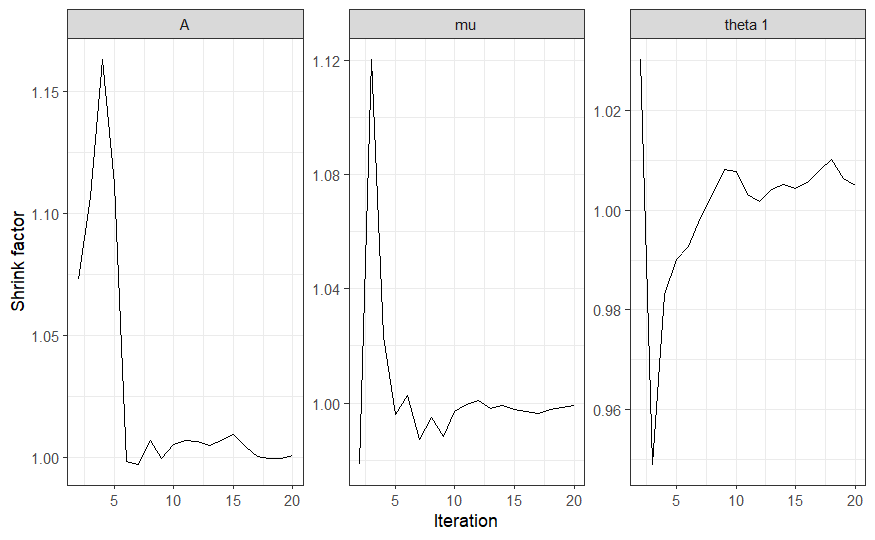}
	
	\title{Traceplot for iterations $1\leq n\leq 10$}
	\includegraphics[width=1\linewidth, height=7cm]{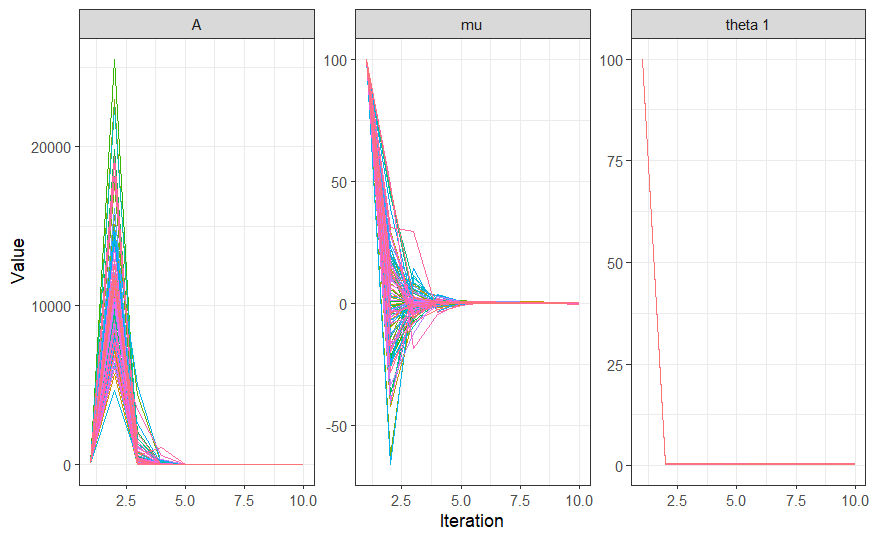}
	
%	\title{Traceplot for iterations $11\leq n\leq 20$}
%	\includegraphics[width=1\linewidth, height=7cm]{images_crn/traceplot_11-20_stein}
	\caption{Gelman-Rubin statistic and traceplot for Real Data Example \ref{numex:stein}}
	\label{fig:traceplotstein}
\end{figure}

\subsubsection{Numerical Example \ref{numex:varcomp2}}\label{subsec:varcomp2}

\begin{table}[H]
	\centering
	%\title{\textbf{List of benchmarked bounds and their properties}}
	\begin{longtable}{|c|c|c|c|}
		\hline
		\textbf{Iteration} & \textbf{V} & \textbf{W} & \textbf{$\mu$} \\
		\hline
		10 & 1.064556 & 1.046249 & 1.062693 \\
		\hline
		20 & 1.026606 & 1.021977 & 1.034042 \\
		\hline
		30 & 1.017409 & 1.013104 & 1.021825 \\
		\hline
		40 & 1.013610 & 1.011070 & 1.014756 \\
		\hline
		50 & 1.010736 & 1.008834 & 1.011396 \\
		\hline
		60 & 1.009296 & 1.007351 & 1.010150 \\
		\hline
	\end{longtable}
	\caption{Gelman-Rubin statistic at various iterations for Real Data Example \ref{numex:varcomp2}. The range of $\mu$ is -2 to 0.}
	\label{table:GRvarcompGH}
\end{table}

\begin{figure}[H]
	\centering
	\title{Traceplot}
	\includegraphics[width=1\linewidth, height=7cm]{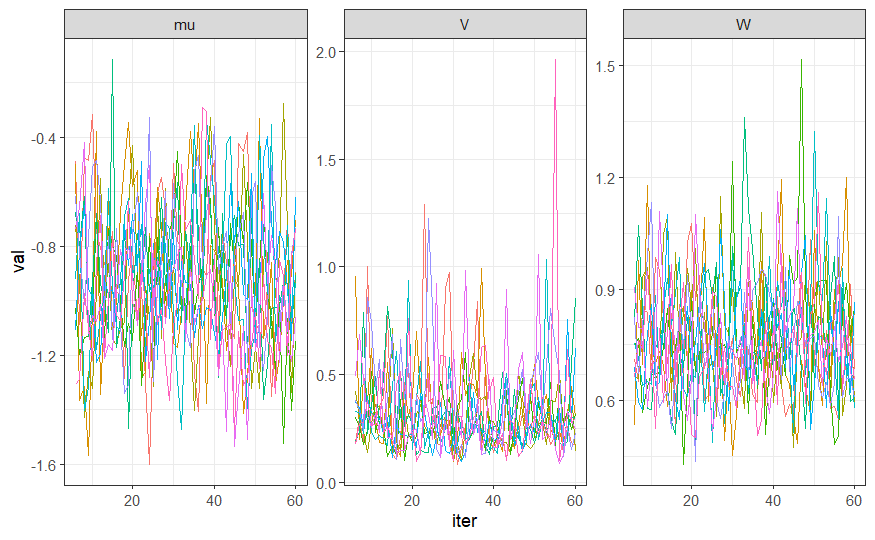}
	\caption{Traceplot for $\mu$, $V$, and $W$ for Real Data Example \ref{numex:varcomp2}}
	\label{fig:traceplotvarcompJonesHobert}
\end{figure}

\subsubsection{Real Data Example \ref{numex:varcomp3}}\label{subsec:varcomp3}

\begin{table}[H]
	\centering
	%\title{\textbf{List of benchmarked bounds and their properties}}
\begin{longtable}{|c|c|c|c|c|}
	\hline
	$n$ & $V$ & $W$ & $\mu$ & $\theta_1$ \\
	\hline
	50 & 1.186270 & 2.052273 & 2.984139 & 2.792903 \\
	\hline
	100 & 1.095662 & 1.591515 & 1.597947 & 1.587649 \\
	\hline
	150 & 1.083923 & 1.454056 & 1.241374 & 1.215334 \\
	\hline
	200 & 1.062822 & 1.208700 & 1.086050 & 1.097018 \\
	\hline
	250 & 1.054723 & 1.074812 & 1.038007 & 1.056482 \\
	\hline
\end{longtable}
	\caption{Gelman-Rubin statistic at various iterations for Real Data Example \ref{numex:varcomp3}. The range of $V$ is $(0.5,3.5)$, $W$ is $1000, 20000$, and  $\mu,\theta_1$ is $(1400,1600)$. Note that because the range of initial values to calculate the Gelman-Rubin statistic is larger than the initial values in our bound, the convergence rates cannot be blindly compared. Rather, the Gelman-Rubin statistic indicates that the chains move relatively slowly.}
	\label{table:GRvarcompCowlesRose}
\end{table}

\begin{figure}[H]
	\centering
	\title{Traceplot}
	\includegraphics[width=1\linewidth, height=7cm]{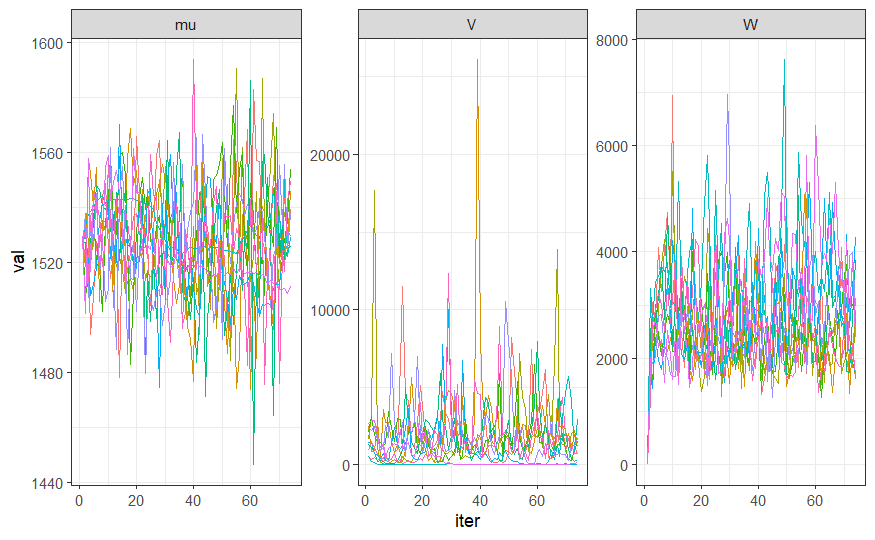}
	\caption{Traceplot for $\mu$, $V$, and $W$ for Real Data Example \ref{numex:varcomp3}}
	\label{fig:traceplotvarcompCowlesRose}
\end{figure}

\subsubsection{Real Data Example \ref{numex:linreg}}

\begin{figure}[H]
		\centering
		\title{Gelman-Rubin statistic}
		\includegraphics[width=1\linewidth, height=7cm]{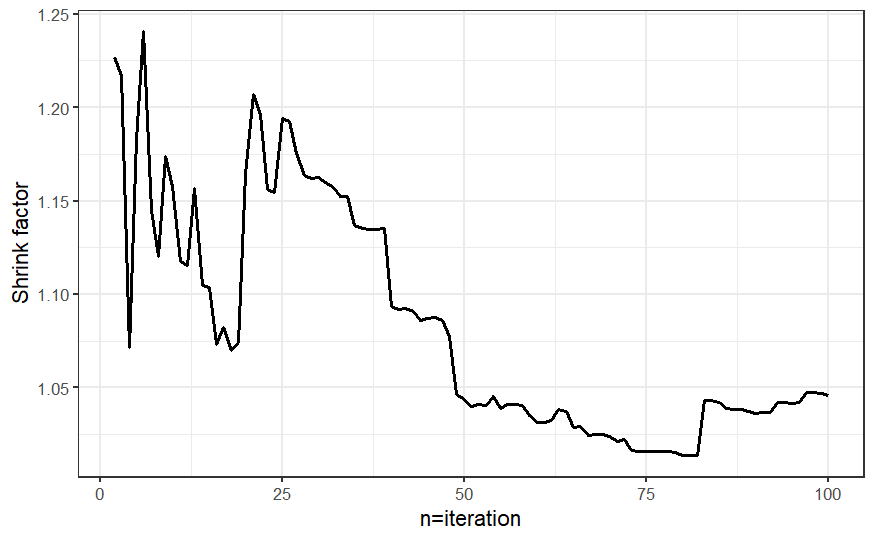}
		\caption{Gelman-Rubin statistic of $\sigma^2_n$.
		}
		\label{fig:linreg_gr}
\end{figure}
\begin{figure}[H]
		\includegraphics[width=1\linewidth, height=7cm] {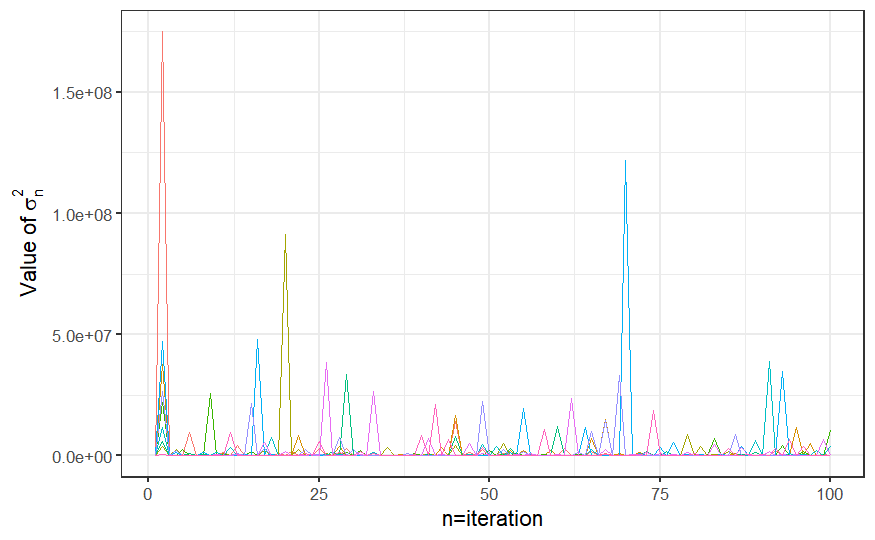}
		\caption{Traceplot of 10 simulations of $\sigma^2_n$ where 
			$\sigma^{2}_{0}=100$. %The thick black line represents the upper bound on the TV distance on the Gibbs sampler for a Bayesian linear regression model. The estimate use the mean of 1000 simulations and initial value $\sigma^{2}_{0}=100$. The light blue area represents the simulation quantiles. 
		}
		\label{fig:linreg_tracepot}
\end{figure}
%\begin{figure}[H]
%	\centering
%	\begin{minipage}{0.49\textwidth}
%		\centering
%		\title{Gelman-Rubin statistic}
%		\includegraphics[width=1\linewidth]{images_crn/linreg_gr}
%		\caption{The black line represents the Gelman-Rubin statistic for $\sigma^2_n$ at each iteration $n$. Based on the Gelman-Rubin statistic, convergence occurs at iteration 9.
%		}
%		\label{fig:linreg_gr}
%	\end{minipage}
%	\begin{minipage}{0.49\textwidth}
%		\centering
%		\title{Traceplot of $\sigma^2_n$}
%		\includegraphics[width=1\linewidth]{images_crn/linreg_traceplot}
%		\caption{The graph shows 10 simulations of $\sigma^2_n$ where 
%			$\sigma^{2}_{0}=100$. Based on the traceplot, convergence is nearly instantaneous. %The thick black line represents the upper bound on the TV distance on the Gibbs sampler for a Bayesian linear regression model. The estimate use the mean of 1000 simulations and initial value $\sigma^{2}_{0}=100$. The light blue area represents the simulation quantiles. 
%		}
%		\label{fig:linreg_tracepot}
%	\end{minipage}\hfill
%	%\caption{Comparison of the upper bound generated on the total variation distance generated by CRN and the Gelman-Rubin Diagnostic}
%\end{figure}

\subsection{Proof of Theorem \ref{thm:1}}

\begin{proof}[Proof of Theorem \ref{thm:1}]\label{pf:thm1original}
	
	Let $X_n, Y_n$ be two copies of the Markov chain such that $X_n=k_{\Theta_n}(X_0)$ and $Y_n=k_{\Theta'_n}(Y_0)$ (see Section \ref{sec:background} for more details on the notation). 
	If $X_{\infty}\sim \pi$, then 
	$k_{\Theta'_n}(X_{\infty})\sim \pi$. 
	
	First we show that $E[d(X_n,X_{\infty})^r]=E\left[d(X_n,Y_n)^r \frac{\mathrm{d}\pi(Y_0)}{\mathrm{d}\nu(Y_0)} \right]$ where $\mathrm{d}\pi(Y_0)/\mathrm{d}\nu(Y_0)=f_{\pi}(Y_0)/f_{\nu}(Y_0)$ and $\mathrm{d}\pi/\mathrm{d}\nu$ is the Radon-Nikodym derivative.
	\begin{align}\label{eqn:condchangevar}
		&E[d(X_n,X_{\infty})^r]\\
		&=E[E[d(k_{\Theta_n}(X_0),k_{\Theta'_n}(X_{\infty}))^r \mid X_0,\Theta_n, \Theta'_n]]\\
		&= E\left[\int_{\mathcal{X}} d(k_{\Theta_n}(X_0),k_{\Theta'_n}(y))^r \mathrm{d}\pi(y)\right] \\
		&= E\left[\int_{\mathcal{X}} d(k_{\Theta_n}(X_0),k_{\Theta'_n}(y))^r \frac{\mathrm{d}\pi(y)}{\mathrm{d}\nu(y)}\mathrm{d}\nu(y)\right] \\
		&=E\left[E\left[d(k_{\Theta_n}(X_0),k_{\Theta'_n}(Y_0))^r \frac{\mathrm{d}\pi(Y_0)}{\mathrm{d}\nu(Y_0)} \bigm\vert X_0,\Theta_n, \Theta'_n\right]\right]\\
		&=E\left[d(X_n,Y_n)^r \frac{\mathrm{d}\pi(Y_0)}{\mathrm{d}\nu(Y_0)} \right]
	\end{align}
	
	By Holder's inequality, for $s>1$ and $K=E\left[\left(\frac{\mathrm{d}\pi(Y_0)}{\mathrm{d}\nu(Y_0)}\right)^{\frac{s}{s-1}}\right]^{\frac{s-1}{s}}$.
	\begin{align}\label{eqn:expchangevar}
		E[d(X_n,X_{\infty})^r]=E\left[d(X_n,Y_n)^r \frac{\mathrm{d}\pi(Y_0)}{\mathrm{d}\nu(Y_0)} \right]
		\leq KE\left[d(X_n,Y_n)^{rs} \right]^{\frac{1}{s}}
	\end{align}
	Thus, the $L^r$-Wasserstein distance is bounded above as follows,
	\begin{align}\label{eq:thm1bullet2}
		W_{d,r}(\mathcal{L}(X_n),\pi)\leq E[d(X_n,X_{\infty})^r]^{\frac{1}{r}}
		\leq
		K^{\frac{1}{r}} E\left[d(X_n,Y_n)^{rs} \right]^{\frac{1}{rs}}
	\end{align}
		Finally by Remark 6.6 of \cite{villani}, for $p\leq r$, $	W_{d,p}(\mathcal{L}(X_n),\pi)\leq 	W_{d,r}(\mathcal{L}(X_n),\pi)$ and so, by Equation \ref{eq:thm1bullet2},
	\begin{align*}
		W_{d,p}(\mathcal{L}(X_n),\pi)\leq W_{d,r}(\mathcal{L}(X_n),\pi)
		\leq 	K^{\frac{1}{r}} E\left[d(X_n,Y_n)^{rs}\right]^{\frac{1}{rs}}
	\end{align*}
\end{proof}

\subsection{Alternative proof of Theorem \ref{thm:1} when $s=1$}

First we define essential supremum and infimum.

Let $f:\mathcal{X}\to \mathbb{R}$ be a function and $\lambda$ be the Lebesgue measure. The essential supremum is the smallest value $a\in \mathbb{R}$ such that $\lambda(x\mid f(x)<a)=1$. More formally, $ess \sup_{x} f(x)=\inf_{ a\in \mathbb{R} } \{ a \mid \lambda(x\mid f(x)>a)=  0 \}$. The essential infimum is likewise the largest value $a\in \mathbb{R}$ such that $\lambda(x\mid f(x)>a)=1$. Or, $ess \inf_{x} f(x)=\sup_{a\in \mathbb{R}} \{ a \mid \lambda(x\mid f(x)<a)=0\}$ \cite{esssup}.

	\subsubsection{Rejection rate and separation distance}
	
	 We will define rejection sampling and separation distance and show how they are related.
	
	\begin{definition}[Rejection sampling]
			Suppose that we have a target distribution $\pi$, which we want to sample from, but is difficult to do, and we have a proposal distribution $\nu$ that is easier to sample from. Suppose also that for the corresponding density functions $f_{\pi}(x)$ and $f_{\nu}(x)$,  $f_{\nu}(x)=0 \implies f_{\pi}(x)=0$, where $x\in \mathcal{X}$ and that $K =ess \sup_{x\in\mathcal{X}} f_{\pi}(x)/f_{\nu}(x)$. To generate a random variable $X_{\infty}, \mathcal{L}(X_{\infty})=\pi$ we do the following,
			\begin{enumerate}
					\item Sample $X\sim \nu$ and $U\sim Unif(0,1)$ independently.
					\item If $U\leq\frac{1}{K} \frac{f_{\pi}(X)}{f_{\nu}(X)}$ then accept $X$ as a draw from $\pi$. Otherwise reject $X$ and restart from step 1.
				\end{enumerate}
		\end{definition}

	\begin{definition}[Rejection sampler rejection rate]\label{def:K}
			For the rejection sampler algorithm defined above, denote the event $A=\{X\text{ is accepted as a draw from } \pi\}$. The rejection rate is $r(\pi, \nu)=1-P(A)=1-1/K$ where $K=ess \sup_{x\in \mathcal{X}} f_{\pi}(x)/f_{\nu}(x)$.
			\begin{proof}
					See Section 11.2.2 of \cite{ross}. 
				\end{proof}
		\end{definition}
	We further define the separation distance on the continuous state space $\mathcal{X}$ as follows. Separation distance was first defined in \cite{ALDOUS198769} for discrete state spaces. We use the definition of separation distance defined in \cite{sepdist} where the density functions are known.
	
	\begin{definition}[Separation distance (Remark 5 of \cite{sepdist})]
			Let $\nu$ and $\pi$ be two probability distributions defined on the same measurable space $(\mathcal{X},\mathcal{F})$ with densities $f_{\nu}$ and $f_{\pi}$ such that for $x\in \mathcal{X}$, $f_{\pi}(x)=0\implies f_{\nu}(x)=0$.
			The separation distance is
			$s(\pi, \nu)=ess \sup_{x}\left(1-\frac{f_{\nu}(x)}{f_{\pi}(x)}\right)$.
		\end{definition}
	It turns out that the separation distance and the rejection rate of the rejection sampler are the same.
	
	\begin{lemma}\label{lem:seprej}
			Let $\nu$ and $\pi$ be two distributions defined on the same measurable space $(\mathcal{X},\mathcal{F})$. If $f_{\nu}$ is the proposal density and $f_{\pi}$ is the target density in a rejection sampler, then the rejection rate equals the separation distance, $$s(\pi, \nu) = r(\pi, \nu)$$ 
		\end{lemma}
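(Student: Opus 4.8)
The plan is to reduce both quantities to a single expression, namely one minus the essential infimum of the likelihood ratio $\nu/\pi$. First I would record the standing compatibility assumptions: the rejection sampler requires $\pi \ll \nu$, whereas the separation distance $s(\pi,\nu)$ is defined under $\nu \ll \pi$, so whenever both objects are simultaneously meaningful $\pi$ and $\nu$ are mutually absolutely continuous. On their common support both densities are then strictly positive almost everywhere, so the ratio $g(x) \coloneqq \nu(x)/\pi(x)$ is positive almost everywhere, and all essential suprema and infima below may be taken with respect to this common null-set structure.

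Next I would assemble the two sides. By the preceding lemma, with the optimal constant $K = \operatorname*{ess\,sup}_x \pi(x)/\nu(x) = \operatorname*{ess\,sup}_x \big(1/g(x)\big)$, the rejection rate is $r(\pi,\nu) = 1 - 1/K$. On the other hand, by definition $s(\pi,\nu) = \operatorname*{ess\,sup}_x\big(1 - g(x)\big)$, and pulling the additive constant through the essential supremum --- which is immediate from the Section~\ref{sec:background} definitions, since $\operatorname*{ess\,sup}(c - h) = c - \operatorname*{ess\,inf}(h)$ --- gives $s(\pi,\nu) = 1 - \operatorname*{ess\,inf}_x g(x)$.

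The single substantive step is the reciprocal identity $\operatorname*{ess\,sup}_x \big(1/g(x)\big) = 1/\operatorname*{ess\,inf}_x g(x)$, valid for any measurable $g$ that is positive almost everywhere, with the conventions $1/0 = \infty$ and $1/\infty = 0$. This follows directly from the definitions of essential supremum and infimum: for $a > 0$ one has $\lambda\big(x \mid 1/g(x) > a\big) = \lambda\big(x \mid g(x) < 1/a\big)$, and taking the infimum over those $a$ for which this vanishes corresponds, via the substitution $b = 1/a$, to taking the reciprocal of the supremum of those $b > 0$ with $\lambda\big(x \mid g(x) < b\big) = 0$, i.e. to $1/\operatorname*{ess\,inf}_x g(x)$. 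Applying this with $g = \nu/\pi$ yields $K = 1/\operatorname*{ess\,inf}_x g(x)$, hence $1/K = \operatorname*{ess\,inf}_x g(x)$, and therefore $r(\pi,\nu) = 1 - 1/K = 1 - \operatorname*{ess\,inf}_x g(x) = s(\pi,\nu)$.

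I expect the only real obstacle to be bookkeeping rather than anything conceptual: ensuring the essential supremum appearing in the rejection-rate lemma is taken with respect to the same dominating measure as the one in the definition of separation distance (this is exactly where mutual absolute continuity of $\pi$ and $\nu$ is used to align null sets), and handling the degenerate case $\operatorname*{ess\,inf}_x \nu(x)/\pi(x) = 0$, where $K = \infty$ and both sides equal $1$. Once the null sets are aligned, the argument is a two-line manipulation of essential suprema.
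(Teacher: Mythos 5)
Your proof is correct and follows essentially the same route as the paper's: both reduce $s(\pi,\nu)=\operatorname*{ess\,sup}_x(1-\nu(x)/\pi(x))$ to $1-\operatorname*{ess\,inf}_x \nu(x)/\pi(x)$, apply the reciprocal identity to identify this with $1-1/K$ where $K=\operatorname*{ess\,sup}_x \pi(x)/\nu(x)$, and invoke $r(\pi,\nu)=1-1/K$. Your additional care about aligning null sets under mutual absolute continuity and about the degenerate case $K=\infty$ is a welcome refinement the paper leaves implicit.
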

	\begin{proof}
			\begin{align*}
					s(\pi, \nu) &=ess \sup_{x\in \mathcal{X}}\left(1-\frac{f_{\nu}(x)}{f_{\pi}(x)}\right) = 1-ess \inf_{x\in \mathcal{X}}\frac{f_{\nu}(x)}{f_{\pi}(x)} \\
					&= 1-\frac{1}{ess \sup_{x\in \mathcal{X}}\frac{f_{\pi}(x)}{f_{\nu}(x)}} 
					= 1-\frac{1}{K} 
					= r(\pi, \nu)
				\end{align*}
		\end{proof}
	
%	\begin{theorem}\label{thm:1}
%		Let $\{X_n\}_{n\geq 1}$ and $\{Y_n\}_{n\geq 1}$ be two copies of a Markov chain with initial distribution $\mathcal{L}(Y_0)= \nu$, $X_0\indep Y_0$ and stationary distribution $\pi$. Assume that, $\pi\ll \nu$, then for $p\in [1,\infty)$
%		\begin{equation}\label{eqn:thm1} 
%				W_{d,p}(\mathcal{L}(X_n),\pi)
%				\leq \left(K 
%				E[d(X_n,Y_n)^p]\right)^{1/p}
%			\end{equation}
%		where $K\geq K' = \frac{1}{1-s(\pi, \nu)} = \frac{1}{1-r(\pi, \nu)}$.
%	\end{theorem}
	
	\begin{proof}[Alternative proof of Theorem \ref{thm:1} for when $s_1=1,s_2=\infty$]\label{pf:thm1}
		In this proof, we use rejection sampling, which is similar to the proof method presented in \cite{crnconv}.
		Let $X_n, Y_n$ be two copies of the Markov chain and $A=\{\text{Accept $Y_0$ as a draw from }\pi\}$. Recall from Definition \ref{def:K} that $1-1/K=r(\pi, \nu)=1-P(A)$ and so $P(A)=1/K$. Note that $\mathcal{L}(Y_0|A)= \pi$ and so, $\mathcal{L}(Y_n|A)= \pi$.  
		\begin{align*}
			E[d(X_n,Y_n)^p] &=E[d(X_n,Y_n)^p\mid A]P(A) + E[d(X_n,Y_n)^p\mid A^c]P(A^c) \\
			&\geq E[d(X_n,X_{\infty})^p]P(A) \\
			&\geq E[d(X_n,X_{\infty})^p]\frac{1}{K} 
		\end{align*}
		Thus for $K=ess \sup_{x\in \mathcal{X}} f_{\pi}(x)/f_{\nu}(x)$, 
		\begin{align*}
			W_{d,p}(\mathcal{L}(X_n),\pi)
			\leq E[d(X_n,X_{\infty})^p] \leq K E[d(X_n,Y_n)^p]
		\end{align*}
%		Finally, by Lemma \ref{lem:seprej}, $1/K= 1-r(\pi, \nu)=1-s(\pi, \nu)$ and so,  
%		\begin{align*}
%			E[d(X_n,X_{\infty})^p] \leq K E[d(X_n,Y_n)^p]
%		\end{align*}
%		Finally, $K\geq K'$ and $E[d(X_n,X_{\infty})^p]\geq W_{d,p}(\mathcal{L}(X_n),\pi)^p$, so equation \ref{eqn:thm1} follows. %By symmetry the proof follows if we interchange $X_n$ and $Y_n$.
	\end{proof}

Further note that by combining Definition \ref{def:K} and Lemma \ref{lem:seprej} we can write $K$ as follows 
$$K= \frac{1}{1-s(\pi, \nu)} = \frac{1}{1-r(\pi, \nu)}$$

\subsection{Proof of lemma related to the Gibbs sampler for a James-Stein estimator}

\begin{proof}[Proof of Lemma \ref{lem:steinKbound}]\label{pf:steinKbound}
	First note that the density function of $\nu$ is as follows,
	\begin{align*}
		&f_{\nu}(\vec{\theta},\mu,A)\\ &=  \left(\prod_{i=1}^q\frac{1}{\sqrt{2\pi V}}e^{-(Y_i-\theta_i)^2/(2V)}  \right)  \left(\frac{\beta^{\alpha+(q-1)/2}}{\Gamma(\alpha+(q-1)/2)}A^{-\alpha-(q-1)/2-1}e^{-\beta/A}\right)
		\left(\frac{1}{\sqrt{2\pi A}}e^{-(\mu-\bar{\theta})/(2A)}\right)
	\end{align*} 
	
	Next, we want to calculate the supremum of the ratios where the function $g$ is defined as in Equation \ref{eqn:jsteinG}.
	\begin{align*}
		&\sup_{(\vec{\theta},\mu,A)\in \mathbb{R}^{q+1}\times \mathbb{R}_+}\frac{g(\vec{\theta},\mu,A)}{f_{\nu}(\vec{\theta},\mu,A)} \\
		%&= \sup_{(\vec{\theta},\mu,A)\in \mathbb{R}^{q+1}\times \mathbb{R}_+}\frac{\left(\prod_{i=1}^q\frac{1}{\sqrt{2\pi V}}e^{-(Y_i-\theta_i)^2/(2V)}  \right) \left(\prod_{i=1}^q\frac{1}{\sqrt{2\pi A}}e^{-(\theta_i-\mu)^2/(2A)} \right) \frac{\beta^{\alpha}}{\Gamma(\alpha)}A^{-\alpha-1}e^{-\beta/A}}{\left(\prod_{i=1}^q\frac{1}{\sqrt{2\pi V}}e^{-(Y_i-\theta_i)^2/(2V)}  \right) \left(\frac{1}{\sqrt{2\pi A}}e^{-(\mu-\bar{\theta})^2/(2A)}\right) \frac{\beta^{\alpha+(q-1)/2}}{\Gamma(\alpha+(q-1)/2)}A^{-\alpha-q/2+1/2-1}e^{-\beta/A}} \\
		&= \sup_{(\vec{\theta},\mu,A)\in \mathbb{R}^{q+1}\times \mathbb{R}_+}\frac{ \left(\prod_{i=1}^q\frac{1}{\sqrt{2\pi A}}e^{-(\theta_i-\mu)^2/(2A)} \right) \frac{\beta^{\alpha}}{\Gamma(\alpha)}A^{-\alpha-1}e^{-\beta/A}}{ \left(\frac{1}{\sqrt{2\pi A}}e^{-(\mu-\bar{\theta})^2/(2A)}\right) \frac{\beta^{\alpha+(q-1)/2}}{\Gamma(\alpha+(q-1)/2)}A^{-\alpha-q/2+1/2-1}e^{-\beta/A}} \\
		&=\frac{\Gamma(\alpha+(q-1)/2)}{\Gamma(\alpha)} \frac{1}{(2\pi\beta)^{(q-1)/2}} \sup_{(\vec{\theta},\mu,A)\in \mathbb{R}^{q+1}\times \mathbb{R}_+} e^{[(\bar{\theta}-\mu)^2-\sum_{i=1}^q(\theta_i-\mu)^2]/(2A)} \\
		&=\frac{\Gamma(\alpha+(q-1)/2)}{\Gamma(\alpha)(2\pi\beta)^{(q-1)/2}}
	\end{align*}
	
	%		\begin{align*}
		%		&\int_{\mathbb{R}^{q+1}\times \mathbb{R}_+}\frac{g(\vec{\theta},\mu,A)}{f_{\nu}(\vec{\theta},\mu,A)} \\
		%		&= \int_{\mathbb{R}^{q+1}\times \mathbb{R}_+}\frac{ \left(\prod_{i=1}^q\frac{1}{\sqrt{2\pi A}}e^{-(\theta_i-\mu)^2/(2A)} \right) \frac{\beta^{\alpha}}{\Gamma(\alpha)}A^{-\alpha-1}e^{-\beta/A}}{ \left(\frac{1}{\sqrt{2\pi A}}e^{-(\mu-\bar{\theta})^2/(2A)}\right) \frac{\beta^{\alpha+(q-1)/2}}{\Gamma(\alpha+(q-1)/2)}A^{-\alpha-q/2+1/2-1}e^{-\beta/A}} \\
		%		&= \int_{\mathbb{R}^{q+1}\times \mathbb{R}_+}
		%		\frac{ \left(\frac{1}{(2\pi A)^{q/2}}e^{-(\sum_{i=1}^q\theta_i^2-2q\bar{\theta}\mu+q\mu^2)/(2A)} \right) \frac{1}{\Gamma(\alpha)}}
		%		{ \left(\frac{1}{(2\pi A)^{1/2}}e^{-(\mu^2-2\mu\bar{\theta}+\bar{\theta}^2)/(2A)}\right) \frac{\beta^{(q-1)/2}}{\Gamma(\alpha+(q-1)/2)}A^{-q/2+1/2}} \\
		%		&= \int_{\mathbb{R}^{q+1}\times \mathbb{R}_+}
		%		\frac{1}{(2\pi A)^{(q-1)/2}} e^{(\mu^2-2\mu\bar{\theta}+\bar{\theta}^2)/(2A)-(\sum_{i=1}^q\theta_i^2-2q\bar{\theta}\mu+q\mu^2)/(2A)}
		%		\frac{ \frac{1}{\Gamma(\alpha)}}
		%		{  \frac{\beta^{(q-1)/2}}{\Gamma(\alpha+(q-1)/2)}A^{-q/2+1/2}} \\
		%	\end{align*}
	
	The last equality follows since $(\bar{\theta}-\mu)^2\leq \sum_{i=1}^q(\theta_i-\mu)^2$ by Jensen's inequality. 
	
	%First note that $E[||(\vec{\theta}_0,\mu_0,A_0)||_1]<\infty$. This implies that $E[||(\vec{\theta}_1,\mu_1,A_1)||_1]= E[E[||(\vec{\theta}_1,\mu_1,A_1)||_1\mid(\vec{\theta}_0,\mu_0,A_0)]]<\infty$ and so by induction, $E[||(\vec{\theta}_n,\mu_n,A_n)||_1]<\infty$ for any $n\in \mathbb{N}$. This implies that Theorem \ref{thm:summary} applies.
	Next we partially calculate $\int_{\mathbb{R}^{q+1}\times \mathbb{R}_+} g(\vec{\theta},\mu,A\mid Y) \mathrm{d}(\vec{\theta},\mu,A)$.
	\begin{align*}
		&\int_{\mathbb{R}^{q+1}\times \mathbb{R}_+} g(\vec{\theta},\mu,A\mid Y) \mathrm{d}(\vec{\theta},\mu,A)\\
		&= \int_{\mathbb{R}^{q+1}\times \mathbb{R}_+} \frac{\beta^{\alpha}}{\Gamma(\alpha)}A^{-\alpha-1}e^{-\beta/A}\frac{1}{(2\pi V)^{q/2}(2\pi A)^{q/2}}
		\left(\prod_{i=1}^qe^{-(\theta_i-Y_i)^2/(2V)-(\theta_i-\mu)^2/(2A)} \right)
		\mathrm{d}(\vec{\theta},\mu,A)  \\
		&= \int_{\mathbb{R}^{q+1}\times \mathbb{R}_+} \frac{\beta^{\alpha}}{\Gamma(\alpha)}A^{-\alpha-1}e^{-\beta/A}\frac{1}{(2\pi)^q (AV)^{q/2}}
		\left(\prod_{i=1}^qe^{-0.5[(\theta_i^2-2\theta_i Y_i+Y_i^2)/V+(\theta_i^2-2\theta_i\mu+\mu^2)/A]} \right)
		\mathrm{d}(\vec{\theta},\mu,A)  \\
		&= \int_{\mathbb{R}^{q+1}\times \mathbb{R}_+} \frac{\beta^{\alpha}}{\Gamma(\alpha)}A^{-\alpha-1}e^{-\beta/A}\frac{1}{(2\pi)^q (AV)^{q/2}}
		\left(\prod_{i=1}^qe^{-0.5\left[\theta_i^2(\frac{1}{V}+\frac{1}{A})-2\theta_i(\frac{Y_i}{V}+\frac{\mu}{A}) + (\frac{Y_i^2}{V}+\frac{\mu^2}{A}) \right]} \right)
		\mathrm{d}(\vec{\theta},\mu,A)  \\
		&= \int_{\mathbb{R}^{q+1}\times \mathbb{R}_+} \frac{\beta^{\alpha}}{\Gamma(\alpha)}A^{-\alpha-1}e^{-\beta/A}\frac{e^{-\frac{1}{2} \sum_{i=1}^q(\frac{Y_i^2}{V}+\frac{\mu^2}{A})} }{(2\pi)^q (AV)^{q/2}}
		\left(\prod_{i=1}^qe^{-0.5(\frac{1}{V}+\frac{1}{A})\left[\theta_i^2-2\theta_i\frac{\frac{Y_i}{V}+\frac{\mu}{A}}{\frac{1}{V}+\frac{1}{A}} \right]}
		\right)
		\mathrm{d}(\vec{\theta},\mu,A)  \\
		&= \int_{\mathbb{R}\times\mathbb{R}_+} \frac{\beta^{\alpha}}{\Gamma(\alpha)}A^{-\alpha-1}e^{-\beta/A}\frac{1}{(2\pi)^q (AV)^{q/2}}
		e^{-\frac{1}{2} \sum_{i=1}^q(\frac{Y_i^2}{V}+\frac{\mu^2}{A})} e^{\frac{1}{2}\sum_{i=1}^q\frac{(\frac{Y_i}{V}+\frac{\mu}{A})^2}{\frac{1}{V}+\frac{1}{A}}}
		\left(2\pi \frac{1}{(\frac{1}{V}+\frac{1}{A})}\right)^{q/2}
		\mathrm{d}(\mu,A)  \\
		&= \int_{\mathbb{R}\times\mathbb{R}_+} \frac{\beta^{\alpha}}{\Gamma(\alpha)}A^{-\alpha-1}e^{-\beta/A}\frac{1}{(2\pi)^{q} (AV)^{q/2}}
		e^{-\frac{1}{2} \sum_{i=1}^q(\frac{Y_i^2}{V}+\frac{\mu^2}{A})} e^{\frac{1}{2}\sum_{i=1}^q\frac{(\frac{Y_i}{V}+\frac{\mu}{A})^2}{\frac{1}{V}+\frac{1}{A}}}
		\left( 2\pi \frac{AV}{A+V}\right)^{q/2}
		\mathrm{d}(\mu,A)  \\
		&= \int_{\mathbb{R}\times\mathbb{R}_+} \frac{\beta^{\alpha}}{\Gamma(\alpha)}A^{-\alpha-1}e^{-\beta/A}\frac{1}{(2\pi)^{q/2}(A+V)^{q/2}}
		e^{-\frac{1}{2} \sum_{i=1}^q(\frac{Y_i^2}{V}+\frac{\mu^2}{A})} e^{\frac{1}{2}\sum_{i=1}^q\frac{(\frac{Y_i}{V}+\frac{\mu}{A})^2}{\frac{1}{V}+\frac{1}{A}}}
		\mathrm{d}(\mu,A)  \\
		&= \int_{\mathbb{R}\times\mathbb{R}_+} \frac{\beta^{\alpha}}{\Gamma(\alpha)}A^{-\alpha-1}e^{-\beta/A}\frac{1}{(2\pi)^{q/2}(A+V)^{q/2}}
		e^{\frac{1}{2}\sum_{i=1}^q\left[\frac{(\frac{Y_i}{V}+\frac{\mu}{A})^2}{\frac{1}{V}+\frac{1}{A}}-\frac{Y_i^2}{V}-\frac{\mu^2}{A}\right]} \mathrm{d}(\mu, A)
	\end{align*}

	Thus we get that %$\int_{\mathbb{R}^{q+1}\times \mathbb{R}_+} g(\vec{\theta},\mu,A\mid Y)d(\vec{\theta},\mu,A)=\frac{\beta^{\alpha}}{\Gamma(\alpha)(2\pi)^{q/2}} \int_{\mathbb{R}\times\mathbb{R}_+} \frac{A^{-\alpha-1}e^{-\beta/A}}{(A+V)^{q/2}}
	%e^{\frac{1}{2}\sum_{i=1}^q\left[\frac{(\frac{Y_i}{V}+\frac{\mu}{A})^2}{\frac{1}{V}+\frac{1}{A}}-\frac{Y_i^2}{V}-\frac{\mu^2}{A}\right]}d(\mu, A)$ and so,
	\begin{align*}
		&\frac{1}{\int_{\mathbb{R}^{q+1}\times \mathbb{R}_+} g(\vec{\theta},\mu,A\mid Y)}\mathrm{d}(\vec{\theta},\mu,A)\\
		&= \frac{\Gamma(\alpha)(2\pi)^{q/2}}{\beta^{\alpha}}
		\frac{1}{
			\int_{\mathbb{R}\times\mathbb{R}_+} \frac{A^{-\alpha-1}e^{-\beta/A}}{(A+V)^{q/2}}
			e^{\frac{1}{2}\sum_{i=1}^q\left[\frac{(\frac{Y_i}{V}+\frac{\mu}{A})^2}{\frac{1}{V}+\frac{1}{A}}-\frac{Y_i^2}{V}-\frac{\mu^2}{A}\right]}\mathrm{d}(\mu, A)}
	\end{align*}
	So 
	\begin{align*}
		K&= \frac{1}{\int_{B}g(\vec{\theta},\mu,A) \mathrm{d}(\vec{\theta},\mu,A)} \sup_{(\vec{\theta},\mu,A)\in \mathbb{R}^{q+1}\times \mathbb{R}_+}\frac{g(\vec{\theta},\mu,A)}{f_{\nu}(\vec{\theta},\mu,A)} \\
		&\leq \frac{\Gamma(\alpha)(2\pi)^{q/2}}{\beta^{\alpha}}
		\frac{1}{
			\int_{\mathbb{R}\times\mathbb{R}_+} \frac{A^{-\alpha-1}e^{-\beta/A}}{(A+V)^{q/2}}
			e^{\frac{1}{2}\sum_{i=1}^q\left[\frac{(\frac{Y_i}{V}+\frac{\mu}{A})^2}{\frac{1}{V}+\frac{1}{A}}-\frac{Y_i^2}{V}-\frac{\mu^2}{A}\right]}\mathrm{d}(\mu, A)} \frac{\Gamma(\alpha+(q-1)/2)}{\Gamma(\alpha)(2\pi\beta)^{(q-1)/2}}\\
		&= \frac{\Gamma(\alpha+(q-1)/2)(2\pi)^{1/2}}{\beta^{\alpha+(q-1)/2}}
		\frac{1}{
			\int_{\mathbb{R}\times\mathbb{R}_+} \frac{A^{-\alpha-1}e^{-\beta/A}}{(A+V)^{q/2}}
			e^{\frac{1}{2}\sum_{i=1}^q\left[\frac{(\frac{Y_i}{V}+\frac{\mu}{A})^2}{\frac{1}{V}+\frac{1}{A}}-\frac{Y_i^2}{V}-\frac{\mu^2}{A}\right]}\mathrm{d}(\mu, A)} \\
	\end{align*}
	
	Let $L=	\int_{\mathbb{R}\times\mathbb{R}_+} \frac{A^{-\alpha-1}e^{-\beta/A}}{(A+V)^{q/2}}
	e^{\frac{1}{2}\sum_{i=1}^q\left[\frac{(\frac{Y_i}{V}+\frac{\mu}{A})^2}{\frac{1}{V}+\frac{1}{A}}-\frac{Y_i^2}{V}-\frac{\mu^2}{A}\right]}\mathrm{d}(\mu, A)$.
	Then by Equation \ref{eqn:kupper} coupled with Remark \ref{rem:inftynorm}, for $r\geq p$
	\begin{align*}
		W_{d,p}(\mathcal{L}(X_n),\pi)
		\leq \left(\frac{\Gamma(\alpha+(q-1)/2)(2\pi)^{1/2}}{\beta^{\alpha+(q-1)/2}}
		\frac{1}{L}
		E[d(X_n,Y_n)^r]\right)^{1/r}
	\end{align*}
\end{proof}
\newpage
\subsection{Proof of lemma related to the Gibbs sampler for a the variance component model}
\begin{proof}[Proof of Lemma \ref{lem:regKbound2}]\label{pf:regKbound2}
	
	First note that the unnormalized joint density function of $(\vec{\theta}, V,W, \mu)$ is written as follows ($f_{\pi}(\vec{\theta}, V,W, \mu)=\frac{1}{L}g(\vec{\theta}, V,W, \mu)$ where $L$ is unknown),
	\begin{align*}
		g(\vec{\theta}, V,W, \mu) &=\left(\frac{b_1^{a_1}}{\Gamma(a_1)}e^{-b_1/V}V^{-a_1-1}\right) \left(\frac{b_2^{a_2}}{\Gamma(a_2)}e^{-b_2/W}W^{-a_2-1}\right) 
		\left(\frac{1}{\sqrt{2\pi b_3}}e^{-\frac{(\mu-a_3)^2}{2b_3}}\right) \\
		&\left( \prod_{i=1}^{I}\frac{1}{\sqrt{2\pi V}}e^{-\frac{(\theta_i-\mu)^2}{2V}}\right)
		\left( \prod_{i=1}^{I} \prod_{j=1}^{J_i} \frac{1}{\sqrt{2\pi W}}e^{-\frac{(Y_{ij}-\theta_i)^2}{2W}}\right)\\
		&=f_{IG}(V\mid a_1, b_1)f_{IG}(W\mid a_2, b_2)f_{N}(\mu\mid a_3,b_3) \left(\prod_{i=1}^{I}f_N(\theta_i|\mu,V) \right) \\
		&\left( \prod_{i=1}^{I} \prod_{j=1}^{J_i} f_N(Y_{ij}|\theta_i, W) \right)
	\end{align*}
	
	For the variance component model, we will use Theorem \ref{thm:1} and set $s=2$ (so, $\frac{s}{s-1}=2$). Denote $L$ as a lower bound on $\int_{\mathbb{R}^I \times \mathbb{R}_+ \times \mathbb{R}_+ \times \mathbb{R}}g(\vec{\theta}, V,W, \mu)\mathrm{d}(\vec{\theta}, V, W, \mu)$. We want to find 
	\begin{align*}
		K&= \left(\int_{\mathbb{R}^{I+1}\times \mathbb{R}_+^2}\left(\frac{f_{\pi}(\vec{\theta}, V, W, \mu)}{f_{\nu}(\vec{\theta}, V, W, \mu)}\right)^{2} f_{\nu}(\vec{\theta}, V, W, \mu) \mathrm{d}(\vec{\theta}, V, W, \mu)\right)^{1/2}\\
		&= \left(\int_{\mathbb{R}^{I+1}\times \mathbb{R}_+^2}\left(\frac{1}{L}\frac{g(\vec{\theta}, V, W, \mu)}{f_{\nu}(\vec{\theta}, V, W, \mu)}\right)^{2} f_{\nu}(\vec{\theta}, V, W, \mu) \mathrm{d}(\vec{\theta}, V, W, \mu)\right)^{1/2}\\
		&= \frac{1}{L} \left(\int_{\mathbb{R}^{I+1}\times \mathbb{R}_+^2}\frac{g(\vec{\theta}, V, W, \mu)^2}{f_{\nu}(\vec{\theta}, V, W, \mu)} \mathrm{d}(\vec{\theta}, V, W, \mu)\right)^{1/2}
	\end{align*}
	\textbf{Step 1: Find an upper bound on $\int_{\mathbb{R}^{I+1}\times \mathbb{R}_+^2}\frac{g(\vec{\theta}, V, W, \mu)^2}{f_{\nu}(\vec{\theta}, V, W, \mu)} \mathrm{d}(\vec{\theta}, V, W, \mu)$}
	
	To find an upper bound on the integral of the ratio, note the following ratios,
	
	\begin{align*}
		\frac{f_{IG}(V \mid a_1, b_1)^2}{f_{IG}(V \mid 2a_1+1, 2b_1-1)}&=\frac{\left(\frac{b_1^{a_1}}{\Gamma(a_1)}e^{-b_1/V}V^{-a_1-1}\right)^2}
		{\frac{(2b_1-1)^{2a_1+1}}{\Gamma(2a_1+1)}e^{-(2b_1-1)/V}V^{-(2a_1+1)-1}}\\
		&=\frac{\frac{b_1^{2a_1}}{\Gamma(a_1)^2}e^{-2b_1/V}V^{-2a_1-2}}
		{\frac{(2b_1-1)^{2a_1+1}}{\Gamma(2a_1+1)}e^{-(2b_1-1)/V}V^{-2a_1-2}}\\
		&=\frac{b_1^{2a_1}}{\Gamma(a_1)^2} \frac{\Gamma(2a_1+1)}{(2b_1-1)^{2a_1+1}}e^{-1/V}
	\end{align*}
	
	\begin{align*}
		\frac{f_{IG}(W \mid a_2, b_2)^2}{f_{IG}(W \mid 2a_2 +1, 2b_2)}&=\frac{\left(\frac{b_2^{a_2}}{\Gamma(a_2)}e^{-b_2/W}W^{-a_2-1}\right)^2}
		{\frac{(2b_2)^{2a_2 +1}}{\Gamma(2a_2 +1)}e^{-2b_2/W}W^{-2a_2-1-1}}\\
		&=\frac{\frac{b_2^{2a_2}}{\Gamma(a_2)^2}e^{-2b_2/W}W^{-2a_2-2}}{\frac{(2b_2)^{2a_2 +1}}{\Gamma(2a_2 +1)}e^{-2b_2/W}W^{-2a_2-2}}\\
		&=\frac{b_2^{2a_2}}{\Gamma(a_2)^2} 
		\frac{\Gamma(2a_2 +1)}{(2b_2)^{2a_2 +1}}
	\end{align*}
	
%	\begin{align*}
%		\frac{f_N(\mu \mid a_3,b_3)^2}{f_N(\mu \mid a_3,b_3)}&=\frac{\left(\frac{1}{\sqrt{2\pi b_3}}e^{-\frac{(\mu-a_3)^2}{b_3}}\right)^2}
%		{\frac{1}{\sqrt{2\pi b_3}}e^{-\frac{(\mu-a_3)^2}{2b_3}}}\\
%		&=\frac{\frac{1}{2\pi b_3} e^{-\frac{(\mu-a_3)^2}{b_3}}}
%		{\frac{1}{\sqrt{2\pi b_3}}e^{-\frac{(\mu-a_3)^2}{2b_3}}}\\
%		&=\frac{1}{\sqrt{2\pi b_3}} e^{-\frac{(\mu-a_3)^2}{2b_3}}
%	\end{align*}
	
	\begin{align*}
		\frac{\prod_{i=1}^{I} \prod_{j=1}^{J_i} f_{N}(Y_{ij}\mid \theta_i, W)^2}{\prod_{i=1}^I f_{N}(\theta_i \mid \bar{Y}_i, \frac{W}{2J_i})}
		&=\frac{ \prod_{i=1}^{I}\prod_{j=1}^{J_i} \frac{1}{2\pi W}e^{-\frac{2(Y_{ij}-\theta_i)^2}{2W}}}{\prod_{i=1}^I \frac{1}{\sqrt{2\pi \frac{W}{2J_i}}} e^{-\frac{2J_i (\bar{Y}_i-\theta_i)^2}{2W}}}\\
		&=\frac{ \prod_{i=1}^{I} \frac{1}{(2\pi W)^{J_i}}e^{-\frac{\sum_{j=1}^{J_i}(Y_{ij}-\theta_i)^2}{W}}}
		{\prod_{i=1}^I \frac{1}{\sqrt{\pi W/J_i}} e^{-\frac{J_i (\bar{Y}_i-\theta_i)^2}{W}}}\\
		&=\prod_{i=1}^I \frac{\frac{1}{(2\pi W)^{J_i}}e^{-\frac{\sum_{j=1}^{J_i}Y_{ij}^2-2Y_{ij}\theta_i+\theta_i^2}{W}}}
		{\frac{1}{(\pi W/J_i)^{1/2}} e^{-\frac{J_i (\bar{Y}_i)^2-2J_i\bar{Y}_i\theta_i+J_i\theta_i^2}{W}}}\\
%		&=\prod_{i=1}^I \frac{1}{(J_i)^{1/2}2^{J_i}(\pi W)^{J_i-1/2}} e^{-\frac{\sum_{j=1}^{J_i}Y_{ij}^2-2J_i\bar{Y}_i\theta_i+J_i\theta_i^2}{W}+ \frac{J_i (\bar{Y}_i)^2-2J_i\bar{Y}_i\theta_i+J_i\theta_i^2}{W}} \\
		&=\prod_{i=1}^I \frac{1}{(J_i)^{1/2}2^{J_i}(\pi W)^{J_i-1/2}} e^{\frac{J_i (\bar{Y}_i)^2-\sum_{j=1}^{J_i}Y_{ij}^2}{W}} \\
		&=\frac{1}{(\prod_{i=1}^I J_i)^{1/2}2^{\sum_{i=1}^I J_i}(\pi W)^{\sum_{i=1}^I J_i-I/2}} e^{\frac{\sum_{i=1}^I (J_i (\bar{Y}_i)^2-\sum_{j=1}^{J_i}Y_{ij}^2)}{W}}\\
		&=\frac{1}{(\prod_{i=1}^I J_i)^{1/2}2^{\sum_{i=1}^I J_i}(\pi W)^{T}} e^{\frac{-S}{W}}
		%		&=\frac{ \prod_{i=1}^{I} \frac{1}{(2\pi W)^{J_i/2}}e^{-\frac{\sum_{j=1}^{J_i} Y_{ij}^2-2 J_i\theta_i \bar{Y}_{i} +J_i\theta_i^2}{2W}}}{\prod_{i=1}^I \frac{1}{\sqrt{2\pi W/J_i}} e^{-\frac{J_i (\bar{Y}_i-\theta_i)^2}{2W}}}\\
		%		&=\frac{\prod_{i=1}^I J_i^{1/2}}{(2\pi W)^{\sum_{i=1}^I J_i/2 -I/2}} e^{(\sum_{i}^I \sum_{j=1}^{J_i}J_i\bar{Y}_i^2-Y_{ij}^2)/(2W)}
	\end{align*}
	where $S=\sum_{i=1}^I (\sum_{j=1}^{J_i}Y_{ij}^2-J_i (\bar{Y}_i)^2)$ and $T=\sum_{i=1}^I J_i-I/2$. 
	
	We evaluate the following integral, which will be used evaluating the integral of $\frac{g(\vec{\theta}, V, W, \mu)^2}{f_{\nu}(\vec{\theta}, V, W, \mu)}$
	
	\begin{align*}
		\int_{\mathbb{R}^{I}}\prod_{i=1}^{I}f_N(\theta_i \mid \mu, V)^2\mathrm{d}\vec{\theta}&= \prod_{i=1}^{I} \int_{\mathbb{R}} \left(\frac{1}{\sqrt{2\pi V}}e^{-\frac{(\theta_i-\mu)^2}{2V}}\right)^2\mathrm{d}\theta_i \\
		&= \frac{1}{(2\pi V)^I} \prod_{i=1}^{I} \int_{\mathbb{R}} e^{-\frac{(\theta_i-\mu)^2}{V}}\mathrm{d}\theta_i \\
		&= \frac{1}{(2\pi V)^I} \prod_{i=1}^{I} \int_{\mathbb{R}} e^{-\frac{(\theta_i-\mu)^2}{2(V/2)}}\mathrm{d}\theta_i \\
		&= \frac{1}{(2\pi V)^I} \prod_{i=1}^{I} \sqrt{2\pi (V/2)} \\
		&= \frac{1}{(2\pi V)^I} (\pi V)^{I/2} \\
		&= \frac{1}{2^I(\pi V)^{I/2}} \\
	\end{align*}
	Next, we put all of the ratios together.
	
	We first integrate with respect to $\vec{\theta}$.
	
	\begin{align*}
		&\int_{\mathbb{R}^{I+1}\times \mathbb{R}_+^2}\frac{g(\vec{\theta}, V, W, \mu)^2}{f_{\nu}(\vec{\theta}, V, W, \mu)}\mathrm{d}(\vec{\theta}, V, W, \mu)\\
		&=\int_{\mathbb{R}^{I+1}\times \mathbb{R}_+^2}\frac{\left(f_{IG}(V\mid a_1, b_1) f_{IG}(W\mid a_2, b_2) f_N(\mu \mid a_3, b_3)
			\left( \prod_{i=1}^{I}f_N(\theta_i \mid \mu, V)\right)\right)^2
		}{f_{IG}(V \mid 2a_1+1, 2b_1-1) f_{IG}(W \mid 2a_2 +1, 2b_2) f_{N}(\mu \mid a_3,b_3) }\\
		&\hspace{2cm}\frac{ \prod_{i=1}^{I} \prod_{j=1}^{J_i} f_{N}(Y_{ij}\mid \theta_i, W)^2}{\prod_{i=1}^I f_{N}(\theta_i \mid \bar{Y}_i, W/2J_i)}
		\mathrm{d} (\vec{\theta}, V, W, \mu)\\
		&= \int_{\mathbb{R}^{I+1}\times \mathbb{R}_+^2} \frac{b_1^{2a_1}}{\Gamma(a_1)^2} \frac{\Gamma(2a_1+1)}{(2b_1-1)^{2a_1+1}}e^{-1/V} \frac{b_2^{2a_2}}{\Gamma(a_2)^2} 
		\frac{\Gamma(2a_2 +1)}{(2b_2)^{2a_2 +1}} f_N(\mu \mid a_3, b_3) \prod_{i=1}^{I}f_N(\theta_i \mid \mu, V)^2 \\
		&\hspace{2cm} \frac{1}{(\prod_{i=1}^I J_i)^{1/2}2^{\sum_{i=1}^I J_i}(\pi W)^{T}} e^{\frac{-S}{W}}\mathrm{d} (\vec{\theta}, V, W, \mu)\\
		&= \frac{b_1^{2a_1}}{\Gamma(a_1)^2} \frac{\Gamma(2a_1+1)}{(2b_1-1)^{2a_1+1}}
		\frac{b_2^{2a_2}}{\Gamma(a_2)^2} 
		\frac{\Gamma(2a_2 +1)}{(2b_2)^{2a_2 +1}} 
		\int_{\mathbb{R}^{1}\times \mathbb{R}_+^2} \left(\int_{\mathbb{R}^{I}}\prod_{i=1}^{I}f_N(\theta_i \mid \mu, V)^2 \mathrm{d}\vec{\theta} \right)e^{-1/V} f_N(\mu \mid a_3, b_3) \\
		&\hspace{2cm} \frac{1}{(\prod_{i=1}^I J_i)^{1/2}2^{\sum_{i=1}^I J_i}(\pi W)^{T}} e^{\frac{-S}{W}}\mathrm{d} (V, W, \mu)
		\end{align*}
		
		Next we integrate with respect to $V,W,\mu$
		\begin{align*}
		&= \frac{b_1^{2a_1}}{\Gamma(a_1)^2} \frac{\Gamma(2a_1+1)}{(2b_1-1)^{2a_1+1}}
		\frac{b_2^{2a_2}}{\Gamma(a_2)^2} 
		\frac{\Gamma(2a_2 +1)}{(2b_2)^{2a_2 +1}} 
		\int_{\mathbb{R}^{1}\times \mathbb{R}_+^2} \frac{1}{2^I(\pi V)^{I/2}} e^{-1/V} f_N(\mu \mid a_3, b_3) \\
		&\hspace{2cm} \frac{1}{(\prod_{i=1}^I J_i)^{1/2}2^{\sum_{i=1}^I J_i}(\pi W)^{T}} e^{\frac{-S}{W}}\mathrm{d} (V, W, \mu)\\
%		&= \frac{b_1^{2a_1}}{\Gamma(a_1)^2} \frac{\Gamma(2a_1+1)}{(2b_1-1)^{2a_1+1}}
%		\frac{b_2^{2a_2}}{\Gamma(a_2)^2} 
%		\frac{\Gamma(2a_2 +1)}{(2b_2)^{2a_2 +1}}  \frac{1}{(\prod_{i=1}^I J_i)^{1/2}2^{I+\sum_{i=1}^I J_i}\pi^{T+I/2}} \\
%		&\hspace{2cm} \int_{\mathbb{R}^{1}\times \mathbb{R}_+^2} \frac{1}{V^{I/2}} e^{-1/V} f_N(\mu \mid a_3, b_3) \frac{1}{W^{T}} e^{\frac{-S}{W}}\mathrm{d} (V, W, \mu)\\
		&= \frac{b_1^{2a_1}}{\Gamma(a_1)^2} \frac{\Gamma(2a_1+1)}{(2b_1-1)^{2a_1+1}}
		\frac{b_2^{2a_2}}{\Gamma(a_2)^2} 
		\frac{\Gamma(2a_2 +1)}{(2b_2)^{2a_2 +1}}  \frac{1}{(\prod_{i=1}^I J_i)^{1/2}2^{I+\sum_{i=1}^I J_i}\pi^{T+I/2}} \\
		&\hspace{2cm} \int_{\mathbb{R}^{1}\times \mathbb{R}_+^2} \frac{1}{V^{I/2}} e^{-1/V} f_N(\mu \mid a_3, b_3) \frac{1}{W^{T}} e^{\frac{-S}{W}}\mathrm{d} (V, W, \mu)\\
		&= \frac{b_1^{2a_1}}{\Gamma(a_1)^2} \frac{\Gamma(2a_1+1)}{(2b_1-1)^{2a_1+1}}
		\frac{b_2^{2a_2}}{\Gamma(a_2)^2} 
		\frac{\Gamma(2a_2 +1)}{(2b_2)^{2a_2 +1}}  \frac{1}{(\prod_{i=1}^I J_i)^{1/2}2^{I+\sum_{i=1}^I J_i}\pi^{T+I/2}} \\
		&\hspace{2cm} \Gamma(I/2-1)\frac{\Gamma(T-1)}{S^{T-1}}\\
	\end{align*}

	To summarise, we get,
	\begin{align*}
		&C_2 = \int_{\mathbb{R}^{I+1}\times \mathbb{R}_+^2}\frac{g(\vec{\theta}, V, W, \mu)^2}{f_{\nu}(\vec{\theta}, V, W, \mu)}\mathrm{d} (\vec{\theta}, v, w, \mu)\\
	&=\frac{b_1^{2a_1}}{\Gamma(a_1)^2} \frac{\Gamma(2a_1+1)}{(2b_1-1)^{2a_1+1}}
	\frac{b_2^{2a_2}}{\Gamma(a_2)^2} 
	\frac{\Gamma(2a_2 +1)}{(2b_2)^{2a_2 +1}}  \frac{\Gamma(I/2-1)}{(\prod_{i=1}^I J_i)^{1/2}2^{I+\sum_{i=1}^I J_i}\pi^{T+I/2}} \frac{\Gamma(T-1)}{S^{T-1}} \\
	\end{align*}

	\textbf{Step 2: Find a lower bound on $\int_{\mathbb{R}^I \times \mathbb{R}_+ \times \mathbb{R}_+ \times \mathbb{R}}g(\vec{\theta}, v, w, \mu)\mathrm{d}(\vec{\theta},v, w, \mu)$}
	
	To find a lower bound over the integral we estimate the above integral over a closed set $B\subset \mathbb{R}^I \times \mathbb{R}_+ \times \mathbb{R}_+ \times \mathbb{R}$.

	\begin{align*}
		g(\vec{\theta}, V,W, \mu) &=\left(\frac{b_1^{a_1}}{\Gamma(a_1)}e^{-b_1/V}V^{-a_1-1}\right) \left(\frac{b_2^{a_2}}{\Gamma(a_2)}e^{-b_2/W}W^{-a_2-1}\right) 
		\left(\frac{1}{\sqrt{2\pi b_3}}e^{-\frac{(\mu-a_3)^2}{2b_3}}\right) \\
		&\left( \prod_{i=1}^{I}\frac{1}{\sqrt{2\pi V}}e^{-\frac{(\theta_i-\mu)^2}{2V}}\right)
		\left( \prod_{i=1}^{I} \prod_{j=1}^{J_i} \frac{1}{\sqrt{2\pi W}}e^{-\frac{(Y_{ij}-\theta_i)^2}{2W}}\right)\\
		&=\left(\frac{b_1^{a_1}}{\Gamma(a_1)}e^{-b_1/V}V^{-a_1-1}\right) \left(\frac{b_2^{a_2}}{\Gamma(a_2)}e^{-b_2/W}W^{-a_2-1}\right) 
		\left(\frac{1}{\sqrt{2\pi b_3}}e^{-\frac{(\mu-a_3)^2}{2b_3}}\right) \\
		&\left( \frac{1}{(2\pi V)^{I/2}}e^{-\frac{\sum_{i=1}^I(\theta_i-\mu)^2}{2V}}\right)
		\left(\frac{1}{(2\pi)^{\sum_{i=1}^I J_i/2} W^{\sum_{i=1}^I J_i/2}}e^{-\left(\sum_{i=1}^{I} \sum_{j=1}^{J_i} \frac{(Y_{ij}-\theta_i)^2}{2}\right)/W}\right)\\
		&=\left(\frac{b_1^{a_1}}{\Gamma(a_1)(2\pi)^{I/2}}e^{-\left(b_1+\frac{\sum_{i=1}^I(\theta_i-\mu)^2}{2}\right)/V}V^{-(a_1+I/2)-1}\right) \left(\frac{1}{\sqrt{2\pi b_3}}e^{-\frac{(\mu-a_3)^2}{2b_3}}\right)\\ &\left(\frac{b_2^{a_2}}{\Gamma(a_2)(2\pi)^{\sum_{i=1}^I J_i/2}}W^{-(a_2+\sum_{i=1}^I J_i/2)-1} e^{-\left(b_2+\sum_{i=1}^{I} \sum_{j=1}^{J_i} \frac{(Y_{ij}-\theta_i)^2}{2}\right)/W}\right)
	\end{align*}
	Define
	\begin{align*}
		&a_1^*=a_1+I/2 &b_1^*=b_1+\frac{\sum_{i=1}^I(\theta_i-\mu)^2}{2}\\
		&a_2^*=a_2+\sum_{i=1}^I J_i/2 
		&b_2^*=b_2+\sum_{i=1}^{I} \sum_{j=1}^{J_i} \frac{(Y_{ij}-\theta_i)^2}{2}
	\end{align*}
	
	Taking the integral of $g$ with respect to $V$ and $W$, we get,
	\begin{align*}
		&\int_{\mathbb{R}^I \times \mathbb{R}_+ \times \mathbb{R}_+ \times \mathbb{R}}g(\vec{\theta}, v, w, \mu)\mathrm{d}(\vec{\theta},v, w, \mu)\\
		&=\int_{\mathbb{R}^I \times \mathbb{R}_+ \times \mathbb{R}_+ \times \mathbb{R}} \left(\frac{b_1^{a_1}}{\Gamma(a_1)(2\pi)^{I/2}}e^{-b_1^*/V}V^{-a_1^*-1}\right) \left(\frac{1}{\sqrt{2\pi b_3}}e^{-\frac{(\mu-a_3)^2}{2b_3}}\right)\\ &\left(\frac{b_2^{a_2}}{\Gamma(a_2)(2\pi)^{\sum_{i=1}^I J_i/2}}W^{-a_2^*-1} e^{-b_2^*/W}\right) \mathrm{d}(\vec{\theta},v, w, \mu)\\
		&=\int_{\mathbb{R}^{I+1}} \left(\frac{\Gamma(a_1^*) b_1^{a_1}}{\Gamma(a_1)(2\pi)^{I/2}}\left(b_1^*\right)^{-a_1^*}\right) 
		\left(\frac{1}{\sqrt{2\pi b_3}}e^{-\frac{(\mu-a_3)^2}{2b_3}}\right)\\ 
		&\left(\frac{\Gamma(a_2^*)b_2^{a_2}}{\Gamma(a_2)(2\pi)^{\sum_{i=1}^I J_i/2}} \left(b_2^*\right)^{-a_2^*}\right) \mathrm{d}(\vec{\theta},\mu)
	\end{align*}
	
	The integral can then be defined as follows,
	\begin{align*}
		&\int_{\mathbb{R}^I \times \mathbb{R}_+ \times \mathbb{R}_+ \times \mathbb{R}}g(\vec{\theta}, v, w, \mu)\mathrm{d}(\vec{\theta},v, w, \mu)\\
		&=\frac{\Gamma(a_1^*) b_1^{a_1}}{\Gamma(a_1)} \frac{\Gamma(a_2^*)b_2^{a_2}}{\Gamma(a_2)}
		\frac{1}{(2\pi)^{\sum_{i=1}^I J_i/2+(I+1)/2}b_3^{1/2}} \int_{\mathbb{R}^{I+1}} (b_1^*)^{-a_1^*}
		e^{-\frac{(\mu-a_3)^2}{2b_3}}\left(b_2^*\right)^{-(a_2^*)} \mathrm{d}(\vec{\theta},\mu)\\
		&=C_1^{-1} \int_{\mathbb{R}^{I+1}} (b_1^*)^{-a_1^*}
		e^{-\frac{(\mu-a_3)^2}{2b_3}}\left(b_2^*\right)^{-(a_2^*)} \mathrm{d}(\vec{\theta},\mu)
	\end{align*}
	
	Where $C_1^{-1} = \frac{\Gamma(a_1^*) b_1^{a_1}}{\Gamma(a_1)} \frac{\Gamma(a_2^*)b_2^{a_2}}{\Gamma(a_2)}
	\frac{1}{(2\pi)^{\sum_{i=1}^I J_i/2+(I+1)/2}b_3^{1/2}}$
	
	Combining step 1 and step 2, we get,
	\begin{align*}
		&\frac{1}{\int_{B}g(\vec{\theta},V,W, \mu) \mathrm{d}(\vec{\theta},V,W, \mu)} \left(\int_{\mathbb{R}^{I+1}\times \mathbb{R}_+^2}\frac{g(\vec{\theta}, V, W, \mu)^2}{f_{\nu}(\vec{\theta}, V, W, \mu)}\mathrm{d}(\vec{\theta}, v, w, \mu)\right)^{1/2}\\
%		&=\frac{1}{L}\frac{\Gamma(a_1)}{\Gamma(a_1^*) b_1^{a_1}} \frac{\Gamma(a_2)}{\Gamma(a_2^*)b_2^{a_2}}
%		(2\pi)^{\sum_{i=1}^I J_i/2+(I+1)/2}b_3^{1/2} \times\\
%		&\left(\left(\frac{b_2}{b_2-1}\right)^{2a_2} \frac{1}{\prod_{i=1}^I J_i} \frac{\Gamma(T-1)}{(2-S)^{T-1}} \left(\frac{\Gamma(a_1-2)}{\Gamma(a_1)}
%		\frac{b_1^{a_1}}{(b_1-1)^{a_1-2}}\right)^2 \frac{\sqrt{b_3}\Gamma(I/2+3)}{2^{3(I+1)/2+T}\pi^{(I-1)/2+T}}\right)^{1/2}\\
		&=\frac{1}{L}C_1 C_2^{1/2}
	\end{align*}
	
	Where $L=\int_{\mathbb{R}^{I+1}} (b_1^*)^{-a_1^*}
	e^{-\frac{(\mu-a_3)^2}{2b_3}}\left(b_2^*\right)^{-(a_2^*)} \mathrm{d}(\vec{\theta},\mu)$
	
	$C_1 = \frac{\Gamma(a_1)}{\Gamma(a_1^*) b_1^{a_1}} \frac{\Gamma(a_2)}{\Gamma(a_2^*)b_2^{a_2}}
	(2\pi)^{\sum_{i=1}^I J_i/2+(I+1)/2}b_3^{1/2}$
	
	$C_2 = \frac{b_1^{2a_1}}{\Gamma(a_1)^2} \frac{\Gamma(2a_1+1)}{(2b_1-1)^{2a_1+1}}
	\frac{b_2^{2a_2}}{\Gamma(a_2)^2} 
	\frac{\Gamma(2a_2 +1)}{(2b_2)^{2a_2 +1}}  \frac{\Gamma(I/2-1)}{(\prod_{i=1}^I J_i)^{1/2}2^{I+\sum_{i=1}^I J_i}\pi^{T+I/2}} \frac{\Gamma(T-1)}{S^{T-1}}  $
\end{proof}

\subsection{Proof of lemmas related to the Gibbs sampler of a  Bayesian linear regression model}

\begin{proof}[Proof of Lemma \ref{lem:regTVbound}]\label{pf:regTVbound}
	It follows from the de-initializing property of the Markov chain \cite[Example 3]{roberts:rosenthal:2001} that
	\[
	\left\lVert \mathcal{L}(\beta_{n + 1}, \sigma^{2}_{n + 1})-\mathcal{L}(\beta'_{n + 1}, \sigma^{'2}_{n + 1}) \right\rVert_{\text{TV}}
	\le \left\lVert \mathcal{L}(\sigma^{2}_{n})-\mathcal{L}(\sigma^{'2}_{n}) \right\rVert_{\text{TV}}.
	\]
	Using CRN, let $G_n=G'_n$ where $G_n \sim \text{Gamma}(\alpha,1)$ $Z_n\sim N(0, I_q)$ are independent and denote 
	\begin{align*}
		&W_n=\frac{\upsilon_0 c_0^2}{2}+\frac{ \left\lVert X\tilde{\beta}_{\sigma^2_{n-1}} -Y +X V_{\sigma^{2}_{n-1}}^{1/2}Z_n \right\rVert^2 }{2}
		\\
		&W'_n=\frac{\upsilon_0 c_0^2}{2}+\frac{ \left\lVert X\tilde{\beta}_{\sigma^{'2}_{n-1}} -Y +X V_{\sigma^{'2}_{n-1}}^{1/2}Z_n \right\rVert^2 }{2},
	\end{align*}
	so that $\sigma^2_n | \sigma^2_{n-1} = W_n \frac{1}{G_n}$ and $\sigma^{'2}_n | \sigma^{'2}_{n-1} =W'_n \frac{1}{G'_n}$. 
	%In expectation, the difference between $\sigma^2_n$ and $\sigma^{'2}_n$ are close. 
	Denote $\Delta=W'_n-W_n$ and without loss of generality, assume $W'_n>W_n$. 
	Since $G_n\sim \text{Gamma}(\alpha,1)$ where $\alpha=\frac{k+\upsilon_0}{2}$, let $\pi_{1/G_n}$ denote the density of $1/G_n$ and similarly denote the density $\pi_{(1+\Delta/W_n)/G_n}$ for $(1+\Delta/W_n)/G_n$.
	So we have
	\begin{align*}
		\pi_{1/G_n}(x) &\propto x^{-\alpha-1}e^{-1/x}\\
		\pi_{(1+\Delta/W_n)/G_n}(x) &\propto \frac{1}{1+\Delta/W_n} \left(\frac{x}{1+\Delta/W_n}\right)^{-\alpha-1}e^{-(1+\Delta/W_n)/x}\\ 
		&\propto (1+\Delta/W_n)^{\alpha}x^{-\alpha-1}e^{-(1+\Delta/W_n)/x}.
	\end{align*}
	Using the coupling characterization of total variation
	\begin{align*}
		&||\mathcal{L}(\sigma^{2}_n)-\mathcal{L}(\sigma^{'2}_n)||_{TV}\\ &\leq E\left[ ||\mathcal{L}(W_n/G_n \mid Z_n, \sigma_{n-1}^2, \sigma^{'2}_{n-1})-\mathcal{L}(W'_n /G_n\mid Z_n, \sigma_{n-1}^2, \sigma^{'2}_{n-1})||_{TV} \right]\\
		&\hspace{10cm} \text{By Proposition 2.2 of \cite{sixta}}\\
		&= E\left[ ||\mathcal{L}(W_n/G_n \mid Z_n, \sigma_{n-1}^2, \sigma^{'2}_{n-1})-\mathcal{L}((W_n +\Delta) /G_n\mid Z_n, \sigma_{n-1}^2, \sigma^{'2}_{n-1})||_{TV}  \right] &\\
		&= E\left[ ||\mathcal{L}\left(\frac{1}{G_n}\mid Z_n, \sigma_{n-1}^2, \sigma^{'2}_{n-1}\right)-\mathcal{L}\left(\left(1 +\frac{\Delta}{W_n}\right) \frac{1}{G_n} \mid Z_n, \sigma_{n-1}^2, \sigma^{'2}_{n-1}\right)||_{TV}  \right]\\
		&\hspace{10cm}\text{By Proposition 2.1 of \cite{sixta}}\\
		&\leq E\left[ E\left[\sup_{x>0} \left\{ 1- \frac{\pi_{1/G_n}(x)}{\pi_{(1+\Delta/W_n)/G_n}(x)} \right\} \bigm| Z_n, \sigma_{n-1}^2, \sigma^{'2}_{n-1} \right]\right]\\
		&\hspace{10cm} \text{By Lemma 6.16 of of \cite{markovmixing}}
	\end{align*}
 This implies that, 
	\begin{align*}
		||\mathcal{L}(\sigma^{2}_n)-\mathcal{L}(\sigma^{'2}_n)||_{TV} &\leq E\left[\sup_{x>0} \left\{ 1- \frac{x^{-\alpha-1}e^{-1/x}}{(1+\Delta/W_n)^{\alpha}x^{-\alpha-1}e^{-(1+\Delta/W_n)/x}} \right\} \right] \\
		&= E\left[\sup_{x>0} \left\{ 1- \frac{e^{\Delta/W_n/x}}{(1+\Delta/W_n)^{\alpha}} \right\} \right] \\
		&= E\left[1- \frac{1}{(1+\Delta/W_n)^{\alpha}}\right].
	\end{align*}
	Define $f(x)=\frac{1}{x^{\alpha}}, f'(x)=-\alpha\frac{1}{x^{\alpha+1}}$. By the mean value theorem, $f(1+\Delta/W_n) = f(1)-\frac{\Delta}{W_n}\frac{\alpha}{\xi^{\alpha+1}}, \xi \in (1, 1+\Delta/W_n)$. So,
	$f(1+\Delta/W_n)\geq 1-\alpha\frac{\Delta}{W_n}$.
	Now, 
	\begin{align*}
		&||\mathcal{L}(\sigma^{2}_n)-\mathcal{L}(\sigma^{'2}_n)||_{TV}\\ 
		&\leq E\left[f(1)-f(1+\frac{\Delta}{W_n})\right] \\
		&\leq E\left[1-(1-\alpha\frac{\Delta}{W_n})\right] = E\left[\alpha\frac{\Delta}{W_n}\right] \\
		&\leq E\left[\frac{k+\upsilon_0}{2}\Delta \frac{2}{\upsilon_0 c_0^2}\right] &\text{since $\alpha= \frac{k+\upsilon_0}{2}$ and $W_n\geq \frac{\upsilon_0 c_0^2}{2}$} \\
		&= \frac{k+\upsilon_0}{\upsilon_0 c_0^2} E[|W_n-W'_n|]
		\\
		&= \frac{k+\upsilon_0}{\upsilon_0 c_0^2} E[|W_n-W'_n|] E[G_n^{-1}] E[G_n^{-1}]^{-1}
		\\
		&= \frac{k+\upsilon_0}{\upsilon_0 c_0^2} E[|W_n / G_n - W'_n / G_n|] E[G_n^{-1}]^{-1} &\text{by independence}
		\\
		&\le \frac{(k+\upsilon_0)^2}{ 2 \upsilon_0 c_0^2} E[|W_n / G_n - W'_n / G_n|] \\
		&= \frac{(k+\upsilon_0)^2}{ 2 \upsilon_0 c_0^2} E[|\sigma^{2}_n - \sigma^{'2}_n|].
	\end{align*}
\end{proof}

\begin{proof}[Proof of Lemma \ref{lem:regKbound}]\label{pf:regKbound}

	Define
	$\alpha'=(k+\upsilon)/2-2$ and $\beta'=\upsilon_0 c^2_0/2$. Note that
	
	$$f_{\nu}(\beta,\sigma^2)=\frac{\beta'^{\alpha'}}{\Gamma(\alpha')}\frac{1}{(\sigma^{2})^{\alpha'+1}}e^{-\beta'/\sigma^{2}}\times(2\pi)^{-q/2}\det(\Sigma_{\beta})^{-1/2}e^{-\frac{1}{2}(\beta-\beta_0)^T\Sigma^{-1}_{\beta}(\beta-\beta_0)}$$
This implies that,
	\begin{align*}
		&\sup_{(\beta, \sigma^{2})\in \mathbb{R}\times \mathbb{R}_+}\frac{g(\beta, \sigma^{2})}{f_{\nu}(\beta, \sigma^{2})}\\ 
		&= \frac{\frac{1}{(\sigma^2)^{(k+\upsilon_0)/2+1}}\exp\left(-\frac{1}{2\sigma^2} (Y-X\beta)^T(Y-X\beta) -\frac{1}{2}(\beta-\beta_0)^T\Sigma_{\beta}^{-1} (\beta-\beta_0) -\frac{\upsilon_0 c_0^2}{2\sigma^2}\right)}
		{\frac{\beta'^{\alpha'}}{\Gamma(\alpha')}\frac{1}{(\sigma^{2})^{(k+\upsilon)/2-2+1}}e^{-\upsilon_0 c^2_0/2\sigma^{2}}(2\pi)^{-q/2}\det(\Sigma_{\beta})^{-1/2}e^{-\frac{1}{2}(\beta-\beta_0)^T\Sigma^{-1}_{\beta}(\beta-\beta_0)}} \\
		&= \frac{(\sigma^2)^{-2}\exp\left(-\frac{1}{2\sigma^2} (Y-X\beta)^T(Y-X\beta) \right)}
		{\frac{\beta'^{\alpha'}}{\Gamma(\alpha')}(2\pi)^{-q/2}\det(\Sigma_{\beta})^{-1/2}} 
%		&\leq \frac{\Gamma(\alpha')}{\beta'^{\alpha'}} (2\pi)^{q/2}\det(\Sigma_{\beta})^{1/2}\exp\left(-\frac{1}{2\sigma^2} (Y-X\hat{\beta})^T(Y-X\hat{\beta}) \right)
	\end{align*}
		The numerator of the last equality is proportional to an inverse gamma density function with $\alpha^*=1$ and $\beta^*=\frac{1}{2} (Y-X\beta)^T(Y-X\beta)$. Thus, the mode occurs when $\sigma^{2*}=\frac{\beta^*}{\alpha^*+1}=\frac{(Y-X\beta)^T(Y-X\beta)}{4}$ and so,
		
		\begin{align*}
			&\sup_{(\beta, \sigma^{2})\in \mathbb{R}\times \mathbb{R}_+}\frac{g(\beta, \sigma^{2})}{f_{\nu}(\beta, \sigma^{2})}\\ 
			&\leq \frac{16}{((Y-X\beta)^T(Y-X\beta))^2} \frac{\exp\left(-\frac{4}{2(Y-X\beta)^T(Y-X\beta)} (Y-X\beta)^T(Y-X\beta) \right)}
			{\frac{\beta'^{\alpha'}}{\Gamma(\alpha')}(2\pi)^{-q/2}\det(\Sigma_{\beta})^{-1/2}} \\
			&= \frac{16}{((Y-X\beta)^T(Y-X\beta))^2} \frac{\Gamma(\alpha')(2\pi)^{q/2}\det(\Sigma_{\beta})^{1/2}}
			{\beta'^{\alpha'}e^{2}} \\
			&\leq \frac{16}{((Y-X\hat{\beta})^T(Y-X\hat{\beta}))^2} \frac{\Gamma(\alpha')(2\pi)^{q/2}\det(\Sigma_{\beta})^{1/2}}
			{\beta'^{\alpha'}e^{2}} \\
		\end{align*}
		
		Where $\hat{\beta}=(X^T X)^{-1}X^T Y$ is the maximum likelihood estimator.
	Next we find a lower bound on $\int_{B}g(\beta,\sigma^2) \mathrm{d}(\beta,\sigma^2)$
	
	\begin{align*}
		&\int_{\mathbb{R}^q\times \mathbb{R}_+}g(\beta,\sigma^2)\mathrm{d}(\beta,\sigma^2) \\
		&=\int_{\mathbb{R}^q\times \mathbb{R}_+}
		\frac{1}{(\sigma^2)^{(k+\upsilon_0)/2+1}}\\
		&\hspace{2cm}\exp\left(-\frac{1}{2\sigma^2} (Y-X\beta)^T(Y-X\beta) -\frac{1}{2}(\beta-\beta_0)^T\Sigma_{\beta}^{-1} (\beta-\beta_0) -\frac{\upsilon_0 c_0^2}{2\sigma^2}\right)\mathrm{d}(\beta,\sigma^2)\\
		%&=\int_{\mathbb{R}^q\times \mathbb{R}_+}
%		\frac{1}{(\sigma^2)^{(k+\upsilon_0)/2+1}}
%		e^{-(\upsilon_0 c_0^2/2 +	 (Y-X\beta)^T(Y-X\beta)/2)/\sigma^2}e^{-\frac{1}{2}(\beta-\beta_0)^T\Sigma_{\beta}^{-1} (\beta-\beta_0)}\mathrm{d}(\beta,\sigma^2)\\
		&=\int_{\mathbb{R}^q}\left(\int_{\mathbb{R}_+}
		\frac{1}{(\sigma^2)^{(k+\upsilon_0)/2+1}}
		e^{-(\upsilon_0 c_0^2/2 +	 (Y-X\beta)^T(Y-X\beta)/2)/\sigma^2}\mathrm{d}(\sigma^2)\right)e^{-\frac{1}{2}(\beta-\beta_0)^T\Sigma_{\beta}^{-1} (\beta-\beta_0)}\mathrm{d}(\beta)\\
		&=\int_{\mathbb{R}^q}
		\frac{\Gamma((k+\upsilon_0)/2)}{(\upsilon_0 c_0^2/2 +	 (Y-X\beta)^T(Y-X\beta)/2)^{(k+\upsilon_0)/2}}
		e^{-\frac{1}{2}(\beta-\beta_0)^T\Sigma_{\beta}^{-1} (\beta-\beta_0)}\mathrm{d}(\beta)\\
		&=\Gamma((k+\upsilon_0)/2) \int_{\mathbb{R}^q}
		\frac{e^{-\frac{1}{2}(\beta-\beta_0)^T\Sigma_{\beta}^{-1} (\beta-\beta_0)}}
		{(\upsilon_0 c_0^2/2 +	 (Y-X\beta)^T(Y-X\beta)/2)^{(k+\upsilon_0)/2}}
		\mathrm{d}(\beta)\\
		&\leq\Gamma((k+\upsilon_0)/2) \int_{B}
		\frac{e^{-\frac{1}{2}(\beta-\beta_0)^T\Sigma_{\beta}^{-1} (\beta-\beta_0)}}
		{(\upsilon_0 c_0^2/2 +	 (Y-X\beta)^T(Y-X\beta)/2)^{(k+\upsilon_0)/2}}
		\mathrm{d}(\beta)\\
	\end{align*}
	where $B\subset \mathbb{R}^q\times \mathbb{R}_+$ is a bounded set.

Thus we get that 
	\begin{align*}
		K&=\frac{1}{\int_{\mathbb{R}^q\times \mathbb{R}_+}g(\beta,\sigma^2)\mathrm{d}(\beta,\sigma^2)}\sup_{(\beta, \sigma^{2})\in \mathbb{R}\times \mathbb{R}_+}\frac{g(\beta, \sigma^{2})}{f_{\nu}(\beta, \sigma^{2})}\\
		&\leq\frac{1}{\Gamma((k+\upsilon_0)/2) \int_{B}
			\frac{e^{-\frac{1}{2}(\beta-\beta_0)^T\Sigma_{\beta}^{-1} (\beta-\beta_0)}}
			{(\upsilon_0 c_0^2/2 +	 (Y-X\beta)^T(Y-X\beta)/2)^{(k+\upsilon_0)/2}}
			\mathrm{d}(\beta)}\\
			&\hspace{2cm}\frac{16 \Gamma((k+\upsilon_0)/2-2)(2\pi)^{q/2}\det(\Sigma_{\beta})^{1/2}}{((Y-X\hat{\beta})^T(Y-X\hat{\beta}))^2 \left(\upsilon_0 c^2_0/2\right)^{(k+\upsilon_0)/2-2}e^{2}} \\
		&=\frac{1}{\int_{B}
			\frac{e^{-\frac{1}{2}(\beta-\beta_0)^T\Sigma_{\beta}^{-1} (\beta-\beta_0)}}
			{(\upsilon_0 c_0^2/2 +	 (Y-X\beta)^T(Y-X\beta)/2)^{(k+\upsilon_0)/2}}
			\mathrm{d}(\beta)}\\
			&\hspace{2cm}
			\frac{16 \Gamma((k+\upsilon_0)/2-2)(2\pi)^{q/2}\det(\Sigma_{\beta})^{1/2}}
			{\Gamma((k+\upsilon_0)/2)((Y-X\hat{\beta})^T(Y-X\hat{\beta}))^2 \left(\upsilon_0 c^2_0/2\right)^{(k+\upsilon_0)/2-2}e^{2}} 
	\end{align*}
So by Equation \ref{eqn:kupper} combined with Theorem \ref{thm:summary}, for $r\geq p$
\begin{align*}
	W_{d,p}(\mathcal{L}(X_n),\pi)
	\leq \left(\frac{C}{L}E[d(X_n,Y_n)^r]\right)^{1/r}
\end{align*}
Where $C=\frac{16 \Gamma((k+\upsilon_0)/2-2)(2\pi)^{q/2}\det(\Sigma_{\beta})^{1/2}}
{\Gamma((k+\upsilon_0)/2)((Y-X\hat{\beta})^T(Y-X\hat{\beta}))^2 \left(\upsilon_0 c^2_0/2\right)^{(k+\upsilon_0)/2-2}e^{2}}$ \\ and $L=\int_{B}
\frac{e^{-\frac{1}{2}(\beta-\beta_0)^T\Sigma_{\beta}^{-1} (\beta-\beta_0)}}
{(\upsilon_0 c_0^2/2 +	 (Y-X\beta)^T(Y-X\beta)/2)^{(k+\upsilon_0)/2}}
\mathrm{d}(\beta)$
\end{proof}

	\begin{remark}\label{rem:linreg}
	For Real Data Example \ref{numex:linreg}, we want to prove that $E[||(\beta_n, \sigma^2_n)-(\beta'_n, \sigma^{'2}_n)||_1]<\infty$ for $n\in\mathbb{N}$.

	Since $\sigma^2_0=100$ and $\sigma^{'2}_0 \sim IG((k+\upsilon_0)/2, \upsilon_0 c_0^2/2)$, $E[|\sigma^{2}_0|]<\infty$ and $E[|\sigma^{'2}_0|]<\infty$.
	
	Suppose that $E[|\sigma^2_n|]<\infty$. By Equation \ref{eqn:sigma_linereg}, $\sigma^{2}_n=k_{Z_n,IG_n}(\sigma^{2}_{n-1})$, so \\ $E[|\sigma^{2}_n|]= \allowbreak E[E[k_{Z_n,IG_n}(\sigma^{2}_{n-1})|\sigma^{2}_{n-1}]]<\infty$.
	
	Further, since $\beta_n \sim N_q(\tilde{\beta}_{\sigma^2_{n}}, V_{\beta, \sigma^2_{n}})$ and $\beta'_n \sim N_q(\tilde{\beta}_{\sigma^{'2}_{n}}, V_{\beta, \sigma^{'2}_{n}})$, $E[|\beta_n|]<\infty$ and $E[|\beta'_n|]<\infty$.

	Thus, we have established that $E[||(\sigma^{2}_n,\beta_n)||_1]<\infty$ and $E[||(\sigma^{'2}_n,\beta'_n)||_1]<\infty$. By the triangle inequality, $E[||(\sigma^{2}_n,\beta_n)-(\sigma^{'2}_n,\beta'_n)||_1]<\infty$.
	
%	Further, by equation \ref{eqn:sigma_linereg}, $\sigma^{2}_n=k_{Z_n,IG_n}(\sigma^{2}_{n-1})$, so $E[|\sigma^{2}_n|]=E[E[k_{Z_n,IG_n}(\sigma^{2}_{n-1})|\sigma^{2}_{n-1}]]<\infty$
%	\begin{equation}
%		\sigma^2_n|\sigma^2_{n-1}, Y, X = \left[\frac{\upsilon_0 c_0^2}{2}+\frac{(X\tilde{\beta}_{\sigma^2_{n-1}} -Y +X V_{\beta,\sigma^{2}_{n-1}}^{1/2}Z_n)^T(X\tilde{\beta}_{\sigma^2_{n-1}} -Y +X V_{\beta,\sigma^{2}_{n-1}}^{1/2}Z_n)}{2}\right]IG_n
%	\end{equation} 
%	$\beta_0 \sim N_q(\tilde{\beta}_{\sigma^2_{0}}, V_{\beta, \sigma^2_{0}})$
%	
%	By similar reasoning to Proof \ref{pf:steinKbound}, $E[|(\beta_n, \sigma^2_n)-(\beta'_n, \sigma^{'2}_n)|]<\infty$ so Theorem \ref{thm:summary} applies.
\end{remark}

\printbibliography
\end{document}